\title{Discretely Beyond $1/e$: Guided Combinatorial Algorithms for Submodular Maximization}
\date{}
\declaretheorem[style=definition,numberwithin=section]{theorem}
\declaretheorem[style=definition,sibling=theorem]{lemma}
\declaretheorem[style=definition,numberwithin=section]{definition}
\declaretheorem[style=definition,numberwithin=section]{corollary}
\newcommand{\ie}{\textit{i.e.}\xspace}
\newcommand{\eg}{\textit{e.g.}\xspace}
\newcommand{\st}{\textit{s.t.}\xspace}
\newcommand{\ex}[1]{ \mathbb{E} \left[ #1 \right] }
\newcommand{\exc}[2]{ \mathbb{E} \left[ #1 \, | \; #2 \right] }
\newcommand{\epsi}[0]{ \varepsilon }
\newcommand{\reals}[0]{ \mathbb{R}^+ }
\newcommand{\func}[2]{ #1 \left( #2 \right) }
\newcommand{\ff}[1]{ f \left( #1 \right) }
\newcommand{\brk}[1]{\left( #1 \right)}
\newcommand{\sbrk}[1]{\left[ #1 \right]}
\newcommand{\marge}[2]{\Delta{\left( #1 |  #2 \right) }}
\DeclareMathOperator*{\argmax}{arg\,max}
\newcommand{\uni}{\mathcal U}
\newcommand{\mtr}{\mathcal M}
\newcommand{\iset}{\mathcal I}
\newcommand{\note}[1]{\textcolor{black}{#1}}
\newcommand{\oh}[1]{\mathcal{O}\left( #1 \right)}
\newcommand{\ohs}[2]{\mathcal{O}_{ #1 }\left( #2 \right)}
\newcommand{\sg}{\textsc{StandardGreedy}\xspace}
\newcommand{\sm}{\textsc{SM}\xspace}
\newcommand{\opt}{\text{OPT}\xspace}
\newcommand{\add}{\textsc{ADD}\xspace}
\newcommand{\nmon}{\textsc{SMCC}\xspace}
\newcommand{\fls}{\textsc{FastLS}\xspace}
\newcommand{\grg}{\textsc{GuidedRG}\xspace}
\newcommand{\gig}{\textsc{GuidedIG-S}\xspace}
\newcommand{\gigm}{\textsc{GuidedIG-M}\xspace}
\newcommand{\ig}{\textsc{InterlaceGreedy}\xspace}
\newcommand{\itpg}{\textsc{InterpolatedGreedy}\xspace}
\newcommand{\tgig}{\textsc{ThreshGuidedIG}\xspace}
\newcommand{\rg}{\textsc{RandomGreedy}\xspace}
\newcommand{\prune}{\textsc{Prune}\xspace}
\newcommand{\lc}{\textsc{Linear\-Card}\xspace}
\newcommand{\tg}{\textsc{TwinGreedy}\xspace}
\author{
  Yixin Chen, Ankur Nath, Chunli Peng, Alan Kuhnle \\
  Department of Computer Science \& Engineering \\
  Texas A\&M University \\
  Colloge Station, TX\\
  \texttt{\{chen777, anath, chunli.peng, kuhnle\}@tamu.edu} \\
}
\begin{document}

\maketitle

\begin{abstract}%
  For constrained, not necessarily monotone submodular maximization, all known approximation algorithms with ratio greater than $1/e$ require continuous ideas, such as queries to the multilinear extension of a submodular function and its gradient, which are typically expensive to simulate with the original set function. For combinatorial algorithms, the best known approximation ratios for both size and matroid constraint are obtained by a simple randomized greedy algorithm of \citet{buchbinder2014submodular}: $1/e \approx 0.367$ for size constraint and $0.281$ for the matroid constraint in $\mathcal O (kn)$ queries, where $k$ is the rank of the matroid. In this work, we develop the first combinatorial algorithms to break the $1/e$ barrier: we obtain approximation ratio of $0.385$ in $\mathcal O (kn)$ queries to the submodular set function for size constraint, and $0.305$ for a general matroid constraint. These are achieved by guiding the randomized greedy algorithm with a fast local search algorithm. Further, we develop deterministic versions of these algorithms, maintaining the same ratio and asymptotic time complexity. Finally, we develop a deterministic, nearly linear time algorithm with ratio $0.377$.
\end{abstract}

% \begin{keywords}
%   Combinatorial algorithms; Deterministic algorithms; Submodular optimization
% \end{keywords}

\section{Introduction}
A nonnegative set function $f: 2^\uni \to \reals$ is \textit{submodular}
iff for all $S\subseteq T\subseteq \uni$, $x \in \uni \setminus T$, 
$\ff{S \cup \{x\}}-\ff{S} \ge \ff{T \cup \{x\}}-\ff{T}$;
and $f$ is monotone iff $\ff{S} \le \ff{T}$ for all $S\subseteq T\subseteq \uni$.
% Submodularity formalizes a notion of diminishing returns and is satisfied in
% many, varied applications \citep{DBLP:journals/mp/NemhauserWF78,DBLP:conf/aaai/MirzasoleimanJ018,DBLP:conf/nips/TschiatschekIWB14,DBLP:conf/cvpr/ElhamifarK17,DBLP:conf/icml/0001MFK21,DBLP:conf/nips/ElenbergDFK17,DBLP:journals/jmlr/KrauseSG08,DBLP:conf/uai/KrauseG05}.
Submodular optimization plays an important role in data science and machine learning~\citep{DBLP:journals/corr/abs-2202-00132},
particularly in tasks that involve selecting a representative subset of data or features.
Its diminishing returns property makes it ideal for scenarios where the
incremental benefit of adding an element to a set decreases as the set grows.
Applications include sensor placement for environmental monitoring~\citep{DBLP:journals/jmlr/KrauseSG08,DBLP:conf/fusion/PowersBKA16}, where the goal is to maximize coverage with limited sensors, feature selection~\citep{DBLP:conf/icassp/LiuWKSB13,DBLP:conf/aistats/KhannaEDNG17,DBLP:conf/kdd/BaoHZ22} in machine learning to improve model performance and reduce overfitting, and data summarization~\citep{DBLP:conf/aaai/MirzasoleimanJ018,DBLP:conf/nips/TschiatschekIWB14} for creating concise and informative summaries of large datasets. Further, many of these applications employ
submodular objective functions that are non-monotone, \eg{} \citet{DBLP:conf/aaai/MirzasoleimanJ018,DBLP:conf/nips/TschiatschekIWB14}. 
Formally, we study the  optimization problem (\sm):
$\max\ff{S}, \st \, S \in \iset,$
where $f$ is nonnegative, submodular and not necessarily monotone;
and $\iset \subseteq 2^{\uni}$ is a family of feasible subsets.
Specifically, we consider two cases: when $\iset$ is a size constraint (all
sets of size at most $k$); and more generally, when $\iset$ is an arbitrary
matroid of rank $k$. 

In this field, algorithms typically assume access to a \textit{value oracle} for the
submodular function $f$, and the efficiency of an algorithm is measured by the
number of queries to the oracle,
because evaluation of the submodular function is typically expensive
and dominates other parts of the computation.
In the general, not necessarily monotone case, the approximability of constrained
submodular optimization in the value oracle model is not well understood.
For several years, $1/e \approx 0.367$ was conjectured to be the best ratio,
as this ratio is obtained by the measured continuous greedy
\citep{DBLP:conf/focs/FeldmanNS11} 
algorithm
that also gets the $1 - 1/e$ ratio in the monotone setting, which is known to
be optimal \citep{DBLP:journals/mor/NemhauserW78}. 
However, in several landmark works, the $1/e$ barrier was broken:
first to $0.371$ by \citet{buchbinder2014submodular}
(for size constraint only)
and subsequently to $0.372$
by \citet{DBLP:conf/focs/EneN16}, then $0.385$ by \citet{buchbinder2019constrained}.
Very recently, the best known approximation factor has been improved to
0.401 \citep{buchbinder2023constrained}.
On the other hand, the best hardness result
is $0.478$ \citep{DBLP:conf/soda/GharanV11,DBLP:conf/isaac/Qi22}. 

All of the algorithms improving on the $1/e$ ratio
use oracle queries to the \textit{multilinear extension} of a submodular function
and its gradient. The multilinear extension relaxes the submodular set function
to allow choosing an element with probability in $[0,1]$. Although this is a powerful
technique,
the multilinear extension must be approximated by
polynomially many random samples of the original
set function oracle.
Unfortunately, this leads to a high query complexity for these algorithms,
which we term \textit{continuous algorithms}; typically, the query complexity
to the original submodular function is left uncomputed.
As an illustration, we compute
in Appendix~\ref{apx:0.385-time}
that the continuous algorithm of \citet{buchbinder2019constrained}
achieves ratio of $0.385$ with query complexity of $\oh{n^{11}\log(n)}$ to the set
function oracle. Consequently, these algorithms are of mostly
theoretical interest --
the time cost of running on tiny instances (say, $n < 100$)
is already prohibitive, as demonstrated by \citet{DBLP:journals/jair/ChenK24} where
a continuous algorithm required more than $10^9$ queries to the set
function on an instance with $n=87, k=10$.

% including max-cut~\citep{DBLP:journals/mp/NemhauserWF78}, 
%video summarization~\citep{DBLP:conf/aaai/MirzasoleimanJ018},
% and image summarization~\citep{DBLP:conf/nips/TschiatschekIWB14}.
%reference recommendation~\citep{DBLP:conf/pacling/KieuPPP19}.
% Since \sm is NP-hard even in the unconstrained setting,
% much of the research in this field focuses on devising approximation algorithms
% with enhanced approximation ratio or reduced query complexity 
% (\ie the number of queries made to the value oracle).
% For monotone submodular maximization problem,
% the simple greedy algorithm proposed in~\citet{DBLP:journals/mp/NemhauserWF78}
% achieves a $(1-1/e)$ approximation ratio under size constraint, 
% which has shown to be the optimal ratio~\citep{DBLP:journals/mor/NemhauserW78}
% for this problem.
% However, the situation is considerably more complicated if $f$ is allowed to be non-monotone.
% In the general setting,  have demonstrated that, in the value oracle model, no algorithm can achieve better than a $0.478$-approximation for both size and matroid constraints in polynomially queries.
% Numerous efforts have been devoted to improving the approximation ratio~\citep{DBLP:conf/stoc/LeeMNS09,buchbinder2014submodular,buchbinder2017comparing,buchbinder2019constrained,buchbinder2023constrained}.
% Very recently, the best known approximation ratio has been improved to $0.401$ by \citet{buchbinder2023constrained}.

\begin{table}[t]
  \centering \small
      \caption{The prior state-of-the-art and the ratios achieved in this paper,
        in each category: deterministic (det), randomized combinatorial (cmb), and continuous (cts).}
    \begin{tabular}{ccccc}
        \toprule
        Constraint & Reference & Query & Ratio & Type \\%& Randomized?\\
      \midrule
               \multirow{6}{*}{Size}    & \citet{buchbinder2018deterministic}  & $\oh{k^3n}$ & $ 1/e \approx 0.367$ & Det  \\ %& \xmark\\
                   & \citet{buchbinder2014submodular} & $\oh{kn}$ & $ 1/e$ & Cmb \\
                   & \citet{buchbinder2023constrained}  & poly$(n)$ & $0.401$ & Cts \\ %& \cmark\\
        \cmidrule(rl){2-5}
        &Algorithm~\ref{alg:irg} & $\oh{kn/\epsi}$ & $0.385-\epsi$ & Cmb \\ % & \cmark\\
        & Algorithm~\ref{alg:determ2}  & $\oh{kn\brk{\frac{10}{9\epsi}}^{\frac{20}{9\epsi}-1}}$ & $0.385-\epsi$ & Det \\ %& \xmark\\
        & Algorithm~\ref{alg:dgig-2}  & 
        $\oh{\log(k)n\brk{\frac{10}{3\epsi}}^{\frac{20}{3\epsi}}\brk{\frac{5}{\epsi}}^{\frac{10}{\epsi}-1}}$ 
        % $\oh{n\log(k)}\oh{\frac{1}{\epsi}}^{\oh{\frac{1}{\epsi}}}$
        & $0.377-\epsi$ & Det \\ % & \xmark\\
        \midrule
      \multirow{5}{*}{Matroid} & \citet{sun2022improved}  & $\oh{k^2n^2}$ & $0.283-\oh{\frac{1}{k^2}}$ & Det \\ %& \xmark\\
                   & {\citet{buchbinder2014submodular}}  & $\oh{kn}$ & $0.283-\epsi$ & Cmb \\ %& \cmark\\
                   %& \citet{buchbinder2019constrained}$\dagger$ & $\oh{n^{11} \log(n)}$ & $0.385$ \\ %& \cmark\\
                   & \citet{buchbinder2023constrained}  & poly$(n)$ & $0.401$ & Cts \\ %& \cmark\\
                   
                   %& \multirow{2}{*}{\citet{han2020deterministic}} & $\oh{nk}$ & $0.25$ \\ %& \xmark\\
                   %&& $\oh{\frac{n}{\epsi}\log\brk{\frac{n}{\epsi}}}$ & $0.25-\epsi$ \\ %& \xmark\\
        \cmidrule(rl){2-5}
        &Algorithm~\ref{alg:irg}  & $\oh{kn/\epsi}$ & $0.305-\epsi$ & Cmb \\ %& \cmark\\
        & Algorithm~\ref{alg:determ2}  & $\oh{kn\brk{\frac{10}{9\epsi}}^{\frac{20}{9\epsi}-1}}$ & $0.305-\epsi$ & Det \\ %& \xmark\\
        \bottomrule
    \end{tabular}
    \label{tab:random}
\end{table}

For size and matroid constraints, the
current state-of-the-art approximation ratio for a combinatorial algorithm (\ie not continuous)
is obtained by the \rg algorithm (Algorithm \ref{alg:rg}) of \citet{buchbinder2014submodular}.
\rg achieves ratio $1/e \approx 0.367$ for size constraint,
\begin{wrapfigure}{l}{0.45\textwidth} \vspace{-0.1cm}
\begin{minipage}{0.45\textwidth}
\begin{algorithm}[H]\caption{\citet{buchbinder2014submodular}}\label{alg:rg}
    \Proc{\rg$(f, k)$}{
     \textbf{Input:} oracle $f$, size constraint $k$\;
     \textbf{Initialize:} $A_0\gets \emptyset$\;
     \For{$i \gets 1$ to $ k$}{
        $M_i\gets \argmax_{S\subseteq \uni, |S| = k}\sum_{x\in S}\marge{x}{A_{i-1}}$\;
        $x_i \gets$ a uniformly random element from $M_i$\;
        $A_i\gets A_{i-1}+x_i$\;
     }
     \textbf{return} $A_k$ \;}
 \end{algorithm}
\end{minipage} \vspace{-1cm}
\end{wrapfigure}
and $0.283-\epsi$ ratio for matroid constraint; its
query complexity is $\oh{kn}$.
Thus, there is no known combinatorial algorithm that
breaks the $1/e$ barrier; and therefore, no such algorithm
is available to be used in practice on any of the applications
of \sm{} described above. 

Moreover, closing the gap between
ratios achieved by deterministic and randomized algorithms for \sm has been the
focus of a number of recent works \citep{buchbinder2018deterministic,han2020deterministic,chen2023approximation,DBLP:journals/siamcomp/BuchbinderFG23}.
In addition to theoretical interest, deterministic algorithms
are desirable in practice, as a ratio that holds in expectation
may fail on any given run with constant probability. 
\citet{buchbinder2018deterministic} introduced a
% derandomization technique which adopts
linear programming method to derandomize the
\rg algorithm (at the expense of additional time complexity),
meaning that the best known ratios for deterministic algorithms are again given by \rg. 
% \note{Although, for some applications like maximum cover or maximum cut \citep{DBLP:conf/sdm/OzcanMI21,DBLP:journals/jair/ChenK24},
% there exists closed-form expressions for both multilinear extension and its gradient.}
There is no known method to derandomize continuous algorithms, 
% \note{(we noticed that \citet{DBLP:conf/sdm/OzcanMI21} proposed )}
as the only known way to approximate the multilinear extension \note{of a general submodular set function} relies on random sampling methods.
\note{\citet{DBLP:conf/sdm/OzcanMI21}, however, introduced a deterministic estimation via Taylor series approximation,
but this approach is limited to a specific class of submodular functions that can be 
expressed as weighted compositions of analytic and multilinear functions.}
Therefore, there is no known deterministic algorithm that
breaks the $1/e$ barrier.
The best known ratio in each category of continuous, combinatorial, and deterministic algorithms
is summarized in Table \ref{tab:random}.
In this work, we consider the following questions:

\begin{center} 
  \textit{Can combinatorial algorithms, and separately, deterministic algorithms,  obtain approximation ratios for \sm beyond $1/e$? If so, are the resulting algorithms practical and do they yield empirical improvements in objective value over existing algorithms?}
\end{center}
\subsection{Contributions} \label{sec:contribution}
In this work, we improve the best known ratio
for a combinatorial algorithm for size-constrained
\sm to $0.385 - \epsi \approx 1/e + 0.018$. This is achieved by using the result of
a novel local search algorithm to guide the \rg{} algorithm.
Overall, we obtain query complexity of $\oh{kn / \epsi}$, which is at worst
quadratic in the size of the ground set, since $k \le n$. 
Thus, this algorithm is practical and can run on moderate
instance sizes; the first algorithm with ratio beyond $1/e$
for which this is possible.
Further, we extend this algorithm to the matroid constraint,
where it improves the best known ratio of a combinatorial algorithm
for a general matroid constraint from $0.283$ of \rg{} to $0.305 - \epsi$.
%with the same $\oh{kn / \epsi}$ query complexity. 

Secondly, we obtain these same approximation ratios with deterministic algorithms. 
The ideas are similar to the randomized case, except we leverage a recently
formulated algorithm \itpg{}
\citep{chen2023approximation}
as a replacement for guided \rg{}. The analysis of
\itpg{} has similar recurrences (up to low order terms)
and the algorithm can be guided in a similar fashion to \rg{},
but is amenable to derandomization.
The derandomization only adds a constant factor, albeit one that is
exponential in $(1/\epsi)$. 

Next, we seek to lower the query complexity further, while still
improving the $1/e$ ratio.
As \itpg{} can be sped up to $\mathcal O_\epsi ( n \log k )$,
the bottleneck becomes the local search procedure. Thus, we develop a faster
way to produce the guiding set $Z$ by exploiting a run of (unguided) \itpg{}
and demonstrating that a decent guiding set is produced if the algorithm exhibits
nearly worst-case behavior. With this method, we achieve a deterministic algorithm
with ratio $0.377 \approx 1/e + 0.01$ in $\mathcal O_\epsi ( n \log k )$ query complexity,
which is nearly linear in the size of the ground set (since $k = O(n)$).

Finally, we demonstrate the practical utility of our combinatorial $0.385$-approximation algorithm 
by implementing it and evaluating in the context of two applications of size-constrained
\sm{} on moderate instance sizes (up to $n = 10^4$). We evaluate it with parameters
set to enforce a ratio $> 1/e$.
It outperforms both the standard greedy algorithm
and \rg{} by a significant margin in terms of objective
value; moreover, it uses about twice the queries of \rg and is
orders of magnitude faster than existing local search algorithms. 

\subsection{Additional Related Work}
\textbf{Derandomization.}
\citet{buchbinder2018deterministic} introduced a
linear programming (LP) method to derandomize the
\rg algorithm, thereby obtaining ratio $1/e$ with
a deterministic algorithm. 
Further,
\citet{sun2022improved} were able to apply this technique
to \rg for matroids.
A disadvantage of this approach is an increase in
the query complexity over the original randomized algorithm.
Moreover, we attempted to use this method
to derandomize our guided \rg algorithm, but were unsuccessful.
Instead, we obtained our deterministic algorithms by
guiding the \itpg algorithm instead of \rg; this algorithm
is easier to derandomize, notably without increasing the
asymptotic query complexity. 

\textbf{Relationship to \citet{buchbinder2019constrained}.}
The continuous, $0.385$-approximation algorithm of
\citet{buchbinder2019constrained}
guides the measured continuous greedy algorithm
using the output of a continuous local search algorithm,
in analogous fashion to how we guide \rg{} with the output
of a combinatorial local search. However, the analysis of \rg{} is much different from
that of measured continuous greedy, although the resulting
approximation factor is the same. Specifically,
\citet{buchbinder2019constrained} obtain their ratio
by optimizing a linear program mixing the continous local search
and guided measured continous greedy; in contrast, we
use submodularity and the output of our fast local search
to formulate new recurrences for guided \rg{}, which we then solve. 

\textbf{Local search algorithms.} 
Local search is a technique widely used in combinatorial optimization.
\citet{DBLP:journals/mp/NemhauserWF78}
introduced a local search algorithm
for monotone functions under size constraint; they
showed a ratio of $1/2$, but noted that their algorithm
may run in exponential time. 
Subsequently, local search has been found to be useful, especially for non-monotone functions.
\citet{DBLP:journals/siamcomp/FeigeMV11} proposed
a $1/3$ approximation algorithm with $\oh{n^4/\epsi}$ queries
for the unconstrained submodular
maximization problem utilizing local search.
Meanwhile, \citet{DBLP:conf/stoc/LeeMNS09}
proposed a local search algorithm for general \sm with matroid constraint,
attaining $1/4-\epsi$ approximation ratio with a query complexity of $\oh{k^5\log(k)n/\epsi}$.
We propose our own \fls in Section~\ref{sec:fls},
yielding a ratio of $1/2$ for monotone cases
and $1/4$ for non-monotone cases through repeated applications of \fls,
while running in $\oh{kn/\epsi}$ queries.
% As local search is a commonly used heuristic, \fls
% may be of independent interest, as it obtains the same ratio as prior local
% search algorithms with much smaller query complexity. 

\textbf{Fast approximation algorithms.}
\citet{buchbinder2017comparing} developed a faster version
of \rg for size constraint that reduces the query complexity
to $\ohs{\epsi}{n}$ with ratio of $1/e - \epsi$. 
\citet{chen2023approximation} proposed \lc, the first deterministic, linear-time
algorithm with an $1/11.657$-approximation ratio for size constraints.
Also, \citet{han2020deterministic} introduced \tg, a $0.25$-approximation
algorithm with a query complexity of $\oh{kn}$ for matroid constraints.
These algorithms are fast enough to be used as building blocks for our \fls,
which requires as an input a constant-factor approximation in $\oh{kn}$ queries.
% To obtain the query complexity of Alg.~\ref{alg:irg} listed in Table~\ref{tab:random},
% \lc and \tg are employed to get a constant approximation result for \fls.
% In addition, we consider the question of improving
% the best known ratio in nearly linear time (\ie linear up to
% polylogarithmic factors in the size of the found set);
% such fast algorithms are especially important for modern data sizes.
% In this context as well, we advance the state-of-the-art ratio
% from $1/e$ of \citet{buchbinder2017comparing} to $\approx 0.377$. 

\note{
\textbf{Relationship to \citet{tukan2024practical}.}
During the submission of this paper, we noticed an independent and parallel work by 
\citet{tukan2024practical}, which proposed a different $0.385$-approximation algorithm.
Both papers start from a similar idea-guiding the random greedy algorithm 
with a fast algorithm to find a local optimum.
However, \citet{tukan2024practical} only considered size constraint and focused on algorithm speedup.
They introduced a randomized local search algorithm and used its output to guide the stochastic greedy of \citet{buchbinder2014submodular}, 
achieving a query complexity of $\ohs{\epsi}{n+k^2}$.
On the other hand, we 1) address a more general constraint-matroid constraint;
2) for size constraint, present an asymptotically faster algorithm 
that uses a novel way of guiding with partial solutions from random greedy itself, 
which are not local optima, thereby achieving ratio $0.377-\epsi$ with $\ohs{\epsi}{n\log (k)}$ queries;
and 3) derandomize these algorithms.}

\subsection{Preliminaries}\label{sec:prelim}
\textbf{Notation.}
In this section, we establish the notations employed throughout the paper.
We denote the marginal gain of adding $A$ to $B$
by $\marge{A}{B} = \ff{A\cup B}- \ff{B}$.
For every set $S \subseteq \uni$ and an element $x \in \uni$,
we denote $S\cup \{x\}$ by $S+x$, and $S \backslash \{x\}$ by $S-x$.
Given a constraint and its related feasible sets $\iset$,
let $O \in \argmax_{S\in \iset} \ff{S}$; that is, $O$ is
an optimal solution. 
To simplify the pseudocode and the analysis,
we add $k$ \textit{dummy elements} into the ground set, 
where the dummy element serves 
as a null element with zero marginal gain when added to any set.
The symbol $e_0$ is utilized to represent a dummy element.

\textbf{Submodularity.}
A set function $f:2^{\uni} \to \reals$ is submodular,
if $\marge{x}{S} \ge \marge{x}{T}$ for all $S\subseteq T \subseteq \uni$
and $x\in \uni \setminus T$,
or equivalently, for all $ A, B \subseteq \uni$,
it holds that $\ff{A}+\ff{B} \ge \ff{A\cup B} + \ff{A\cap B}$.

\textbf{Constraints.}
In this paper, our focus lies on two constraints: 
size constraint and matroid constraint.
For size constraint, we define the feasible subsets as 
$\iset(k) = \{S\subseteq \uni: |S| \le k\}$, where $k$
is an input parameter. 
The matroid constraint is defined in Appendix \ref{apx:prelim}.
% \begin{definition}
% A matroid $\mtr$ is a pair $(\uni, \iset)$,
% where $\uni$ is the ground set and $\iset$ is the independent sets
% with the following properties:
% (1) $\emptyset \in \iset$; 
% (2) hereditary property: $A \in \iset \Rightarrow B \in \iset, \forall B \subseteq A$;
% (3) exchange property: $A, B \in \iset, |A| > |B| \Rightarrow \exists x \in A\setminus B, \st B+x \in \iset$.
% \end{definition}
% Specifically, we use $\iset(\mtr)$ to represent the independent sets of matroid $\mtr$.
% A maximal independent set in $\iset(\mtr)$ is called a basis.
% Let $k$ be the size of the maximal independent set.
% Following is a useful lemma for matroid.
% \begin{lemma}\label{lemma:mtr-bij}
%     (\citet{brualdi1969comments})
%     If $B_1$ and $B_2$ are finite bases,
%     then there exists a bijection 
%     $\sigma:B_1\setminus B_2\to B_2\setminus B_1$
%     such that $B_2 +e -\sigma(e)$ is a basis for all $e \in B_1\setminus B_2$
% \end{lemma}

\textbf{Organization.}
Our randomized algorithms are described in Section~\ref{sec:0.385},
with two subroutines, \fls and \grg, in Section~\ref{sec:fls} and~\ref{sec:grg}, respectively.
Due to space constraints, we provide only a sketch of the analysis for
size constraint in the main text. The full pseudocodes and formal proofs for
both size and matroid constraint are provided in Appendix \ref{apx:irg}.
Then, we briefly sketch the deterministic approximation algorithms 
in Section~\ref{sec:determ}, with full details provided in Appendix \ref{apx:determ-full}.
%with its deterministic subroutines
%\gigm in Appendix~\ref{apx:ig},
%and \gig in Appendix~\ref{apx:ig-size}.
Next, we introduce the nearly linear-time deterministic algorithm
in Section~\ref{sec:0.377}, with omitted analysis provided in Appendix \ref{apx:0.377}.
Our empirical evaluation is summarized in Section~\ref{sec:exp}.
In Section \ref{sec:limitations}, we discuss limitations and future directions.

%%% Local Variables:
%%% mode: latex
%%% TeX-master: "main.tex"
%%% End:

\section{A Randomized $(0.385 - \epsi)$-approximation in $\oh{kn/\epsi}$ Queries} \label{sec:0.385} 
\begin{algorithm}[t]\label{alg:irg}
   \caption{Randomized combinatorial approximation algorithm.}
     % \Proc{}{
     \textbf{Input:} Instance $(f, \iset)$, a constant-factor approximation $Z_0$, switch time $t \in [0,1]$, accuracy $\epsi > 0$\;
     % \uIf{$\iset$ is the feasible sets for a size constraint}{
     %    \textbf{Initialize} $t\gets 0.372$\;
     % }
     % \uElseIf{$\iset$ is the feasible sets for a size constraint}{
     %    \textbf{Initialize} $t\gets 0.559$\;
     % }
     $Z \gets \fls(f, \iset, Z_0, \epsi)$
     \tcc*[r]{find local optimum $Z$}
     $A \gets \grg(f, \iset, Z, t)$
     \tcc*[r]{guided by local optimum $Z$}
     \textbf{return} $\argmax\{\ff{Z}, \ff{A}\}$\;
     % }
\end{algorithm}
In this section, we present our randomized approximation algorithm (Alg.~\ref{alg:irg}) for both
size and matroid constraints.
This algorithm improves the state-of-the-art, combinatorial approximation ratio
to $0.385-\epsi \approx 1/e + 0.018$ for size constraint,
and to $0.305-\epsi \approx 0.283 + 0.022$
for matroid constraint.

\textbf{Algorithm Overview.}
\note{In overview, Alg. \ref{alg:irg} consists of two components, 
which are detailed below. 
The first component is a local search algorithm, \fls (Alg.~\ref{alg:fls} in Appendix~\ref{apx:fls}), 
described in detail in Section \ref{sec:fls}. 
In brief,}
% In overview, Alg. \ref{alg:irg} consists of two components, a (fast) local search
% algorithm \fls (Alg.~\ref{alg:fls}) and a guided version of \rg{}: \grg (Alg.~\ref{alg:grg}). 
the local search algorithm
takes an accuracy parameter $\epsi > 0$ and
a constant-factor, approximate solution $Z_0$ as input, which may be
produced by any approximation algorithm with better than $\oh{kn}$ query complexity.
\note{The second component is a random greedy algorithm,
\grg (Alg.~\ref{alg:grg} in Appendix~\ref{apx:grg}), that is guided by the output $Z$ of the local search, described in detail in Section~\ref{sec:grg}.}
% The output $Z$ of the local search is 
% passed to \grg for use as a guiding set, discussed further below.
Also, \grg takes a parameter $t \in [0,1]$, which is the switching time (as fraction
of the budget or rank $k$)
from guided to unguided behavior. The candidate with best $f$ value from the two subroutines is returned.

If $f(Z) < \alpha \opt$ (otherwise, there is nothing to show),
then the local search
set satisfies our definition of
$((1+\epsi)\alpha,\alpha)$-guidance set (Def.~\ref{def:guide} below).
Under this guidance, we show that \grg
produces a superior solution compared to its unguided counterpart.
% The parameter $t \in [0,1]$ is the switching time; \grg avoids $Z$ (meaning it
% functions as though $Z$ is not a part of the ground set) until
% it has added $tk$ elements; thereafter, \grg switches to operate as unguided
% \rg; that is, with the whole ground set.
The two components, \fls and \grg are described in Sections~\ref{sec:fls}
and~\ref{sec:grg}, respectively. The following theorem is proven in Section \ref{sec:grg} (size constraint)
and Appendix \ref{apx:mc} (matroid constraint). 
% The complete proof of Theorem~\ref{thm:irg} can be found in Appendix~\ref{apx:0.385}.
\begin{theorem}\label{thm:irg}
Let $(f,\iset)$ be an instance of \sm. %, with optimal solution set $O$,
Let $\epsi > 0$, and $k \ge 1/\epsi$.
Algorithm~\ref{alg:irg} achieves an expected $(0.385-\epsi)$-approximation ratio for size constraint with $t=0.372$, and an expected $(0.305-\epsi)$-approximation ratio for matroid constraint with $t=0.559$.
The query complexity of the algorithm is $\oh{kn/\epsi}$.
\end{theorem}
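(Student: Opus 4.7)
The plan is to prove Theorem~\ref{thm:irg} by a dichotomy on the value of the local-search output $Z$. Since Alg.~\ref{alg:irg} returns $\argmax\{\ff{Z}, \ff{A}\}$, it suffices to show that at least one of the two sets attains the claimed ratio. Let $\alpha = 0.385 - \epsi$ for size constraint (resp.\ $\alpha = 0.305 - \epsi$ for matroid). If $\ff{Z} \ge \alpha\opt$, we are done; the substantive work is to show that whenever $\ff{Z} < \alpha\opt$, the guided run $A = \grg(f,\iset,Z,t)$ satisfies $\exc{\ff{A}}{Z} \ge \alpha\opt$.

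The second step is to exploit the local-optimality of $Z$. By the guarantees of \fls{} (Section~\ref{sec:fls}), $Z$ satisfies an approximate local-optimality condition; combined with the hypothesis $\ff{Z} < \alpha\opt$, this yields the $((1+\epsi)\alpha, \alpha)$-guidance property of Def.~\ref{def:guide}, which in turn controls quantities such as $\ff{Z\cup O}$ and, more importantly, $\ff{Z \cup S}$ for arbitrary random sets $S$ arising during \grg{}. This coupling between the unknown optimum $O$ and the known set $Z$ is what makes guided greedy beat $1/e$.

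The third and most technical step is the recurrence analysis for \grg{}. Letting $A_i$ denote the iterate after $i$ rounds, I would separately analyze the guided phase $i \le tk$ and the unguided phase $tk < i \le k$. In the guided phase, elements are sampled from a biased pool that includes $Z$, and I would combine submodularity with the guidance inequality to derive lower bounds on $\exc{\marge{x_i}{A_{i-1}}}{A_{i-1}}$ in terms of a convex combination of $\ff{A_{i-1}}$, $\ff{Z}$, and $\opt$. These translate into a first-order recurrence for $\ex{\ff{A_i \cup O)}}$ (and an auxiliary quantity involving $Z$) whose solution at $i = tk$ feeds into the standard \rg{} recurrence of \citet{buchbinder2014submodular} for the unguided phase. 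Solving the combined recurrence and optimizing over the switch time $t$ produces the specific numerical thresholds $t = 0.372$ and $t = 0.559$, chosen precisely so that the two branches of the dichotomy--\,``$\ff{Z}$ already large'' versus ``\grg{} beats $\alpha\opt$''\,--deliver equal guarantees. The query complexity follows easily: \fls{} uses $\oh{kn/\epsi}$ queries by design (Section~\ref{sec:fls}) and \grg{} runs $k$ greedy rounds of cost $\oh{n}$ each, so the total is $\oh{kn/\epsi}$.

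The main obstacle is the recurrence derivation in the guided phase: accounting for the biased sampling rule, correctly relating $\ex{\ff{A_i \cup O \cup Z)}}$ to $\ex{\ff{A_i \cup O)}}$ under submodularity, and then choosing $t$ so the guided and unguided bounds patch together to give the exact $0.385$ (resp.\ $0.305$) crossover. The matroid case adds a further complication because the greedy step must select a basis of a contracted matroid rather than a single element, so the pool-exchange argument and the probability that any $o \in O$ enters $A_i$ must be carried out using matroid exchange properties rather than the simpler cardinality uniform bound.
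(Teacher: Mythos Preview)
Your high-level structure is correct: dichotomy on $\ff{Z}$, derive the $((1+\epsi)\alpha,\alpha)$-guidance property from the \fls{} guarantee when $\ff{Z} < \alpha\opt$, then a two-phase recurrence analysis of \grg{} with the switch at $tk$, and the query-complexity argument is fine. However, the central mechanism of the guided phase is stated backwards, and this inversion would prevent the recurrences from closing.

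You write that in the guided phase ``elements are sampled from a biased pool that \emph{includes} $Z$'' and that one must relate $\ex{\ff{A_i \cup O \cup Z}}$ to $\ex{\ff{A_i \cup O}}$. In the paper the guided phase does the opposite: for $i \le tk$ the candidate set $M_i$ is chosen from $\uni \setminus (A_{i-1}\cup Z)$, so $A_i \cap Z = \emptyset$ throughout the first phase. This disjointness is exactly what powers the improved degradation bound: since $(O\cup Z)\cap(O\cup A_{i-1}\cup M_i)=O$, submodularity gives $\ff{O\cup A_{i-1}\cup M_i}\ge \ff{O}-\ff{O\cup Z}$, and hence $\ex{\ff{O\cup A_i}}\ge (1-\tfrac{1}{k})\ex{\ff{O\cup A_{i-1}}}+\tfrac{1}{k}(\ff{O}-\ff{O\cup Z})$. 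The auxiliary quantity one actually tracks in the guided phase is $\ex{\ff{(O\setminus Z)\cup A_i}}$ (not $\ex{\ff{A_i\cup O\cup Z}}$), because the greedy-gain recurrence becomes $\ex{\ff{A_i}-\ff{A_{i-1}}}\ge \tfrac{1}{k}\ex{\ff{(O\setminus Z)\cup A_{i-1}}-\ff{A_{i-1}}}$: avoiding $Z$ means the comparison set for greedy is $O\setminus Z$, not $O$. The guidance inequalities $\ff{O\cap Z}\le(1+\epsi)\alpha\opt$ and $\ff{O\cup Z}+\ff{O\cap Z}\le(2+\epsi)\alpha\opt$ then enter only at the end, when the closed-form solutions of the two coupled recurrences are combined and $\ff{O\setminus Z}\ge\ff{O}-\ff{O\cap Z}$ is applied.

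For the matroid case the same avoidance-of-$Z$ idea applies; the difference is that each step swaps one element of $A_{i-1}$ for one element of a basis $M_i$ via the Brualdi bijection, which introduces an extra $-\tfrac{1}{k}\ff{A_{i-1}}$ loss in the gain recurrence and changes the degradation factor to $(1-\tfrac{2}{k})$. It is not a ``basis of a contracted matroid'' selection; $M_i$ is a full basis of the original matroid maximizing total marginal gain, and a single random element is exchanged.
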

% \textbf{Proof Overview.}
% Let $\alpha$ be the approximation ratio for size constraint or matroid constraint.
% The main idea is at least one of the solutions returned by
% \fls and \grg achieves this ratio.
% In cases where \fls fails,
% the set $Z$ returned by \fls 
% acts as an $(\alpha,\alpha)$-guidance set.
% By guiding \grg with $Z$,
% an improved lower bound on the objective value
% is achieved compared to the unguided counterpart, \rg.
% \begin{algorithm}[h]
%    \caption{Randomized Algorithm with $(0.305-\epsi)$-Approximation Ratio under Matroid Constraint}\label{alg:irg-m}
%      \Proc{}{
%      \textbf{Input:} oracle $f$, matroid constraint $\mtr$, an approximation result $A_0$, error rate $\epsi$\;
%      \textbf{Initialize} $t \gets 0.559$\;
%      $Z \gets \fls(f, \mtr, A_0, \epsi)$\;
%      $A_k \gets \grg(f, \mtr, Z, t)$\;
%      \textbf{return} $\argmax \{\ff{Z}, \ff{A_k}\}$}
% \end{algorithm}
% \begin{theorem}\label{thm:irg-m}
% Let $(f,\mtr)$ be an instance of submodular maximization under matroid constraint, with optimal solution set $O$.
% Algorithm~\ref{alg:irg-m} returns a solution $A_k$ with $\oh{nk^2/\epsi}$ queries and 
% $(0.305-\epsi)$-approximation ratio.
% \end{theorem}

\subsection{The Fast Local Search Algorithm}\label{sec:fls}
In this section, we introduce \fls (Alg. \ref{alg:fls}),
which is the same for size or matroid constraints. 
%We provide the properties, obtained by Lemma~\ref{lemma:fls}, of the guidance set returned by \fls as follows.
% \begin{corollary}\label{cor:fls}
% Let $\epsi > 0$, $(f,\iset(\mtr))$ be an instance of \sm,
% and $O$ is the optimum solution.
% \fls (Alg.~\ref{alg:fls}) returns a solution $Z$
% with $\oh{nk\log(1/\alpha)/\epsi}$ such that:
% 1) $\ff{O\cap Z} < (1+\epsi)\ff{Z}$;
% 2) $\ff{O\cap Z}+\ff{O\cup Z} < (2+\epsi)\ff{Z}$.
% \end{corollary}
% In the literature,~\citet{DBLP:conf/stoc/LeeMNS09}
% proposed the first local search algorithm for optimizing non-monotone functions 
% subject to the matroid constraint.
% Their local search procedure initializes the solution with the maximum singleton, and
% iteratively increases the
% objective value by a factor of $\left(1+\frac{\epsi}{n^4}\right)$
% until no further improvement is possible.
% However, this approach involves $\oh{nk^5\log(k)}$ queries, 
% which can be impractical for large-scale applications.
% \fls adheres to the design principles of local search algorithms, 
% incorporating additional modifications to accelerate it.
There are several innovations in \fls that
result in $\oh{ kn / \epsi }$ time complexity, where $k$ is the maximum size
of a feasible set, and $\epsi > 0$ is an input accuracy parameter.

In overview, the algorithm maintains
a feasible set $Z$; initially, $Z = Z_0$,
where $Z_0$ is an input set which is a constant
approximation to \opt.
The value of $Z$ is iteratively improved via
swapping, which is done in the following way.
For each element $a \in \uni$,
we compute $\marge{a}{Z \setminus a}$; if $a \not \in Z$,
this is just the gain of $a$ to $Z$; this requires
$\oh{n}$ queries.
Then, if $a \in Z$ and $e \not \in Z$ such that
$Z \setminus a + e$ is feasible, and
$\marge{e}{Z} - \marge{a}{Z \setminus a} \ge \frac{\epsi}{k}f(Z)$,
then $a$ is swapped in favor of $e$. If no such swap exists,
the algorithm terminates. 

One can show that, for each swap, the value of $Z$ increases
by at least a multiplicative $(1 + \epsi / k)$ factor.
Since $f(Z)$ is initialized to a constant fraction of \opt,
it follows that we make at most $\mathcal O( k / \epsi )$
swaps. Since each swap requires $\oh{n}$ queries,
this yields the query complexity of the algorithm:
$\oh{ kn / \epsi}$.
In addition, if $f$ is monotone, \fls{}
gets ratio of nearly $1/2$
for \fls. A second repetition of \fls yields a ratio
of $1/4$ in the case of general (non-monotone) $f$,
as shown in Appendix~\ref{apx:fls-ratio}.
Thus, \fls may be of independent interest, as local search
obtains good objective values empirically and is commonly used in applications.

For our purposes, we want to use the output of \fls to guide
\rg. Since we will also use another algorithm for a similar purpose
in Section \ref{sec:0.377},
we abstract the properties needed for such a guidance set.
Intuitively, a set $Z$ is a good guidance set if it has
a low $f$-value and also ensures that the value of its intersection
and union with an optimal solution are poor. 
\begin{definition}\label{def:guide}
A set $Z$ is a ($\alpha,\beta$)-guidance set, if given constants 
$\alpha,\beta\in (0,0.5)$
and optimum solution $O$, it holds that:
1) $\ff{Z} < \alpha \ff{O}$; 2) $\ff{O\cap Z} \le \alpha \ff{O}$; 3) $\ff{O\cup Z}\le \beta\ff{O}$,
or alternatively, $3'$) $\ff{O\cap Z}+\ff{O\cup Z}\le(\alpha + \beta)\ff{O}$.
\end{definition}
Lemma~\ref{lemma:fls}
(proved in Appendix~\ref{apx:fls-proof})
implies that for
the \fls output $Z$, if $f(Z) < \alpha \opt$,
then $Z$ is $((1+\epsi)\alpha, \alpha)$-guidance set. 
\begin{restatable}{lemma}{lemmafls}
\label{lemma:fls}
Let $\epsi > 0$, and let $(f,\iset(\mtr))$ be an instance of \sm.
The input set $Z_0$ is an $\alpha_0$-approximate solution to $(f, \iset(\mtr))$.
\fls (Alg.~\ref{alg:fls}) returns a solution $Z$ with $\oh{kn\log(1/\alpha_0)/\epsi}$ queries such that 
$\ff{S \cup Z} + \ff{S \cap Z} < (2+\epsi)\ff{Z}$,
where $S \in \iset(\mtr)$.
\end{restatable}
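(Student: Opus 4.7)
The plan is to prove the two claims separately: first the query-complexity bound, then the local-optimality inequality at termination. For the complexity, the key fact is that every accepted swap multiplicatively boosts $\ff{Z}$ by a factor of at least $1 + \epsi/k$. Indeed, if we swap out $a \in Z$ in favor of $e \notin Z$ with $\marge{e}{Z} - \marge{a}{Z \setminus a} \ge (\epsi/k)\ff{Z}$, then submodularity yields $\marge{e}{Z \setminus a} \ge \marge{e}{Z}$, so
\[
\ff{Z - a + e} \;\ge\; \ff{Z} - \marge{a}{Z \setminus a} + \marge{e}{Z} \;\ge\; (1 + \epsi/k)\ff{Z}.
\]
Since $\ff{Z_0} \ge \alpha_0 \opt$ and $\ff{Z} \le \opt$ throughout, the number of swaps is at most $\log_{1+\epsi/k}(1/\alpha_0) = \oh{(k/\epsi)\log(1/\alpha_0)}$. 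Each swap is detected by scanning the $O(n)$ marginals $\marge{a}{Z \setminus a}$ for $a \in \uni$, giving the stated query bound.

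For the termination inequality, fix any feasible $S \in \iset(\mtr)$. Using the $k$ dummy elements introduced in the preliminaries, I first extend both $Z$ and $S$ to size exactly $k$ without changing any $f$-value. Next, I invoke Brualdi's matroid exchange theorem (for size constraint any bijection suffices) to produce a bijection $\phi: S \setminus Z \to Z \setminus S$ such that $Z - \phi(e) + e$ is independent for every $e \in S \setminus Z$. Writing $S \setminus Z = \{e_1, \ldots, e_m\}$ and $a_i = \phi(e_i)$, the termination condition applied to each exchange pair $(e_i, a_i)$ gives
\[
\marge{e_i}{Z} - \marge{a_i}{Z \setminus a_i} \;<\; (\epsi/k)\ff{Z},
\]
and summing over $i \le m \le k$ yields $\sum_i \marge{e_i}{Z} - \sum_i \marge{a_i}{Z \setminus a_i} < \epsi \ff{Z}$.

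Two telescoping arguments then recast each sum in terms of the $f$-values of interest. Adding $e_1, \ldots, e_m$ to $Z$ one at a time and applying submodularity gives $\sum_i \marge{e_i}{Z} \ge \ff{Z \cup S} - \ff{Z}$. Removing $a_1, \ldots, a_m$ one at a time and ending at $Z \cap S$ (since $\{a_1, \ldots, a_m\} = Z \setminus S$ when $|Z| = |S|$), together with the submodular inequality $\marge{a_i}{Z \setminus \{a_1, \ldots, a_i\}} \ge \marge{a_i}{Z \setminus a_i}$, gives $\sum_i \marge{a_i}{Z \setminus a_i} \le \ff{Z} - \ff{Z \cap S}$. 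Combining these with the summed termination bound yields
\[
\ff{Z \cup S} + \ff{Z \cap S} - 2\ff{Z} \;<\; \epsi \ff{Z},
\]
which rearranges to the claim.

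The main obstacle I anticipate is the matroid-feasibility of the exchange pairs: a naive bijection between $S \setminus Z$ and $Z \setminus S$ does not correspond to swaps that the algorithm actually inspects, so without Brualdi's theorem (or a symmetric-exchange lemma) the termination condition cannot be invoked pairwise. A secondary subtlety is ensuring $|S| = |Z| = k$ so that the removal-telescoping lands precisely at $Z \cap S$; the dummy-padding convention handles this cleanly. Beyond these points, the submodular telescopes and the geometric-series bound on the swap count are routine.
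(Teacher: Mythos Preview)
Your proposal is correct and follows essentially the same approach as the paper's proof: both parts match almost line-for-line, including the multiplicative $(1+\epsi/k)$ growth argument for the query bound, the dummy-padding to make $S$ and $Z$ bases, the invocation of Brualdi's exchange bijection to align swap pairs with the termination condition, and the two submodular telescopes yielding $\ff{S\cup Z}-\ff{Z}$ and $\ff{Z}-\ff{S\cap Z}$.
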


\subsection{Guiding the \rg Algorithm}\label{sec:grg}
\begin{figure}
    \centering
    \includegraphics[width=0.93\linewidth]{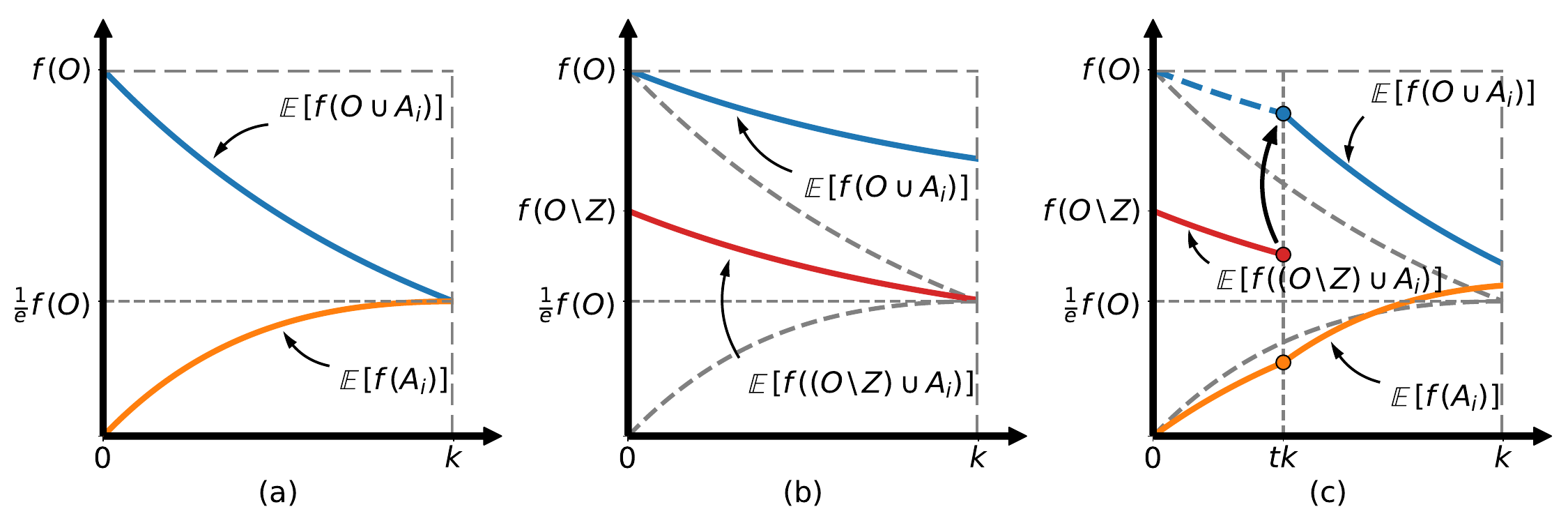}
    \caption{\textbf{(a)}: The evolution of $\ex{ \ff{O\cup A_i}}$ and $\ex{\ff{A_i}}$ in the
    worst case of the analysis of \rg, as the partial solution size increases to $k$.
      \textbf{(b)}: Illustration of how the degradation of $\ex{ \ff{O\cup A_i} }$ changes as we introduce
    an $(0.385+\epsi, 0.385)$-guidance set. 
    \textbf{(c)}: The updated degradation with a switch point $tk$, where the algorithm
    starts with guidance and then switches to running without guidance.
    The dashed curved lines depict the unguided values from \textbf{(a)}.
    % It demonstrates that even though the value of $A_i$ decreases initially
    % when the selection starts outside of $Z$,
    % it benefits from the improved degradation of $\ff{O\cup A_i}$
    % upon switching back to the original algorithm.
    }
    \label{fig:grg-deg}
\end{figure}

% \begin{algorithm}[h]
%    \caption{Guide Random Greedy with a Guidance Set under Matroid Constraint}\label{alg:grg-m}
%      \Proc{\grg($f, \mtr, Z, t$)}{
%      \textbf{Input:} oracle $f$, matroid constraint $\mtr$, guidance set $Z$, switch point $t$\;
%      $A \gets k$ dummy elements\;
%      \For{$i \gets 1$ to $ k$}{
%         \eIf{$i \le t\cdot k$}{
%           $M_i \gets \argmax_{M\subseteq \uni \setminus (A_{i-1}\cup Z), M \text{ is a basis}} \sum_{x\in M}\marge{x}{A_{i-1}}$\label{line:grg-m-candidate1}\;}
%           {$M_i \gets \argmax_{M\subseteq \uni \setminus A_{i-1}, M \text{ is a basis}} \sum_{x\in M}\marge{x}{A_{i-1}}$\label{line:grg-m-candidate2}\;}
%         $\sigma_i\gets $ a bijection from $M_i$ to $A_{i-1}$,
%         where $A_{i-1}+x-\func{\sigma_i}{x} \in \iset(\mtr), \forall x \in M_i$ \;
%         $e_i \gets$ randomly pick an element from $M_i$\;
%         $A_{i} \gets A_{i-1} + e_i-\sigma_i(e_i)$\; }
%      \textbf{return} $A_k$}
% \end{algorithm}
In this section, we discuss the guided \rg algorithm
(Alg.~\ref{alg:grg})
using an $((1+\epsi)\alpha, \alpha)$-guidance set $Z$ returned by \fls.
Due to space constraints, we only consider the size constraint in the main text.
The ideas for the matroid constraint are similar, although the final recurrences
obtained differ. The version for matroid constraints is
given in Appendix~\ref{apx:mc}.

The algorithm \grg{} is simple to describe: it maintains
a partial solution $A$, initially empty.
It takes as parameters the switching time $t$ and guidance set $Z$.
While the partial solution satisfies $|A| < tk$, the algorithm operates as
\rg{} with ground set $\uni \setminus Z$;
after $|A| \ge tk$, it operates as \rg{} with ground set $\uni$. 
Pseudocode is provided in Appendix~\ref{apx:grg}.

\textbf{Overview of analysis.}
For clarity, we first describe the (unguided) \rg{} analysis
from \citet{buchbinder2014submodular}.
There are two recurrences:
% First, the objective value of $A_i$ increases
% by $\frac{1}{k}\ex{\ff{O\cup A_{i-1}}-\ff{A_{i-1}}}$.
% Second, 
%{\fontsize{9}{10} \selectfont
%\begin{align*}
the first is the greedy gain:
$$\ex{\ff{A_i} -\ff{A_{i-1}}} \ge \frac{1}{k}\ex{\ff{O\cup A_{i-1}}-\ff{A_{i-1}}}.$$
Intuitively, the gain at iteration $i$ is at least a $1/k$ fraction
of the difference between $f(O \cup A)$ and $A$, in expectation, where $A$ is the partial solution.
If $f$ were monotone, the right hand side would be at least $(\opt - f(A))/k$.
However, in the case that $f$ is not monotone, the set $O \cup A$ may have value smaller than $\opt$.

To handle this case, it can be shown
that the expected value of $f(O \cup A)$ satisfies a second
recurrence:
{\small
  $$\ex{\ff{O\cup A_i}} \overset{(a)}{\ge} \left(1-\frac{1}{k}\right)\ex{\ff{O\cup A_{i-1}}}+\frac{1}{k}\ex{\ff{O\cup A_{i-1}\cup M_i}} \overset{(b)}{\ge} \left(1-\frac{1}{k}\right)\ex{\ff{O\cup A_{i-1}}},$$}%

where $M_i$ is the set of elements with the top $k$ marginal gains at iteration $i$,
\textit{(a)} is from submodularity, and \textit{(b)} is from nonnegativity. 
Thus, this expected value, while initially \opt (since $A_0 = \emptyset$),
may degrade but is bounded.

Both of these recurrences are solved
together to prove the expected ratio of $1/e$ for \rg{}:
the worst-case evolution of the expected values of $f(A_i)$, $f(O \cup A_i)$, according
to this analysis, is illustrated in Fig. \ref{fig:grg-deg}(a). Observe that $f(A_i)$
converges to $\opt / e$ (as required for the ratio), and \textit{observe that $f(O \cup A_i)$
  also converges to $\opt / e$}. Thus, very little gain is obtained in the later
stages of the algorithm, as illustrated in the plot. 
The overarching idea of the guided version of the algorithm
is to obtain a better degradation
of $\ex{\ff{O\cup A_i}}$, leading to better gains later in the
algorithm that improve the worst-case ratio.
In the following, we elaborate on this goal,
the  achievement of which is illustrated in Fig. \ref{fig:grg-deg}(c). 

\textbf{Stage 1: Recurrences when avoiding  $Z$.}
Suppose $Z$ is an $(\alpha, \beta)$-guidance set,
and that \rg{} selects elements as before, but excluding $Z$
from the ground set. 
Then, the recurrences change as follows. The recurrence
for the gain becomes:
\begin{equation}
  \ex{\ff{A_i} -\ff{A_{i-1}}} \ge \frac{1}{k}\ex{\ff{(O \setminus Z) \cup A_{i-1}}-\ff{A_{i-1}}},
\end{equation}
where $O\setminus Z$ replaces $O$
since we select elements outside of the set $Z$. For the second recurrence,
we can lower bound the term $\ex{\ff{O\cup A_{i-1} \cup M_i}}$
using submodularity and 
the fact that $Z \cap A_{i - 1} = \emptyset$:
\begin{equation}
  \ex{\ff{O\cup A_i}} \overset{}{\ge} \left(1-\frac{1}{k}\right)\ex{\ff{O\cup A_{i-1}}}+\frac{1}{k}\ex{\ff{O} - \ff{O \cup Z}}.
\end{equation}
Finally, a similar
recurrence to (2) also holds for $\ff{(O \setminus Z) \cup A_i}$; both
are needed for the analysis. Since $Z$ is a guidance set,
by submodularity, $f(O \setminus Z)\ge f(O) - f(O \cap Z) \ge (1 - \alpha) \opt,$
which ensures that some gain is available by selection outside of $Z$.
And $f(O) - f(O \cup Z) \ge (1 - \beta )\opt$, which means that the degradation
recurrences are improved. 

%by $\left(\ff{O}-\ff{O\cup Z}\right)/k$.
The blue line in Figure~\ref{fig:grg-deg}(b) depicts this improved degradation
with the size of the partial solution.
However, this improvement comes at a cost:
a smaller increase in $\ex{\ff{A_i}}$ is obtained
over the unguided version. Therefore,
to obtain an improved ratio we switch
back to the regular behavior of \rg{} -- intuitively,
this shifts the relatively good, earlier behavior of \rg{} to later
in the algorithm. 
% to the difference between
% $\ex{\ff{(O\setminus Z)\cup A_{i-1}}}$ and $\ex{\ff{A_{i-1}}}$.
% This restriction is illustrated by
% the red line in Figure~\ref{fig:grg-deg}(b).
%Consequently, this process leads to an inferior objective value for the solution
%during the guiding stage as compared with the unguided algorithm.

\textbf{Stage 2: Switching back to selection from whole ground set.}
After the switch, the recurrences revert back to the original ones, but with
different starting values. 
Since $\ex{\ff{O\cup A_i}}$ was significantly enhanced in the first stage,
in the final analysis we get an overall improvement over the unguided version.
The blue line in Figure~\ref{fig:grg-deg}(c) demonstrates
the degradation of $\ex{\ff{O\cup A_i}}$ over two stages,
while the orange line depicts how the approximation ratio converges
to a value $0.385 > 1/e$. 

The above analysis sketch can be formalized and the resulting recurrences
solved: the results are stated in the following lemma,
which is formally proven in Appendix \ref{apx:grg-size}.
% \begin{lemma}
% \label{lemma:grg3}
% With an input $((1+\epsi)\alpha, \alpha)$-guidance set $Z$,
% \grg returns set $A_k$ with $\oh{nk}$ queries, \st
% % $\ex{\ff{A_k}}\ge \theta \ff{O} - \eta\ff{O\cap Z} - \mu \ff{ O \cup Z }.$
% $\ex{\ff{A_k}}\ge \left(\theta  - \eta(1+\epsi)\alpha - \mu\alpha\right)\ff{O}.$
% If $\iset$ is a size constraint,
% $\theta = \left(2-t-\frac{1}{k}\right)e^{t-1}-e^{-1}$, $\eta = \left(1-\frac{1}{k}\right)e^{t-1}-e^{-1}$, and $\mu = \left(1+\frac{1-t}{1-\frac{1}{k}}\right)e^{t-1}-\left(2-\frac{1}{k}\right)e^{-1}$.
% If $\iset$ is a matroid constraint,
% $\theta = \frac{1}{2}\left(\frac{1}{2}+\left(\frac{3}{2}-t-\frac{1}{k}\right)e^{2(t-1)}-e^{-2}\right)$, 
% $\eta = \frac{1}{2} \left(\brk{1-\frac{2}{k}}e^{2(t-1)} - e^{-2}\right)$, and
% $\mu = \frac{1}{2}\left(\brk{\frac{1}{2}+\frac{1-t}{1-\frac{2}{k}}}e^{2(t-1)}-\brk{\frac{3}{2}-\frac{1}{k}}e^{-2}\right).$
% \end{lemma}
\begin{restatable}{lemma}{lemmagrgsize}
\label{lemma:grg3-size}
With an input size constraint $\iset$ and 
a $((1+\epsi)\alpha, \alpha)$-guidance set $Z$,
\grg returns set $A_k$ with $\oh{kn}$ queries, \st
$\ex{\ff{A_k}}\ge \left[\left(2-t-\frac{1}{k}\right)\left(1-\frac{1}{k}\right)e^{t-1}-e^{-1}- (1+\epsi)\alpha\left(\left(1-\frac{1}{k}\right)^2e^{t-1}-e^{-1}\right) 
\right.$ $\left. -\alpha\left(\left(1+\frac{1-t}{1-\frac{1}{k}}\right)e^{t-1}-\left(2-\frac{1}{k}\right)e^{-1}\right)\right]\ff{O}.$
% \begin{align*}
% \ex{\ff{A_k}}&\ge \left(\left(2-t-\frac{1}{k}\right)e^{t-1}-e^{-1}  - (1+\epsi)\alpha\left(\left(1-\frac{1}{k}\right)e^{t-1}-e^{-1}\right)\right.\\
% &\left.- \alpha\left(\left(1+\frac{1-t}{1-\frac{1}{k}}\right)e^{t-1}-\left(2-\frac{1}{k}\right)e^{-1}\right)\right)\ff{O}
% \end{align*}
\end{restatable}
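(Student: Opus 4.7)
The plan is to track three expected quantities through the two stages of \grg{}: the partial-solution value $a_i := \ex{\ff{A_i}}$, the guided comparator $b_i := \ex{\ff{(O\setminus Z) \cup A_i}}$, and the unguided comparator $c_i := \ex{\ff{O \cup A_i}}$. In Stage~1 ($i \le tk$), I first derive the greedy-gain recurrence $a_i - a_{i-1} \ge \tfrac{1}{k}(b_{i-1} - a_{i-1})$: since $M_i$ consists of the $k$ elements of $\uni \setminus Z$ with largest marginal gain and $O\setminus Z$ is a feasible comparator lying in $\uni \setminus Z$, we have $\sum_{x \in M_i} \marge{x}{A_{i-1}} \ge \sum_{x \in O\setminus Z}\marge{x}{A_{i-1}} \ge \marge{O\setminus Z}{A_{i-1}}$ by submodularity. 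For the degradation recurrence on $c_i$, conditioning on $A_{i-1}$ and using that $x_i$ is uniform in $M_i$ gives $\ex{c_i \mid A_{i-1}} \ge (1-\tfrac{1}{k})\ff{O \cup A_{i-1}} + \tfrac{1}{k}\ff{O \cup A_{i-1} \cup M_i}$; then since $A_{i-1} \cup M_i \subseteq \uni \setminus Z$, applying submodularity to the pair $(O \cup A_{i-1} \cup M_i,\, O\cup Z)$ together with nonnegativity yields $\ff{O\cup A_{i-1}\cup M_i} \ge \ff{O} - \ff{O \cup Z}$. An analogous argument for $b_i$, picking the right pair for submodularity, produces a parallel recurrence with the constant forcing $\ff{O\setminus Z} - \ff{O\cup Z}$.

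Next I plug in the $((1+\epsi)\alpha, \alpha)$-guidance hypotheses from Definition~\ref{def:guide}: condition (3) gives $\ff{O} - \ff{O\cup Z} \ge (1-\alpha)\ff{O}$, while submodularity plus nonnegativity gives $\ff{O\setminus Z} \ge \ff{O} - \ff{O\cap Z} \ge (1-(1+\epsi)\alpha)\ff{O}$. The three Stage~1 recurrences are first-order linear with decay factor $(1-1/k)$ and constant forcing, so they solve in closed form at $i = tk$ in terms of $(1-1/k)^{tk}$, which I keep as-is (rather than replacing by $e^{-t}$) in order to preserve the finite-$k$ corrections appearing in the statement. For Stage~2 ($tk < i \le k$), \grg{} reverts to vanilla \rg{} on $\uni$, so the standard recurrences $a_i - a_{i-1} \ge \tfrac{1}{k}(c_{i-1} - a_{i-1})$ and $c_i \ge (1-\tfrac{1}{k})c_{i-1}$ of \citet{buchbinder2014submodular} apply, but are now seeded by the Stage~1 endpoint values $a_{tk}, c_{tk}$ rather than $a_0 = 0, c_0 = \ff{O}$. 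Solving these from $tk$ to $k$ and collecting the terms proportional to $\ff{O}$, $(1+\epsi)\alpha\ff{O}$, and $\alpha\ff{O}$ assembles the stated lower bound on $\ex{\ff{A_k}}$. The $\oh{kn}$ query complexity is immediate, since \grg{} performs $k$ iterations, each requiring an $\oh{n}$ top-$k$ selection.

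I expect the main obstacle to be twofold. First, the submodularity step producing the clean lower bound $\ff{O\cup A_{i-1}\cup M_i} \ge \ff{O}-\ff{O\cup Z}$ critically exploits $A_{i-1}\cup M_i \subseteq \uni\setminus Z$; the $b_i$-analogue is not immediate and requires choosing the correct pair of sets for the submodular inequality so that an $\ff{O\setminus Z} - \ff{O\cup Z}$ forcing term falls out rather than a weaker one, since naively substituting $O\setminus Z$ for $O$ loses the disjointness structure. Second, the target expression has delicate finite-$k$ coefficients such as $(2-t-\tfrac{1}{k})(1-\tfrac{1}{k})$ and $1 + \tfrac{1-t}{1-1/k}$ that are not what a sloppy $k\to\infty$ asymptotic would give, so I must propagate the $(1-1/k)$ factors faithfully through both stages and, in particular, through the coupling between the $a$-recurrence and the lagged $b_{i-1}$ term on its right-hand side, rather than absorb them into an $\oh{1/k}$ error.
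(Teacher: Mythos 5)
Your proposal follows essentially the same route as the paper's proof. The paper structures the argument through the same three quantities, proving (their Lemmata~\ref{lemma:grg1} and~\ref{lemma:grg2}) exactly the recurrences you write: the Stage~1 greedy gain coupled to $b_{i-1}$, the Stage~2 greedy gain coupled to $c_{i-1}$, and first-order degradation recurrences for $b_i$ and $c_i$ with decay factor $(1-1/k)$ and the forcing terms you identify. Your diagnosis of the critical step is correct: the $b_i$ degradation recurrence is obtained by applying the submodular inequality to the pair $\bigl((O\setminus Z)\cup A_{i-1}\cup M_i,\ O\cup Z\bigr)$, whose intersection collapses to $O\setminus Z$ precisely because $A_{i-1}\cup M_i$ is disjoint from $Z$, yielding the forcing term $\ff{O\setminus Z}-\ff{O\cup Z}$ after dropping the nonnegative $f(O\cup Z\cup A_{i-1}\cup M_i)$.

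The one place your write-up diverges from the paper is \emph{when} the guidance-set bounds get substituted: you propose pushing $\ff{O}-\ff{O\cup Z}\ge(1-\alpha)\ff{O}$ and $\ff{O\setminus Z}\ge(1-(1+\epsi)\alpha)\ff{O}$ into the forcing terms before solving, whereas the paper carries $\ff{O\cup Z}$ and $\ff{O\cap Z}$ symbolically all the way through (the final displayed line of its proof is a linear combination in $\ff{O}$, $\ff{O\cap Z}$, $\ff{O\cup Z}$) and only substitutes the guidance bounds at the last step of the lemma statement. The two orderings are equivalent here because the recurrences are linear with nonnegative propagated coefficients, so no sign-flip occurs; but you should verify the nonnegativity of the coefficients multiplying $\ff{O\cap Z}$ and $\ff{O\cup Z}$ before substituting (as the paper implicitly does for $t=0.372$). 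Also note two small bookkeeping points you half-anticipate: the boundary index is $\lfloor tk\rfloor$, and the finite-$k$ factors such as $(1-1/k)^2$ and $1/(1-1/k)$ in the target do not come from refusing to replace $(1-1/k)^{tk}$ by an exponential, but from bounding $(1-1/k)^{tk-1}\ge e^{-t}$ via Lemma~\ref{lemma:val-inq} and then redistributing the leftover $(1-1/k)$ powers.
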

From Lemma \ref{lemma:grg3-size}, 
we can directly prove the main result for size constraint. 
% The proof for the matroid constraint using Lemma \ref{lemma:grg3-matroid} is deferred to Appendix \todo{}. 
\begin{proof}[Proof of Theorem \ref{thm:irg} under size constraint]
  Let $(f,\iset)$ be an instance of \sm, with optimal solution set $O$.
If $\ff{Z}\ge (0.385-\epsi)\ff{O}$ under size constraint,
the approximation ratio holds immediately.
Otherwise, by Lemma~\ref{lemma:fls}, 
\fls returns a set $Z$ which is an $((1+\epsi)\alpha, \alpha)$-guidance set,
where $\alpha = 0.385-\epsi$.
By Lemma~\ref{lemma:grg3-size}, % and $Z$ is an $((1+\epsi)\alpha, \alpha)$-guidance set
%with $\alpha = 0.385-\epsi$,
\begin{align*}
&\ex{\ff{A_k}}
% \ge \left[\left(2-t-\frac{1}{k}\right)e^{t-1}-e^{-1}\right]\ff{O}
% - \left[\left(1+\frac{1-t}{1-\frac{1}{k}}\right)e^{t-1}-\left(2-\frac{1}{k}\right)e^{-1}\right]\left(\ff{O\cup Z}+\ff{O\cap Z}\right)\\
% &\hspace{2em}- \left(e^{-1} - (1-t)e^{t-1}\right)\ff{O\cap Z}\\
\ge \left[\brk{2-t-\epsi}(1-\epsi)e^{t-1}-e^{-1}-(0.385-0.615\epsi)\brk{(1-\epsi)^2e^{t-1}-e^{-1}}\right.\\
&\left.-(0.385-\epsi)\brk{\left(1+\frac{1-t}{1-\epsi}\right) e^{t-1}-(2-\epsi)e^{-1}}\right]\ff{O} \tag{$\forall k \ge \frac{1}{\epsi}$}\\
&\ge (0.385-\epsi)\ff{O}. \tag{$t=0.372$}
\end{align*}
\end{proof}
%\todo{Yixin: Prove the approximation ratios from this lemma. Put in theorem statement \checkmark}
\subsection{Deterministic approximation algorithms}\label{sec:determ}
In this section, we outline the deterministic algorithms,
for size and matroid constraints.
The main idea is similar, but we
replace \grg with a deterministic subroutine.
For simplicity, we present a randomized version in Appendix~\ref{apx:determ} as Alg.~\ref{alg:determ},
which we then derandomize (Alg.~\ref{alg:determ2} in Appendix~\ref{apx:derand}).
Further discussion is provided in Appendix~\ref{apx:determ-full}.

\textbf{Algorithm overview.}
\citet{chen2023approximation} proposed a randomized algorithm,
\itpg,
which may be thought of as an interpolation between 
standard greedy~\citep{DBLP:journals/mp/NemhauserWF78}
and \rg~\citep{buchbinder2014submodular}.
Instead of picking $k$ elements, each randomly
chosen from the top $k$ marginal gains,
it picks $\ell = \oh{1 / \epsi}$ sets randomly
from $\oh{\ell}$ candidates. Although it uses
only a constant number of rounds, the recurrences
for \itpg are similar to the \rg{} ones discussed
above, so we can guide it similarly.

To select the candidate sets in each iteration,
we replace \ig \note{(the subroutine of \itpg proposed in \citet{chen2023approximation})} with a guided version:
\gig (Alg.~\ref{alg:gig} in Appendix~\ref{apx:ig-size}) for size constraint,
and \gigm (Alg.~\ref{alg:gigm} in Appendix~\ref{apx:ig}) for matroid constraint.
Since only $\ell$ random choices are made, each
from $\oh{\ell}$ sets, there are at most
$\oh{\ell^{\oh{ \ell }}}$ possible solutions,
where $\ell$ is a constant number depending on $\epsi$.
Notably, we are still able to obtain the same approximation
factors as in Section \ref{sec:0.385}. 
The proof of Theorem~\ref{thm:dgig}
is provided in Appendices~\ref{apx:determ} and~\ref{apx:derand}.
\begin{restatable}{theorem}{thmdgig}
\label{thm:dgig}
Let $(f,k)$ be an instance of \sm, with the optimal solution set $O$.
Alg.~\ref{alg:determ2} achieves a deterministic $(0.385-\epsi)$ approximation ratio
with $t = 0.372$, 
and a deterministic $(0.305-\epsi)$ approximation ratio with $t=0.559$.
The query complexity of the algorithm is $\oh{kn\ell^{2\ell-1}}$
where $\ell = \frac{10}{9\epsi}$.
\end{restatable}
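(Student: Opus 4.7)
The plan is to follow the same two-stage template used for the randomized result in Theorem 1.2.1, but with \itpg{} replacing \rg{}, and to exploit the fact that \itpg{} makes only a constant (in $1/\epsi$) number of random choices in order to enumerate all possible executions.

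First I would invoke \fls{} on the input to produce a set $Z$. By Lemma~\ref{lemma:fls}, either $f(Z) \ge (0.385-\epsi) f(O)$ (resp.\ $(0.305-\epsi) f(O)$ for the matroid case), in which case returning $Z$ immediately gives the claimed ratio, or else $Z$ is a $((1+\epsi)\alpha,\alpha)$-guidance set for the relevant $\alpha$. This reduction handles the easy branch and lets us focus on the guided stage.

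For the guided stage, I would analyze \gig{} (size) and \gigm{} (matroid) inside the \itpg{} framework of \citet{chen2023approximation}. The key observation is that \itpg{} only performs $\ell = \mathcal{O}(1/\epsi)$ random choices, each among $\mathcal{O}(\ell)$ candidate sets produced by \ig{}/\gig{}; so the recurrences governing $\mathbb{E}[f(A_i)]$ and $\mathbb{E}[f(O \cup A_i)]$ are structurally identical to the \rg{} recurrences in Section~\ref{sec:grg}, except the $1/k$ step size is replaced by a $1/\ell$ step size. Guiding the first $t\ell$ iterations by excluding $Z$ yields the improved degradation of $\mathbb{E}[f(O \cup A_i)]$, exactly as in Figure~\ref{fig:grg-deg}(b--c); the remaining $(1-t)\ell$ iterations revert to the unguided selection. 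Taking $\ell = 10/(9\epsi)$ makes the gap between the discretized recurrences and their continuous limits (the bounds used in the proof of Lemma~\ref{lemma:grg3-size}) at most an additive $\mathcal{O}(\epsi)$, so plugging $t = 0.372$ (size) and $t = 0.559$ (matroid) reproduces the $0.385-\epsi$ and $0.305-\epsi$ expected ratios, respectively.

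Next I would derandomize by brute enumeration. Since the algorithm's randomness is exhausted by $\ell$ choices from among $\mathcal{O}(\ell)$ alternatives, the full decision tree has $\mathcal{O}(\ell^\ell)$ leaves; Alg.~\ref{alg:determ2} deterministically explores all of them and returns the best. A standard averaging argument (the best leaf is at least as good as the expectation) converts the in-expectation guarantee into a deterministic one with no loss in ratio. For the query count, \fls{} contributes $\mathcal{O}(kn/\epsi)$ queries by Lemma~\ref{lemma:fls}; each leaf of the decision tree requires $\mathcal{O}(kn)$ queries to run guided \itpg{} once; each internal node generates $\mathcal{O}(\ell)$ candidate sets, each of which costs $\mathcal{O}(kn/\ell)$ queries to build; multiplying out and bounding gives $\mathcal{O}(kn \ell^{2\ell-1})$, matching the claim with $\ell = 10/(9\epsi)$.

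The main obstacle will be verifying that the discretization from $k$ iterations (in \rg{}) down to $\ell$ iterations (in \itpg{}) preserves the same recurrence-based lower bound up to additive $\epsi$ in the approximation ratio, especially for the matroid case where the recurrences for $\mathbb{E}[f(O \cup A_i)]$ involve exchange arguments that need to be re-derived for the \gigm{} candidate construction; the size case is more routine because the $(1-1/\ell)$ factors match the $(1-1/k)$ factors already handled in Lemma~\ref{lemma:grg3-size}. The derandomization and the query-complexity calculation are then essentially bookkeeping, inheriting the $\ell^{\mathcal{O}(\ell)}$ blow-up from the enumeration of the decision tree.
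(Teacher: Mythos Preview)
Your proposal follows essentially the same route as the paper---\fls{} for the guidance set, guided \itpg{} via \gig{}/\gigm{}, and brute-force enumeration of the $\mathcal{O}(\ell)^\ell$-leaf decision tree---and the query-complexity bookkeeping lands on the right bound. The one point to tighten is the derandomization for the size case: the \gig{} guarantee (Lemma~\ref{thm:ig}) is not an unconditional expectation over all $\ell(\ell+1)$ outputs but holds only for the $\ell$ outputs corresponding to an \emph{unknown} ``correct'' index $u\in\{0,\ldots,\ell\}$ (this is why the randomized Theorem~\ref{thm:determ} carries a $(\ell+1)^{-\ell}$ success probability rather than a plain expectation), so the argument is not ``best leaf $\ge$ expectation'' but rather ``the enumeration necessarily contains the correct-$u$ subtree, and the max over its $\ell^\ell$ leaves dominates their average, which attains the bound''; once you phrase it this way the rest of your plan goes through unchanged.
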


\section{Deterministic Algorithm with Nearly Linear Query Complexity} \label{sec:0.377}
In this section, we sketch a deterministic algorithm with
$(0.377-\epsi)$ approximation ratio and $\ohs{\epsi}{n\log(k)}$ query complexity
for the size constraint. A full pseudocode (Alg.~\ref{alg:dgig-2})
and analysis is provided in Appendix~\ref{apx:0.377}.
% To simplify, we provide the randomized and slower version as Alg.~\ref{alg:dgig-2}.
% The deterministic, nearly linear-time algorithm is proveded in Appendix \todo{}.

\textbf{Description of algorithm.} Our goal
is to improve the asymptotic $\mathcal O_\epsi ( kn )$ query
complexity. 
Recall that in Section \ref{sec:0.385},
we described a deterministic algorithm
that employed the output of local
search to guide the \itpg algorithm,
which obeys similar recurrences to \rg.
To produce the $\ell$ candidate
sets for each iteration of \itpg,
a greedy algorithm (guided \ig) is used.
These algorithms can be sped
up using a descending thresholds technique.
This results in \tgig (Alg.~\ref{alg:tgig} in Appendix~\ref{apx:tgig}),
which achieves $\ohs{\epsi}{n \log k}$ query complexity
for the guided part of our algorithm. 
\begin{wrapfigure}{r}{0.4\textwidth}
    \includegraphics[width=0.38\textwidth]{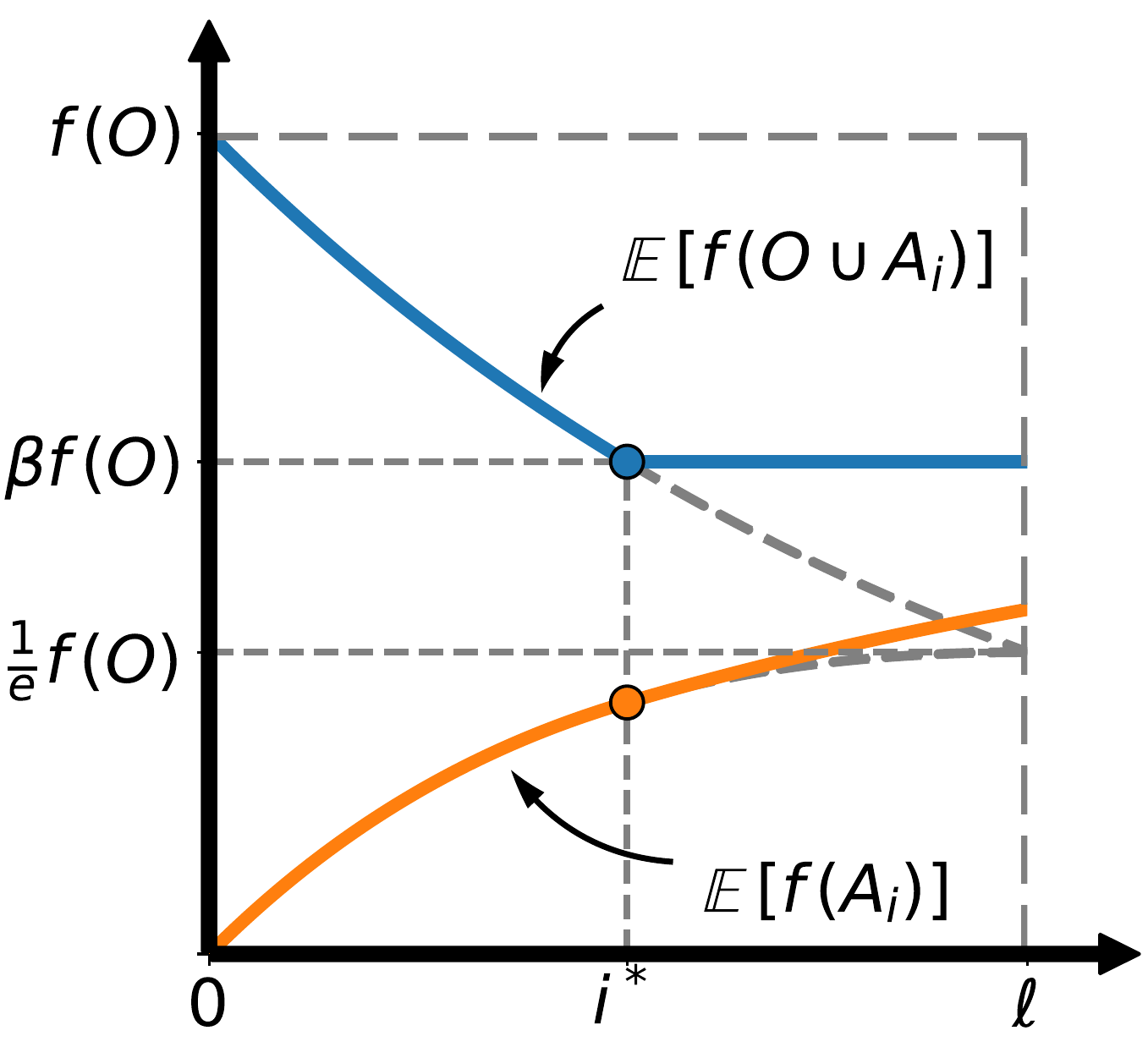}
    \caption{Depiction of how analysis of \itpg changes
      if there is no 
      $(0.377, 0.46)$-guidance set.}
    \label{fig:0.377-deg}
\end{wrapfigure}

However, the local search \fls{} still requires $\oh{kn / \epsi}$
queries, so we seek to find a guidance set in a faster way.
Recall that, in the definition of guidance set $Z$, the
value $f(Z)$
needs to dominate both $f( O \cap Z )$ and $f( O \cup Z)$.
To achieve this with faster query complexity, we employ
a run of unguided \itpg. 
Consider the recurrences plotted in
Fig. \ref{fig:grg-deg}(a) -- if the worst-case degradation occurs,
then at some point the value of $f(A_i)$ becomes close to $f( O \cup A_i )$.
On the other hand, if the worst-case degradation does not occur, then the approximation
factor of $\itpg$ is improved (see Fig. \ref{fig:0.377-deg}).
Moreover, if we ensure that at every stage,
$A_i$ contains no elements that contribute a negative gain, then we will
also have $f(A_i) \ge f(O \cap A_i)$.

To execute this idea, we run
(derandomized, unguided) \itpg, and consider
all $\oh{\ell^\ell}$ intermediate solutions.
Each one of these is pruned (by which we mean,
any element with negative contribution is discarded
until none such remain). Then, the guided part of our
algorithm executes with every possible candidate
intermediate solution as the guiding set; finally,
the feasible set encountered with maximum $f$ value is returned.

The tradeoff
between the first and second parts of the algorithm
is optimized 
with $\alpha=0.377$ and $\beta=0.46$. That is,
if \itpg produces an $(\alpha, \beta)$-guidance
set, the guided part of our algorithm achieves
ratio $0.377$; otherwise, \itpg has ratio
at least $0.377$. We have the following theorem.
The algorithms and analysis sketched here are
formalized in Appendix \ref{apx:0.377}. 
\begin{restatable}{theorem}{thmdgigtwo}
\label{thm:dgig-2}
Let $(f,k)$ be an instance of \sm, with the optimal solution set $O$.
Algorithm~\ref{alg:dgig-2} achieves a deterministic $(0.377-\epsi)$ approximation ratio 
with 
% $\oh{nk\ell^{4\ell-1}}$ 
$\mathcal O (n\log(k){\ell_1}^{2\ell_1}{\ell_2}^{2\ell_2-1})$
queries, where 
% $\ell = \frac{2}{\epsi}$.
$\ell_1 = \frac{10}{3\epsi}$ and $\ell_2 = \frac{5}{\epsi}$.
\end{restatable}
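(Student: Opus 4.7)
The plan is to prove Theorem~\ref{thm:dgig-2} via a case analysis driven by whether the collection of pruned intermediate solutions produced by an unguided, derandomized run of \itpg contains a $(0.377, 0.46)$-guidance set for the optimum $O$. I would first fix notation: let $\tilde A_i$ denote the pruned version of an intermediate partial solution $A_i$ along some branch of the derandomized \itpg's decision tree, where pruning iteratively discards any element with nonpositive marginal contribution to the rest of the solution. Pruning cannot decrease $f$, and submodularity together with the positivity of surviving marginals ensures $f(\tilde A_i) \geq f(O \cap \tilde A_i)$, so condition~2 of Definition~\ref{def:guide} is automatic for any threshold $\alpha$. The algorithm then feeds every $\tilde A_i$ as a guide into \tgig and returns the best feasible set encountered overall, including the \itpg outputs themselves.

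In the first branch, assume some pruned intermediate $Z = \tilde A_i$ is a $(0.377, 0.46)$-guidance set. I would adapt Lemma~\ref{lemma:grg3-size} to the \tgig subroutine with $\ell_2$ iterations and to the parameter pair $(\alpha, \beta)$ rather than $((1+\epsi)\alpha, \alpha)$. The two-stage recurrence analysis from Section~\ref{sec:grg} transfers: in the avoid-$Z$ phase, $f(O \setminus Z) \geq (1 - \alpha)\opt$ drives the greedy gain while $f(O) - f(O \cup Z) \geq (1 - \beta)\opt$ strengthens the degradation recurrence on $f(O \cup A)$; in the post-switch phase the usual \itpg recurrences propagate from improved starting values. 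Solving the resulting system and optimizing the switch time $t$ delivers a \tgig output of value at least $(0.377 - \epsi)\opt$.

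Otherwise, no pruned intermediate is a guidance set. Since condition~2 is automatic and any $\tilde A_i$ with $f(\tilde A_i) \geq 0.377\opt$ already certifies the ratio directly, I may assume that along some complete branch of the derandomized \itpg, condition~3 fails at every iteration, i.e., $f(O \cup \tilde A_i) > 0.46\opt$ for every $i$ on that branch. Plugging this lower bound (transferred to the unpruned $A_i$ via the monotonicity properties of pruning under submodularity) into the standard greedy recurrence $f(A_i) - f(A_{i-1}) \geq \tfrac{1}{\ell_1}\bigl(f(O \cup A_{i-1}) - f(A_{i-1})\bigr)$ and telescoping across the $\ell_1$ iterations shows that $A_{\ell_1}$ already satisfies $f(A_{\ell_1}) \geq (0.377 - \epsi)\opt$, so that branch's output qualifies.

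For the complexity, descending-thresholds reduces each inner greedy-style loop from $\oh{kn}$ to $\ohs{\epsi}{n \log k}$; derandomized \itpg contributes an $\oh{\ell_1^{2\ell_1}}$ factor (branches times per-branch derandomization), and each of the $\oh{\ell_1^{\ell_1}}$ pruned intermediate solutions is then fed into \tgig at cost $\oh{n \log(k)\,\ell_2^{2\ell_2 - 1}}$, yielding the stated total. The main obstacle is the joint calibration of $(\alpha, \beta)$ with the switch time: the guarantee in the guidance branch is decreasing in both parameters, while the guarantee in the non-guidance branch is increasing in $\beta$, so matching $0.377 - \epsi$ requires solving the coupled equation that equates the two branch ratios and verifying that its solution is precisely $(\alpha, \beta) = (0.377, 0.46)$ at the optimal switch time inside \tgig. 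A secondary subtlety is confirming that pruning interacts benignly with the recurrences, so that a lower bound on $f(O \cup \tilde A_i)$ can be transferred into a useful lower bound on $f(O \cup A_i)$ along the unpruned run.
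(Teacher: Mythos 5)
The overall dichotomy you propose matches the paper's: either some pruned intermediate solution of the unguided derandomized \itpg run is a $(0.377, 0.46)$-guidance set and the guided \tgig run wins, or none is a guidance set and (since pruning makes $f(O\cap \tilde A_i)\le f(\tilde A_i)$ automatic) the unguided run itself must already clear $(0.377-\epsi)\opt$. Your handling of the guidance branch, including the observation that pruning implies condition 2 of Definition~\ref{def:guide} once condition 1 holds (via Lemma~\ref{lemma:prune}), and your accounting of the query complexity, are all in the right direction.

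The non-guidance branch, however, has a genuine gap as written. You propose to plug the floor $f(O\cup A_{i-1}) > 0.46\,\opt$ directly into the recurrence $f(A_i) - f(A_{i-1}) \ge \tfrac{1}{\ell_1}\bigl(f(O\cup A_{i-1}) - f(A_{i-1})\bigr)$ and telescope. That telescope only yields
\[
f(A_{\ell_1}) \;\ge\; \Bigl(1-\bigl(1-\tfrac{1}{\ell_1}\bigr)^{\ell_1}\Bigr)\cdot 0.46\,\opt \;\approx\; (1-e^{-1})\cdot 0.46\,\opt \;\approx\; 0.29\,\opt,
\]
which is well short of $0.377\,\opt$. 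The paper's Case~1 (your Case~2) does something essentially stronger: it combines the floor with the degradation recurrence (inequality 2 of Corollary~\ref{cor:tgig-prune}), splitting the run at the index $i^*$ defined by $(1-1/\ell_1)^{i^*-1} > 0.46 \ge (1-1/\ell_1)^{i^*}$. For $i<i^*$, the degradation recurrence gives the much larger bound $\ex{f(O\cup A_i)} \ge (1-1/\ell_1)^i\,\opt$, which decays from $\opt$ down to $0.46\,\opt$; only for $i\ge i^*$ does the flat floor $0.46\,\opt$ kick in. Integrating the recurrence over these two phases (in the limit this is roughly $f(s)=se^{-s}$ up to $s=-\ln 0.46\approx 0.777$, then relaxing toward $0.46$) is precisely what delivers $\approx 0.377$. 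Without the early decay phase you lose roughly $0.09$ of the ratio, and the argument fails. To repair, incorporate the $(1-1/\ell_1)^i$ degradation bound explicitly into the telescope before switching to the $0.46$ floor.
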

% \textbf{Proof Overview.}
% Consider two cases of the algorithm.
% In the first case, suppose every $A_i \in Z_i$,
% it holds that $\ff{O\cup A_i}\ge \beta \ff{O}$, where $\beta > 1/e$.
% Then, for some large enough $i$, we get a better lower bound of 
% $\ff{O\cup A_i}$ compared with $\left(1-\frac{1}{\ell_1}\right)^i$.
% By analyzing the objective value of the solution returned by the first part
% of the algorithm,
% we can get a lower bound $\alpha_1 \ff{O}$.
% Otherwise, in the second case, there exists 
% an $(\alpha_1, \beta)$-guidance set $A_i \in Z_i$.
% % such that $\ff{O\cup A_i} < \beta \ff{O}$,
% % $\ff{A_i} <\alpha_1 \ff{O}$,
% % and $\ff{O\cap A_i}<\alpha_1 \ff{O}$.
% By analyzing the solution returned by the guiding part of the algorithm,
% we can get a lower bound $\alpha_2 \ff{O}$.
% 

%%% Local Variables:
%%% mode: latex
%%% TeX-master: "main.tex"
%%% End:

\section{Empirical Evaluation}\label{sec:exp}
% \begin{figure}[] %{l}{0.6\textwidth}
%   \subfigure[] { 
%     \includegraphics[width=0.45\textwidth]{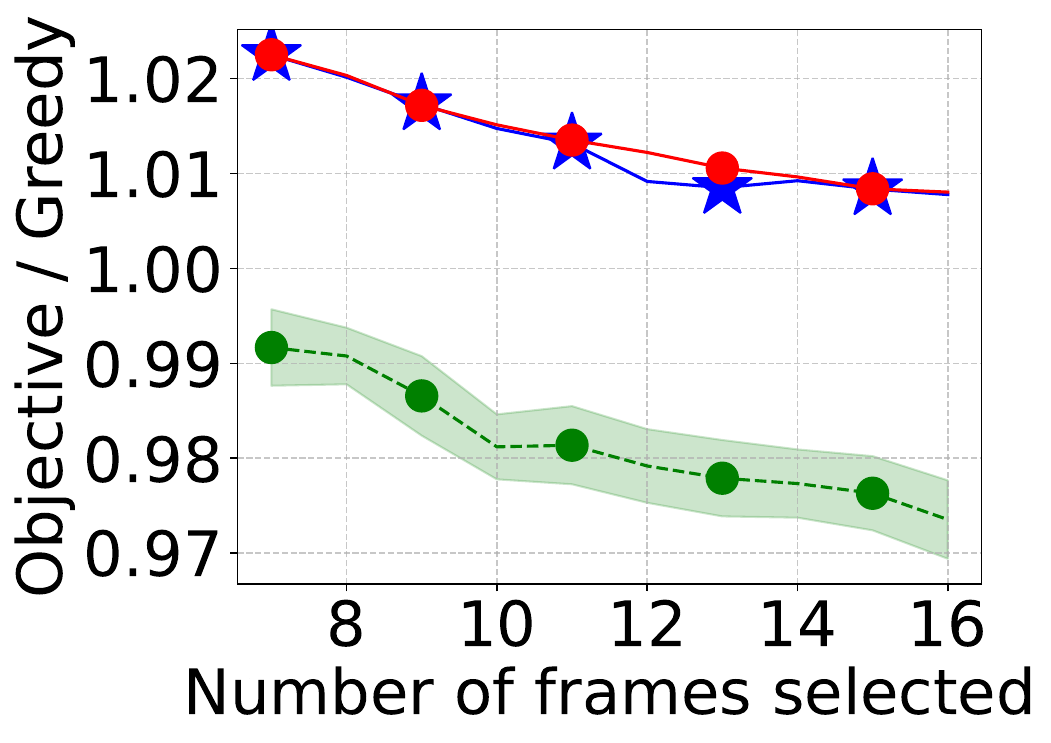}
%     \label{fig:ObjectiveVideo}
%   }
%   \subfigure[] { %\label{fig:depend}
%     \includegraphics[width=0.45\textwidth]{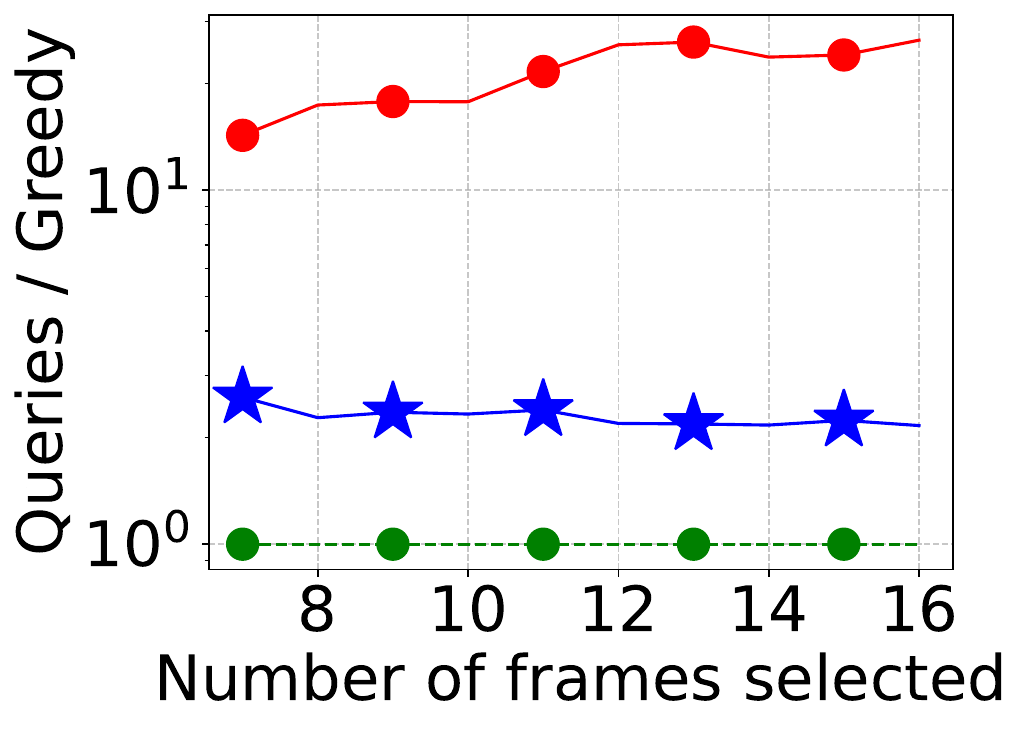}
%   }
%   \caption{The objective value and the number of queries are normalized respectively by the objective value and the total number of queries of the \textsc{StandardGreedy}.}
% \end{figure}

% \begin{figure}[h] 
%   \subfigure[\fls+\grg] { %\label{fig:depend}
%     \includegraphics[width=0.50\textwidth]{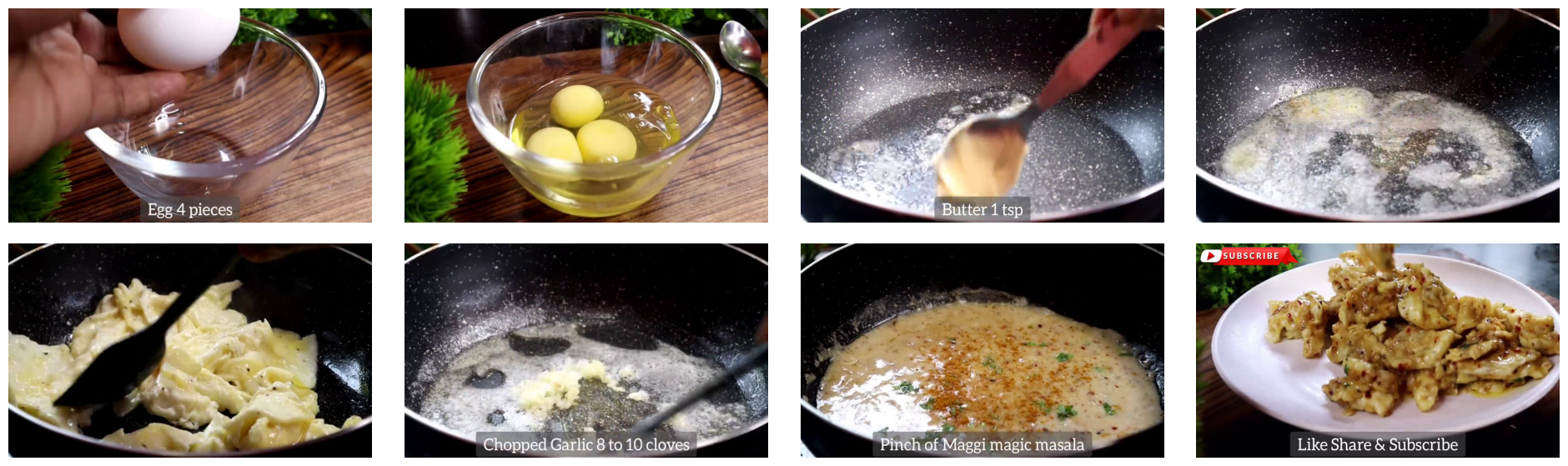}
%   }
%   \subfigure[\textsc{StandardGreedy}] { %\label{fig:depend}
%     \includegraphics[width=0.50\textwidth]{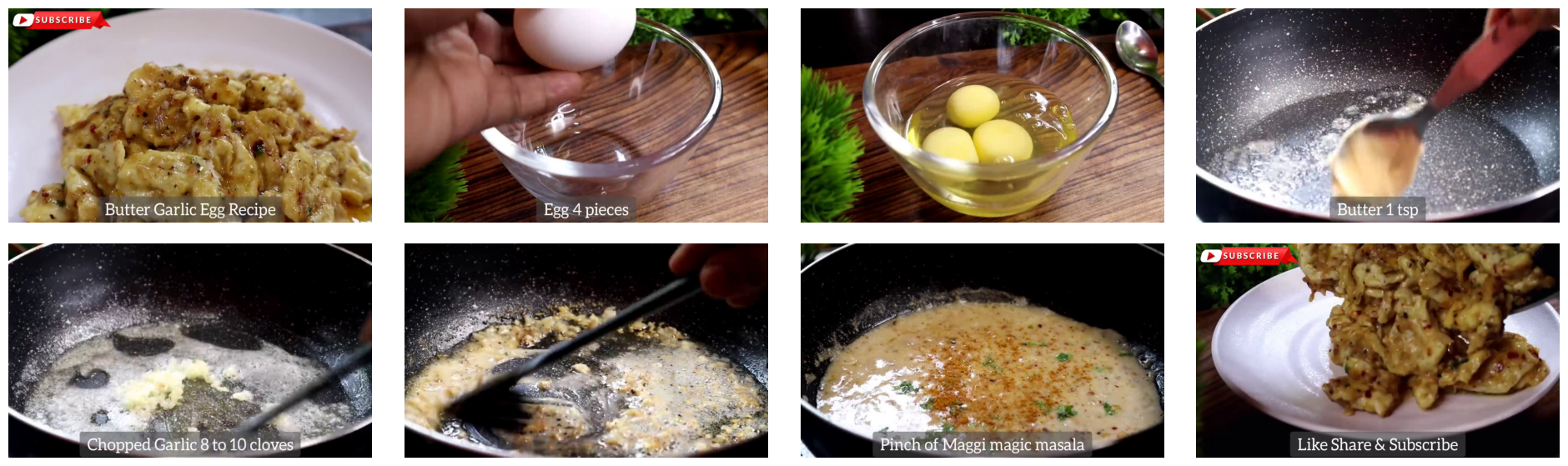}
%   }
%   \vspace*{-1em}
%   \caption{Frames selected for Video Summarization}
%   \vspace*{-1em}
%   \label{fig:frames}
% \end{figure}
% We will illustrate the performance of our proposed algorithm using three sets of experiments. We experimented with Video Summarization, Maximum Cut (MC), and Maximum-a-posteriori Inference Problem on the Ising Model (MAP). The formal definition of problems, details of datasets, and hyperparameters of graph generation can be found in the Appendix \ref{exp-appendix}. Due to space constraints, empirical results on MAP are included in the Appendix \ref{exp-appendix}. We compare our algorithms with \textsc{StandardGreedy}, \textsc{RandomGreedy}, and the local search algorithm of \citet{DBLP:conf/stoc/LeeMNS09}.
In this section, \note{along with Appendix~\ref{exp-appendix},} we implement and empirically evaluate our
randomized $(0.385 - \epsi)$-approximation algorithm (Alg. \ref{alg:irg}, $\fls$+$\grg$) on two applications of
size-constrained \sm,
and compare to several baselines in terms of objective value of solution and number of queries
to $f$. In summary, our algorithm uses roughly twice the queries as the standard greedy algorithm,
but obtains competitive objective values with an expensive local search that uses one to two
orders of magnitude more queries.
\footnote{Our code is available at https://gitlab.com/luciacyx/guided-rg.git.}

\textbf{Baselines.} 1) \sg: the classical greedy algorithm
\citep{DBLP:journals/mp/NemhauserWF78},
which often performs well empirically
on non-monotone objectives despite having no theoretical guarantee.
2) \rg, the current state-of-the-art combinatorial algorithm as discussed extensively above. 3) The local search
algorithm of \citet{DBLP:conf/stoc/LeeMNS09}, which is the
only prior polynomial-time local search algorithm with a theoretical guarantee:
ratio $1/4 - \epsi$ in $\oh{k^5\log(k)n/\epsi}$ queries. As our emphasis is on
theoretical guarantees above $1/e$, we set $\epsi = 0.01$ for our algorithm, which
yields ratio at least $0.375$ in this evaluation. For \citet{DBLP:conf/stoc/LeeMNS09},
we set $\epsi = 0.1$, which is the standard value of the accuracy parameter in the literature
-- running their algorithm with $\epsi = 0.01$ produced identical results. 

\textbf{Applications and datasets.}
For instances of \sm upon which to evaluate,
we chose video summarization  and maximum cut (MC).
For video summarization, our objective is to select a subset of frames from a video to create a summary.
As in \citet{banihashem2023dynamic} ,
we use a Determinantal Point Process objective function to select a diverse set of elements \citep{kulesza2012determinantal}.
Maximum cut is a classical example of a non-monotone, submodular objective function. 
We run experiments on unweighted Erd{\H{o}}s-R{\'e}nyi (ER), Barab{\'a}si-Albert (BA) and Watts-Strogatz (WS) graphs which have been used to model many real-world networks. 
The formal definition of problems, details of datasets, and hyperparameters of graph generation can be found in the Appendix \ref{exp-appendix}. In video summarization, there are $n = 100$ frames.
On all the instances of maximum cut, the number of vertices $n = 10000$. The mean of 20 independent runs is plotted,
and the shaded region represents one standard deviation about the mean. 

\begin{figure}[t] %{l}{0.6\textwidth}
  \subfigure[Video Summarization, solution value] { %\label{fig:depend}
    \includegraphics[width=0.23\textwidth]{Experiments/VideoSummarization/ObjectiveVideoSummarization.pdf}
  }
  \subfigure[Video Summarization, queries] { %\label{fig:depend}
    \includegraphics[width=0.23\textwidth]{Experiments/VideoSummarization/QueriesVideoSummarization.pdf}
  }
  \subfigure[Maximum Cut (ER), solution value] { %\label{fig:depend}
    \includegraphics[width=0.225\textwidth]{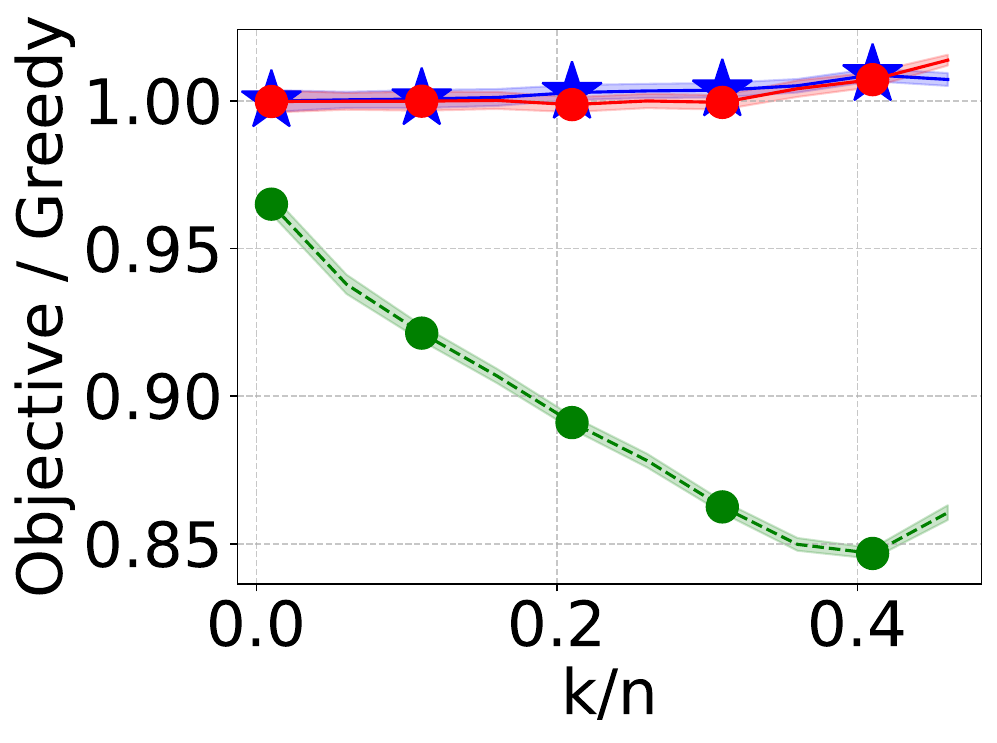}
  }
    \subfigure[Maximum Cut (ER), queries] { %\label{fig:depend}
    \includegraphics[width=0.22\textwidth]{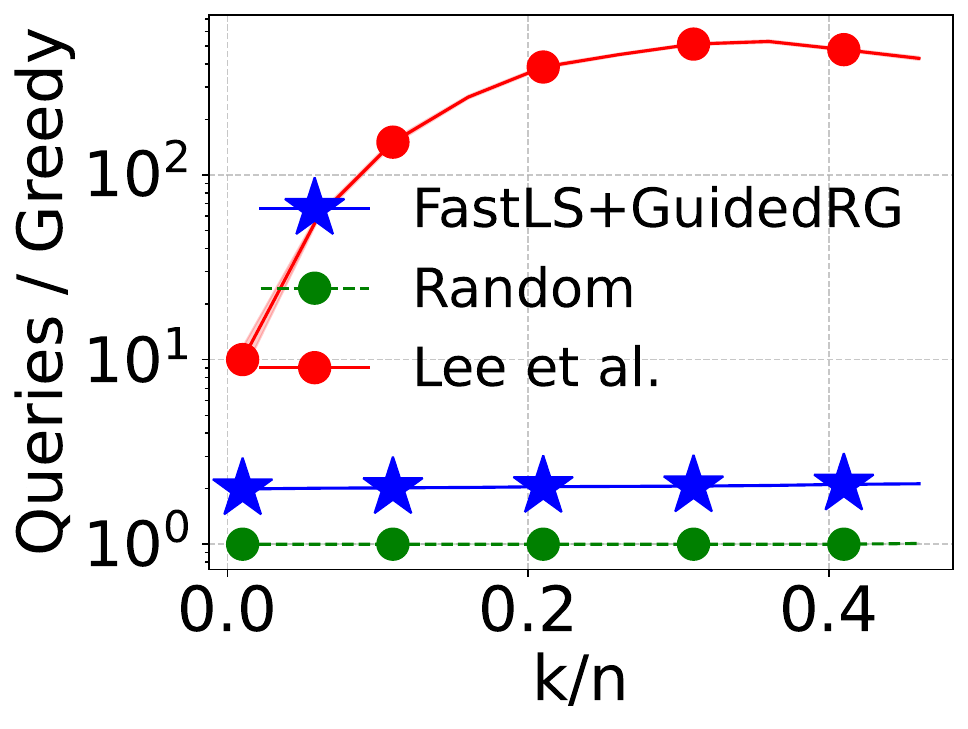}
  }
  
  % \subfigure[WS] { %\label{fig:depend}
  %   \includegraphics[width=0.31\textwidth]{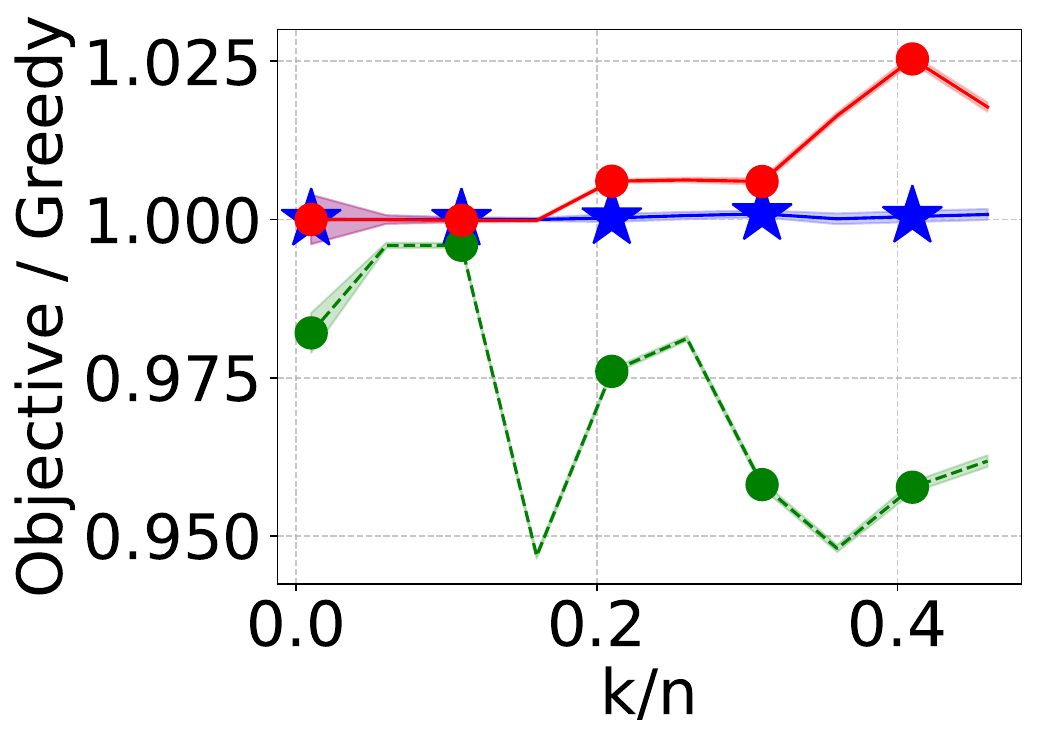}
  % }
  % \subfigure[WS] { %\label{fig:depend}
  %   \includegraphics[width=0.31\textwidth]{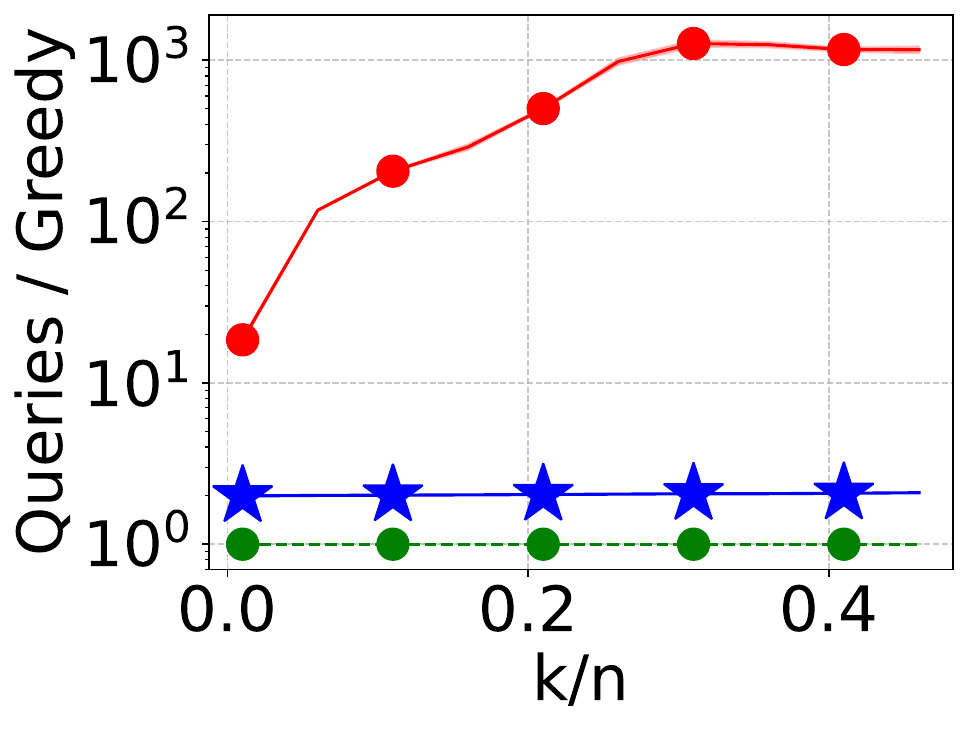}
  % }
  
  \caption{The objective value (higher is better) and the number of queries (log scale, lower is better) are normalized by those of \textsc{StandardGreedy}. Our algorithm (blue star) outperforms every baseline on at least one of these two metrics. }
  \label{fig:MAXCUTVideoSummarization}
  \vspace*{-1em}
\end{figure}
\textbf{Results.}
As shown in Figure \note{\ref{fig:MAXCUTVideoSummarization}
in this section, and Figure \ref{fig:frames}
and \ref{max_cut_additional_results} in Appendix~\ref{exp-appendix}},
on both applications, \fls+\grg produces solutions of higher objective
value than \sg, and also higher than \rg.  The objective values of \fls+\grg often matches with
\citet{DBLP:conf/stoc/LeeMNS09} which performs the best; this agrees with the intuition that, empirically, local search is nearly optimal. In terms of queries, our algorithm uses roughly twice the number
of queries as \sg, but we improve on \citet{DBLP:conf/stoc/LeeMNS09}
typically by at least a factor of $10$ and often by more than a factor of $100$. 

% The objective
% values of \citet{DBLP:conf/stoc/LeeMNS09} are virtually indistinguishable
% from our values, which agrees with the intuition that, empirically, local
% search is nearly optimal. In terms of queries, we are considerably slower
% than \sg and \rg, but we improve on \citet{DBLP:conf/stoc/LeeMNS09}
% often by at least a factor of $100$.

%%% Local Variables:
%%% mode: latex
%%% TeX-master: "main.tex"
%%% End:

\section{Discussion and Limitations} \label{sec:limitations}
Prior to this work, the state-of-the-art combinatorial ratios
were $1/e \approx 0.367$ and $0.283$ for size constrained
and matroid constrained \sm, respectively, both achieved
by the \rg{} algorithm. In this work, we show 
how to guide \rg{} with a fast local search algorithm
to achieve ratios $0.385$ and $0.305$,
respectively, in $\oh{kn / \epsi }$ queries. The resulting algorithm
is practical and empirically outperforms both
\rg{} and standard greedy in objective value on several applications
of \sm{}. However, if $k$ is on the order of $n$,
the query complexity is quadratic in $n$, which is too
slow for modern data sizes. Therefore, an interesting
question for future work is whether further improvements
in the query complexity to achieve these ratios (or better ones) could be made. % In addition,
% there remains a gap between the best ratio currently known
% of $0.401$ via a continuous algorithm
% for constrained \sm and the ratios achieved
% by our algorithms. 

In addition, we achieve the same approximation ratios and asymptotic
query complexity with deterministic algorithms, achieved by guiding a different
algorithm; moreover, we speed up the deterministic algorithm to $\mathcal O_\epsi (n \log k)$
by obtaining the guidance set in another way. This result is a partial answer to the limitation in the previous paragraph, as we achieve a ratio beyond $1/e$ in nearly linear query complexity. However, for all of our deterministic algorithms, there is an exponential dependence on $1/\epsi$, which makes these algorithms impractical and mostly of theoretical interest. 
% \newpage
\bibliographystyle{apalike} 
\bibliography{main.bib}
\clearpage
\appendix
\section{Additional Preliminaries} \label{apx:prelim}
\subsection{Constraints}\label{apx:prb-def}
In this paper, our focus lies on two constraints: 
size constraint and matroid constraint.
For size constraint, we define the feasible subsets as 
$\iset(k) = \{S\subseteq \uni: |S| \le k\}$.
For matroid constraint, the definition is as follows:
\begin{definition}\label{def:mtr}
A matroid $\mtr$ is a pair $(\uni, \iset)$,
where $\uni$ is the ground set and $\iset$ is the independent sets
with the following properties:
(1) $\emptyset \in \iset$; 
(2) hereditary property: $A \in \iset \Rightarrow B \in \iset, \forall B \subseteq A$;
(3) exchange property: $A, B \in \iset, |A| > |B| \Rightarrow \exists x \in A\setminus B, \st B+x \in \iset$.
\end{definition}
Specifically, we use $\iset(\mtr)$ to represent the independent sets of matroid $\mtr$.
A maximal independent set in $\iset(\mtr)$ is called a basis.
Let $k$ be the size of the maximal independent set.
% Following is a useful lemma for matroid.

In the following, we consider an extended matroid with $k$ dummy elements added to
the ground set $\mtr'=(\uni', \iset')$.
We show that \sm on $\mtr'$ return the same solution as on $\mtr$.
\begin{lemma}\label{lemma:mtr-dummy}
Let $\mtr = (\uni, \iset)$ be a matroid,
and $\uni' = \uni \cup E$,
where $E = \{e_0^1, \ldots, e_0^k\}$ and $e_0^i$ is a dummy element for each $i \in [k]$.
% Let $\iset'$ contains all sets $S \in \iset$ with $k-|S|$ copies,
% where each copy contains $1$-$(k-|S|)$ dummy elements respectively.
Let $\iset' = \bigcup_{S\in \iset} \{S,S\cup \{e_0^1\}, \ldots, S\cup \{e_0^1, \ldots, e_0^{k-|S|}\}\}$.
Then, $\mtr'=(\uni', \iset')$ is also a matroid and $\max_{S\in \iset}\ff{S} = \max_{S\in \iset'}\ff{S}$.
\end{lemma}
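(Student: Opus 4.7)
The plan is to establish the two claims separately: first, that $\mtr'=(\uni', \iset')$ satisfies the three matroid axioms of Definition~\ref{def:mtr}; second, that the two optima $\max_{S\in\iset}\ff{S}$ and $\max_{S\in\iset'}\ff{S}$ coincide. Throughout, I will exploit that the dummies $e_0^1,\ldots,e_0^k$ are interchangeable (each contributes zero marginal gain to any set), so the specific prefix appearing in the definition of $\iset'$ can be re-indexed to match any dummy subset of the same size.

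For the matroid property, I would check each axiom in turn. The empty set is in $\iset'$ by taking $S=\emptyset\in\iset$ with zero dummies. For the hereditary axiom, given $A'\in\iset'$ and $B'\subseteq A'$, I would decompose $A' = (A'\cap\uni) \cup (A'\cap E)$, note that $A'\cap\uni\in\iset$, and apply hereditary in $\mtr$ to get $B'\cap\uni\in\iset$; since $|B'\cap E|\le k-|B'\cap\uni|$, the dummies in $B'$ can be relabeled to a prefix and $B'\in\iset'$. For the exchange axiom, given $A',B'\in\iset'$ with $|A'|>|B'|$, I would split into two cases: if $|A'\cap\uni|>|B'\cap\uni|$, invoke the exchange property of $\mtr$ to obtain $x\in(A'\cap\uni)\setminus(B'\cap\uni)$ with $(B'\cap\uni)+x\in\iset$, then verify $B'+x\in\iset'$ by the total-size bound; otherwise $|A'\cap E|>|B'\cap E|$, and any dummy in $A'\setminus B'$ does the job.

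For the equality of optima, the key observation is that dummies have zero marginal gain, so by a one-line induction on $|D|$, $\ff{S\cup D}=\ff{S}$ for every $S\subseteq\uni$ and $D\subseteq E$. Each $S'\in\iset'$ decomposes as $(S'\cap\uni)\cup(S'\cap E)$ with $S'\cap\uni\in\iset$, hence $\ff{S'}=\ff{S'\cap\uni}\le\max_{S\in\iset}\ff{S}$. The reverse inequality is immediate since $\iset\subseteq\iset'$ (take zero dummies). Combining gives the desired equality.

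The main obstacle is the bookkeeping in the exchange axiom, since the definition of $\iset'$ is stated with a fixed prefix of dummies rather than an arbitrary dummy subset; the argument goes through only after explicitly invoking the interchangeability of the $e_0^i$'s so that prefixes and arbitrary dummy subsets of the same cardinality are identified. Everything else reduces to routine invocations of the matroid axioms of $\mtr$ together with the zero-marginal-gain property.
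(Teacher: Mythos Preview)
Your proposal is correct and takes essentially the same approach as the paper's proof: verify the three matroid axioms by decomposing each $A'\in\iset'$ into its $\uni$-part and its dummy part, reduce to the corresponding axiom for $\mtr$, and use a cardinality bound (coming from $|A'|\le k$) to control the number of dummies; then conclude the equality of maxima from $\ff{S\cup D}=\ff{S}$ together with $\iset\subseteq\iset'$. Your explicit treatment of the prefix-versus-arbitrary-subset issue for dummies is in fact more careful than the paper, which handles it implicitly.
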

\begin{proof}
Firstly, we prove that $\mtr'=(\uni', \iset')$ is also a matroid by Definiton~\ref{def:mtr}.

Since $\emptyset \in \iset$ and $\iset \subseteq \iset'$,
it holds that $\emptyset \in \iset'$.

Let $A' \in \iset'$, and $A = A'\setminus E$.
Then, $A \in \iset$.
For every $B'\subseteq A'$,
since $(\uni, \iset)$ is a matroid,
$B = B'\setminus E \subseteq A \in \iset$.
Since $| B'\cap E|\le |A'\cap E| \le k-|A| \le k-|B|$,
it holds that $B \in \iset'$ by the construction of $\iset'$. 

Let $A', B' \in \iset'$ and $|A'|> |B'|$.
Let $A = A'\setminus E$ and $B = B'\setminus E$.
Then, $A, B\in \iset$.
If $|A| > |B|$, by Def.~\ref{def:mtr},
there exists $x \in A\setminus B$ \st $B+x \in \iset$.
Since $(B'+x)\setminus (B+x) \subseteq E$ and $|B'| < |A'|\le k$,
$B'+x \in \iset'$.
Otherwise, $|A| < |B|$, which indicates that $|A'\cap E| > |B'\cap E|$.
Since $|B'| < |A'|\le k$, by adding a dummy element $e_0 \in A'\cap E\setminus B'$ to $B'$,
it holds that $B'+e_0 \in \iset'$.

Thus, by Def.~\ref{def:mtr}, $\mtr'=(\uni', \iset')$ is also a matroid.

As for \sm on $\iset'$,
since dummy element does not contribute to the objective value,
it holds that, for every $S \in \iset$,
$\ff{S} = \ff{S\cup E'}$, where $E'\subseteq E$.
Then, $\{\ff{S}: S\in \iset\} = \{\ff{S}: S\in \iset'\}$. 
Further, $\max_{S\in \iset}\ff{S} = \max_{S\in \iset'}\ff{S}$.
\end{proof}

\subsection{Technical Lemma} \label{apx:lemmata}
\begin{lemma}\label{lemma:mtr-bij}
    (\citet{brualdi1969comments})
    If $B_1$ and $B_2$ are finite bases,
    then there exists a bijection 
    $\sigma:B_1\setminus B_2\to B_2\setminus B_1$
    such that $B_2 +e -\sigma(e)$ is a basis for all $e \in B_1\setminus B_2$
\end{lemma}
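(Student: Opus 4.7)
The plan is to prove this classical symmetric exchange result via Hall's marriage theorem applied to a natural bipartite graph. First, I would construct $G = (X, Y, E)$ with left vertices $X = B_1 \setminus B_2$, right vertices $Y = B_2 \setminus B_1$, and an edge $\{e, f\}$ precisely when $B_2 - f + e$ is independent, which is equivalent to being a basis since $|B_2 - f + e| = |B_2|$. A perfect matching in $G$ is exactly the bijection $\sigma$ required by the lemma; since $|B_1| = |B_2|$ gives $|X| = |Y|$, the existence of a perfect matching reduces by Hall's theorem to verifying that $|N(S)| \geq |S|$ for every $S \subseteq X$.

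For a fixed $S \subseteq X$, I would set $T := N(S)$ and study the set $J := (B_2 \setminus T) \cup S$. Since $B_2 \setminus T$ and $S$ are disjoint (one lives in $B_2$, the other in $B_1 \setminus B_2$), we have $|J| = |B_2| - |T| + |S|$. If $J$ is independent in $\mtr$, then $|J| \leq r(\mtr) = |B_2|$ immediately yields $|S| \leq |T| = |N(S)|$, giving Hall's condition and hence the lemma.

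The main obstacle, and the technical heart of the argument, is showing that $J$ is independent. I plan to argue by contradiction: suppose $J$ contains a circuit and pick $C \subseteq J$ minimizing $|C \cap S|$. Since $B_2 \setminus T$ is independent, $C \cap S$ is nonempty, so pick $e \in C \cap S$. The fundamental circuit $C(e, B_2)$, the unique circuit contained in $B_2 + e$, satisfies $C(e, B_2) - e \subseteq N(\{e\}) \subseteq T$; thus $C(e, B_2) \cap J = \{e\}$, and in particular $C \neq C(e, B_2)$. Applying the circuit elimination axiom to $C$ and $C(e, B_2)$ at their common element $e$ yields a circuit $D \subseteq (C \cup C(e, B_2)) - e$, and since $C(e, B_2) \cap S = \{e\}$ one checks $D \cap S \subseteq (C \cap S) - e$, strictly smaller than $C \cap S$. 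The delicate point is that $D$ may contain elements of $C(e, B_2) - e \subseteq T$ and so may fail to lie inside $J$; the classical resolution is to iterate the circuit-elimination step against the fundamental circuits $C(e', B_2)$ of suitable elements $e' \in S$ to clear out the $T$-elements, eventually producing a circuit that lies inside $J$ and violates the minimality of $|C \cap S|$. This contradiction establishes that $J$ is independent, verifies Hall's condition, and completes the construction of $\sigma$.
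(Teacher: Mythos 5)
The paper itself gives no proof of this lemma---it is quoted from \citet{brualdi1969comments}---so the only question is whether your argument stands on its own. Your overall strategy (Hall's theorem on the bipartite graph with an edge $\{e,f\}$ iff $B_2-f+e$ is a basis) is the standard route, and your reduction is sound: if $J=(B_2\setminus N(S))\cup S$ is independent then $|J|\le |B_2|$ gives Hall's condition, and $|X|=|Y|$ upgrades the matching to the required bijection. The claim that $J$ is independent is in fact true. The gap is in how you argue it. First, the auxiliary claim $C(e,B_2)-e\subseteq N(\{e\})$ is false in general: neighbors live only in $Y=B_2\setminus B_1$, whereas the fundamental circuit may meet $B_1\cap B_2$; consequently $C(e,B_2)\cap J=\{e\}$ can fail, since $B_1\cap B_2\subseteq J$. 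Second, and more seriously, the step you yourself flag as the technical heart---iterating circuit elimination against fundamental circuits to ``clear out the $T$-elements'' and land on a circuit inside $J$ with strictly smaller $|C\cap S|$---is only asserted. Eliminating at $e$ can introduce elements of $C(e,B_2)\setminus J$, and eliminating those against further fundamental circuits (whose $Y$-parts again lie in $T$) gives no control that the process terminates inside $J$ or that the $S$-intersection keeps decreasing. As written, the independence of $J$ is not proved.

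The fix is short and avoids circuit elimination entirely. Suppose $C\subseteq J$ is a circuit. Since $S\cup(B_1\cap B_2)\subseteq B_1$ and $B_2\setminus T\subseteq B_2$ are independent, $C$ must contain some $e\in S$ and some $y\in (B_2\setminus B_1)\setminus N(S)$. For every $e'\in S$ we have $y\notin C(e',B_2)$: indeed $N(\{e'\})=(C(e',B_2)-e')\cap(B_2\setminus B_1)$, because $B_2-f+e'$ is a basis exactly when $f\in C(e',B_2)-e'$, and $y$ lies in $B_2\setminus B_1$ but is not a neighbor of any element of $S$. Hence every $e'\in S$ satisfies $e'\in\mathrm{span}(C(e',B_2)-e')\subseteq\mathrm{span}(B_2-y)$, so $\mathrm{span}((B_2-y)\cup S)=\mathrm{span}(B_2-y)$, and then $y\in\mathrm{span}(C-y)\subseteq\mathrm{span}(B_2-y)$, contradicting the independence of $B_2$. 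With this substitution (or, alternatively, the usual rank/submodularity verification of Hall's condition, which does not need $J$ at all), your proof is complete.
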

\begin{lemma}\label{lemma:recurrence}
Let $a \in \reals$, $b, X_0\in \mathbb{R}$.
If $X_i \ge aX_{i-1} +b$ for every $i \in[k]$,
then
\[X_k \ge a^kX_0 + \frac{b\left(1-a^k\right)}{1-a}.\]
\end{lemma}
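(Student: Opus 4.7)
The plan is to prove the bound by straightforward induction on $k$, treating the recurrence as a one-dimensional linear iteration whose closed form follows from telescoping.

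For the base case $k=1$, the hypothesis $X_1 \ge aX_0 + b$ matches the claimed bound directly, since $a^1 X_0 + b(1-a^1)/(1-a) = aX_0 + b$ whenever $a \neq 1$. For the inductive step, assuming the conclusion holds at index $k-1$, I would apply the hypothesis $X_k \ge aX_{k-1} + b$ and then substitute the inductive lower bound on $X_{k-1}$. Here the nonnegativity $a \in \reals = \mathbb{R}^+$ is crucial: since $a \ge 0$, multiplying the inductive inequality by $a$ preserves its direction. This yields
\[
X_k \;\ge\; a\!\left(a^{k-1}X_0 + \frac{b(1-a^{k-1})}{1-a}\right) + b \;=\; a^k X_0 + \frac{ab(1-a^{k-1}) + b(1-a)}{1-a},
\]
and combining the terms in the numerator gives $b(a - a^k + 1 - a) = b(1-a^k)$, completing the induction.

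The only potential obstacle is the degenerate case $a = 1$, where the denominator $1-a$ vanishes and the stated formula is formally undefined; in this case the recurrence simply telescopes to $X_k \ge X_0 + kb$, which is also the limit of the closed-form expression as $a \to 1$ (by L'Hôpital's rule on $(1-a^k)/(1-a) \to k$). Since the lemma is invoked in the paper's analysis only with $a = 1 - 1/k < 1$, this boundary case can either be treated as a side remark or implicitly excluded. Aside from this, the argument requires no clever insight — the proof is a routine two-line induction, and the lemma's role is simply to package the standard solution of a first-order linear recurrence inequality for repeated use in solving the greedy recurrences in the main analysis.
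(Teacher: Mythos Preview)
Your proposal is correct and takes essentially the same approach as the paper: the paper unrolls the recurrence by repeated substitution to obtain $X_k \ge a^k X_0 + b\sum_{i=0}^{k-1} a^i$ and then sums the geometric series, which is just the iterative presentation of your induction (both relying on $a \ge 0$ to preserve the inequality when multiplying). Your remark on the $a=1$ edge case is a nice addition that the paper omits.
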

\begin{proof}
By repeatedly implementing that $X_i \ge aX_{i-1} +b$,
we can bound $X_k$ as follows,
\begin{align*}
X_k &\ge aX_{k-1} +b\\
&\ge a^2X_{k-2} +ab + b\\
&\ldots\\
&\ge a^kX_0 + b\left(\sum_{i=0}^{k-1}a^i\right)\\
&=a^kX_0 + \frac{b\left(1-a^k\right)}{1-a}.\qedhere
\end{align*}
\end{proof}
\begin{lemma}\label{lemma:val-inq}
    \begin{align*}
        & 1-\frac{1}{x}\le \log(x) \le x-1, &\forall x>0\\
        & 1-\frac{1}{x+1}\ge e^{-\frac{1}{x}} , &\forall x\in \mathbb{R}\\
        & (1-x)^{y-1}\ge e^{-xy}, &\forall xy \le 1
    \end{align*}
\end{lemma}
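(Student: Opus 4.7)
The plan is to prove each of the three inequalities via elementary calculus, proceeding in the order listed, since (2) and (3) both reduce to (1). None of these is deep; the overall strategy is simply to exhibit a one-variable function whose derivative is easy to sign.

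For inequality (1), I would define $g(x) = x - 1 - \log(x)$ on $x > 0$ and compute $g'(x) = 1 - 1/x$, which vanishes at $x = 1$, is negative on $(0,1)$, and positive on $(1,\infty)$. Hence $g$ attains its global minimum $g(1) = 0$, giving $\log(x) \le x - 1$. The companion bound $1 - 1/x \le \log(x)$ then follows by substituting $x \mapsto 1/x$: the already-proved inequality yields $\log(1/x) \le 1/x - 1$, so $\log(x) \ge 1 - 1/x$.

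For inequality (2), I would rewrite $1 - \tfrac{1}{x+1} = \tfrac{x}{x+1}$ and take logarithms; the claim reduces to $\log\!\bigl(1 + 1/x\bigr) \le 1/x$, which is exactly the right half of (1) applied to the argument $1 + 1/x$. (The stated domain ``$\forall x \in \mathbb{R}$'' must implicitly be restricted so that both sides are well-defined and the inequality meaningful, which I would take to be $x > 0$.) For inequality (3), I would take logarithms to reduce it to $(y-1)\log(1-x) \ge -xy$ under the hypothesis $xy \le 1$. Applying the left half of (1) to the argument $1 - x$ yields $\log(1-x) \ge -x/(1-x)$. In the principal regime $y \ge 1$, $x \in [0,1)$, multiplying by the nonnegative quantity $y - 1$ preserves the direction of the inequality and gives $(y-1)\log(1-x) \ge -x(y-1)/(1-x)$; the remaining task is to verify $-x(y-1)/(1-x) \ge -xy$, which after clearing the positive denominator rearranges to $xy \le 1$, precisely the hypothesis.

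The main obstacle will be checking (3) over all sign regimes consistent with $xy \le 1$. In the case $y < 1$ the factor $y-1$ is nonpositive, so the direction flips when multiplying $\log(1-x) \ge -x/(1-x)$, and instead one uses the complementary upper bound $\log(1-x) \le -x$ (the right half of (1) at argument $1-x$) to arrive at the required estimate; a symmetric treatment applies if $x < 0$. The arithmetic is routine, but the sign bookkeeping is the one point where a formal proof needs to be written with care, ensuring that the cases exhaust exactly the situations permitted by $xy \le 1$.
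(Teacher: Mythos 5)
The paper never proves this lemma: it is listed among the technical lemmas in Appendix~\ref{apx:lemmata} and used as a collection of standard facts, so there is no in-paper argument to compare against, and your elementary calculus route is exactly the natural one. Your proof of (1) (minimize $x-1-\log x$, then substitute $x\mapsto 1/x$) is correct, and your reduction of (2) to $\log(1+1/x)\le 1/x$ is correct; you are also right that the stated domain ``$\forall x\in\mathbb{R}$'' cannot be taken literally and must be read as $x>0$, which matches every use in the paper (there $x$ is $k$, $\ell$, or a similar positive quantity).

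For (3), your two worked cases are fine: for $0\le x<1$ and $y\ge 1$, the chain $\log(1-x)\ge -x/(1-x)$, multiplication by $y-1\ge 0$, and the rearrangement that turns $-x(y-1)/(1-x)\ge -xy$ into $xy\le 1$ is exactly right; for $0\le x<1$ and $y<1$, using $\log(1-x)\le -x$ gives $(y-1)\log(1-x)\ge x-xy\ge -xy$. The one genuine flaw is the closing claim that ``a symmetric treatment applies if $x<0$'': no treatment can work there, because the inequality is false for negative $x$. For instance $x=-1$, $y=2$ satisfies $xy=-2\le 1$, yet $(1-x)^{y-1}=2<e^{2}=e^{-xy}$. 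This is really an imprecision in the lemma statement rather than in your argument, and the correct fix is the same move you already made for (2): state the implicit restriction $0\le x\le 1$ (the paper only ever invokes (3) with $x=1/k$, $2/k$, $1/\ell$, $1/(\ell+1)$, etc., and $y\ge 1$), at which point the two cases you handled exhaust everything needed and your proof is complete.
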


\section{Query Complexity Analysis of the Continuous Algorithm in \citet{buchbinder2019constrained}}\label{apx:0.385-time}
\begin{algorithm}[h]\caption{Aided Measured Continuous Greedy $(f, P, Z, t_s)$~\citep{buchbinder2019constrained}}\label{alg:amcg} 
    \SetKwInOut{Input}{input}
    \Input{oracle $f$, a solvable down-closed polytope $P$, a set $Z\in \uni$, $t_s \in (0,1)$}
    \tcc{Initialization}
    Let $\bar{\delta}_1 \leftarrow t_s \cdot n^{-4}$ and $\bar{\delta}_2 \leftarrow\left(1-t_s\right) \cdot n^{-4}$\;
    Let $t \gets 0$ and $y(t) \gets \mathbf{1}_{\emptyset}$\;
    \tcc{Growing $y(t)$}
    \While{$t<1$}{\label{line:amcg-while-begin}
        \ForEach{$u \in \uni$}{
        Let $w_u(t)$ be an estimate of $\ex{\marge{u}{R(y(t))}}$ obtained by averaging the values of $\marge{u}{R(y(t))}$ for $r=\left\lceil 48n^6\log(2n) \right\rceil$ independent samples of $R(y(t))$\;\label{line:amcg-query}
        }
        Let $x(t) \leftarrow \begin{cases}\argmax _{x \in P}\left\{\sum_{u \in \uni \backslash Z} w_u(t) \cdot x_u(t)-\sum_{u \in Z} x_u(t)\right\} & \text { if } t \in\left[0, t_s\right), \\ \argmax _{x \in P}\left\{\sum_{u \in \uni} w_u(t) \cdot x_u(t)\right\} & \text { if } t \in\left[t_s, 1\right) .\end{cases}$\;
        Let $\delta_t$ be $\bar{\delta}_1$ when $t < t_s$ and $\bar{\delta}_2$ when $t\ge t_s$\;
        Let $y\left(t+\delta_t\right) \leftarrow y(t)+\delta_t\left(\mathbf{1}_{\uni}-y(t)\right) \circ x(t)$\;
        Update $t\gets t+\delta_t$\;
    }\label{line:amcg-while-end}
    \Return{$y(1)$}
\end{algorithm}
In this section, we analyze the query complexity of 
Aided Measured Continuous Greedy (Alg.~\ref{alg:amcg})
proposed by \citet{buchbinder2019constrained},
which is $\oh{n^{11}\log(n)}$.

In Alg.~\ref{alg:amcg}, queries to the oracle $f$ only occur on Line~\ref{line:amcg-query}.
For each element $u$ in the ground set $\uni$,
$r=\left\lceil 48n^6\log(2n) \right\rceil$ queries are made.
These queries correspond to
$r$ independent samples of $R(y(t))$ to estimate $\ex{\marge{u}{R(y(t))}}$.
Therefore, there are $nr=\oh{n^7\log(n)}$ queries for each iteration of the while loop
(Line~\ref{line:amcg-while-begin}-\ref{line:amcg-while-end}).

Time variable $t$ is increased by $\bar{\delta}_1=t_s \cdot n^{-4}$, when $t < t_s$,
and is increased by $\bar{\delta}_2=\left(1-t_s\right) \cdot n^{-4}$, when $t \ge t_s$.
Thus, there are a total of $2n^{4}$ iterations within the while loop.
In conclusion, the total number of queries made by the algorithm is $\oh{n^{11}\log(n)}$.

\section{Analysis of Randomized Approximation Algorithm, Alg.~\ref{alg:irg}}  \label{apx:irg}
In this section, we provide a detailed analysis of our randomized approximation algorithm
and its two components, \fls and \grg.
This section is organized as follows:
Appendix~\ref{apx:fls} analyzes the theoretical guarantee of single run of \fls (Appendix~\ref{apx:fls-proof}),
which is needed to show that it finds a good guidance set.
Then, although it is not needed for our results, we show the \fls independently
achieves an approximation ratio achieved for \nmon under monotone and non-monotone objectives (Appendix~\ref{apx:fls-ratio}).

In Appendix~\ref{apx:grg}, we provide pseudocode and formally prove the results for \grg. Specifically, we solve the
recurrences for both size and matroid constraint. 

\subsection{Analysis of \fls (Alg.~\ref{alg:fls})}\label{apx:fls}
\begin{algorithm}[t]
    % \SetAlgoLined
    \caption{A fast local search algorithm 
      with query complexity $\oh{kn/\epsi}$.
    }\label{alg:fls}
    \Proc{\fls($f, \iset(\mtr), Z_0, \epsi$)}
    {\textbf{Input:} oracle $f$, matroid constraint $\iset(\mtr)$,
        an approximation result $Z_0$, accuracy parameter $\epsi$\;
        \textbf{Initialize:} $Z \gets Z_0$
        \tcc*[r]{add dummy elements to $Z$ until $|Z|=k$}
        % $e \gets \argmax_{x \in \uni \setminus Z +e_0}\marge{x}{Z}$, $a \gets \arg\min_{x\in Z}\marge{x}{Z-x}$\;
        \While{$\exists a \in Z, e \in \uni \setminus Z, \st Z-a+e \in \iset(\mtr)$ and $ \marge{e}{Z} - \marge{a}{Z \setminus a} \ge \frac{\epsi}{k}\ff{Z}$ \;}{
        % \While{$\exists Z' \in \{Z-a+e\in \iset(\mtr): a \in Z, e \in \uni \setminus Z +e_0\}, \st \marge{e}{Z} - \marge{a}{Z-a} \ge \frac{\epsi}{k}\ff{Z}$\;}{
        % $\exists a \in Z, e \in \uni \setminus Z +e_0, \st \marge{e}{Z} - \marge{a}{Z-a} \ge \frac{\epsi}{k}\ff{Z}$}{
            $Z\gets Z-a+e$ \label{line:fls-rpl}\;
            % $e\gets \arg\max_{x \in \uni \setminus Z +e_0}\marge{x}{Z}$,
            % $a \gets \arg\min_{x\in Z}\marge{x}{Z-x}$\;
        }
        \textbf{return} $Z$
      }
    \end{algorithm}
    Pseudocode for \fls is provided in Alg. \ref{alg:fls}.
\subsubsection{Finding A Good Guidance Set -- Proofs for Lemma~\ref{lemma:fls} of Alg.~\ref{alg:fls} in Section~\ref{sec:fls}}\label{apx:fls-proof}
Recall that \fls (Alg.~\ref{alg:fls}) takes a matroid constraint $\iset(\mtr)$ and
an approximation result $Z_0$ as inputs, and outputs a local optimum $Z$.
Here, we restate the theoretical guarantees of \fls (Lemma~\ref{lemma:fls}).
Using the conclusions drawn from Lemma~\ref{lemma:fls}, 
we demonstrate in Corollary~\ref{cor:fls} that 
$Z$ is a $\left((1+\epsi)\alpha, \alpha\right)$-guidance set.
At the end of this section,
we provide the proof for Lemma~\ref{lemma:fls}.
\lemmafls*
\begin{restatable}{corollary}{corfls}
\label{cor:fls}
Let $Z$ be the solution of \fls$(f, \iset(\mtr), Z_0, \epsi)$.
% where $\epsi > 0$, $(f,\iset(\mtr))$ be an instance of \sm, and
% $Z_0$ is an $\alpha_0$-approximate solution to $(f, \iset(\mtr))$.
If $\ff{Z} < \alpha \opt$, $Z$ is a $\left((1+\epsi)\alpha, \alpha\right)$-guidance set.
\end{restatable}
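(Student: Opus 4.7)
The plan is to instantiate Lemma~\ref{lemma:fls} with the feasible test set $S = O$ and combine the resulting inequality with the hypothesis $\ff{Z} < \alpha\ff{O}$ to verify the required clauses of Definition~\ref{def:guide}. The key observation is that the guarantee already delivered by \fls is stated in exactly the summed form $\ff{S\cup Z} + \ff{S\cap Z} < (2+\epsi)\ff{Z}$, so it lines up with the alternative bound $3'$ of the guidance-set definition without any additional manipulation.

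Concretely, I would first substitute $S = O$ into Lemma~\ref{lemma:fls}, which is legitimate because $O \in \iset(\mtr)$, obtaining
\[\ff{O\cup Z} + \ff{O\cap Z} < (2+\epsi)\ff{Z}.\]
Chaining the hypothesis $\ff{Z} < \alpha\ff{O}$ onto the right-hand side then yields
\[\ff{O\cup Z} + \ff{O\cap Z} < (2+\epsi)\,\alpha\,\ff{O} = \bigl((1+\epsi)\alpha + \alpha\bigr)\ff{O},\]
which is precisely condition $3'$ of Definition~\ref{def:guide} with parameters $\alpha' := (1+\epsi)\alpha$ and $\beta' := \alpha$. Condition $1$ is immediate from the hypothesis, since $\ff{Z} < \alpha\ff{O} < (1+\epsi)\alpha\ff{O} = \alpha'\ff{O}$ whenever $\epsi > 0$. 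Hence $Z$ qualifies as a $((1+\epsi)\alpha,\alpha)$-guidance set.

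I do not foresee any real obstacle: essentially all the analytic work has already been done inside Lemma~\ref{lemma:fls}, and the corollary is just a repackaging of that bound into the terminology of Definition~\ref{def:guide}. The only subtlety worth flagging is the bookkeeping that identifies $(2+\epsi)\alpha$ with the sum $(1+\epsi)\alpha + \alpha$, so that the pair $((1+\epsi)\alpha,\alpha)$ is the correct one to claim; a tighter split such as $(\alpha,\alpha)$ would fail because Lemma~\ref{lemma:fls} carries an extra additive $\epsi$ on the right-hand side that must be absorbed somewhere.
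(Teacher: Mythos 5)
Your first instantiation with $S = O$ is the right move and correctly delivers condition $3'$ of Definition~\ref{def:guide}, and your observation that condition $1$ follows immediately from the hypothesis is fine. However, the definition of an $(\alpha',\beta')$-guidance set also requires condition $2$, namely $\ff{O \cap Z} \le \alpha' \ff{O} = (1+\epsi)\alpha\,\ff{O}$, and your proof never verifies this. It is not implied by $1$ and $3'$: dropping the nonnegative term $\ff{O\cup Z}$ from $3'$ only yields $\ff{O\cap Z} \le (2+\epsi)\alpha\,\ff{O}$, which is too weak. The paper closes this gap by a second invocation of Lemma~\ref{lemma:fls}, this time with $S = O \cap Z$ (legitimate because $O \cap Z \subseteq O$ and $\iset(\mtr)$ is hereditary). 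Since $(O\cap Z)\cup Z = Z$ and $(O\cap Z)\cap Z = O \cap Z$, that application collapses to $\ff{Z} + \ff{O\cap Z} < (2+\epsi)\ff{Z}$, i.e.\ $\ff{O\cap Z} < (1+\epsi)\ff{Z} < (1+\epsi)\alpha\,\ff{O}$, which is exactly condition $2$. This condition is genuinely used downstream: the proof of Theorem~\ref{thm:irg} bounds $\ff{O\cap Z}$ by $(1+\epsi)\alpha\ff{O}$ with a coefficient of its own, separate from $\ff{O\cup Z}$, so it cannot be dropped from the corollary. Adding the second instantiation $S = O\cap Z$ repairs the proof and brings it into alignment with the paper's.
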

\begin{proof}[Proof of Corollary~\ref{cor:fls}]
By Lemma~\ref{lemma:fls}, let $S = O$ and $S = O \cap Z$ respectively,
it holds that
\begin{align*}
&\ff{O\cup Z} + \ff{O\cap Z} < (2+\epsi)\ff{Z},\\
&\ff{O\cap Z} < (1+\epsi)\ff{Z}.
\end{align*}
If $\ff{Z} < \alpha \opt$, then
\begin{align*}
&\ff{O\cup Z} + \ff{O\cap Z} < (2+\epsi)\alpha \opt,\\
&\ff{O\cap Z} < (1+\epsi)\alpha \opt.
\end{align*}
By Definition~\ref{def:guide},
$Z$ is a $\left((1+\epsi)\alpha, \alpha\right)$-guidance set.
\end{proof}
In the following, we prove Lemma~\ref{lemma:fls} for Alg.~\ref{alg:fls}.
\begin{proof}[Proof of Lemma~\ref{lemma:fls}]
\textbf{Query Complexity.}
For each successful replacement of elements on Line~\ref{line:fls-rpl},
it holds that $\marge{e}{Z} - \marge{a}{Z-a} \ge
     \frac{\epsi}{k}\ff{Z}$.
By submodularity, 
\[\ff{Z-a+e}-\ff{Z} = \marge{e}{Z-a}-\marge{a}{Z-a}
\ge \marge{e}{Z} - \marge{a}{Z-a} \ge
     \frac{\epsi}{k}\ff{Z}.\]
Hence, the oracle value $f(Z)$ is increased 
by a factor of at least $(1+\epsi/k)$ after the swap.
Therefore, there are at most 
$\left\lceil\log_{1+\epsi/k}\brk{\frac{\ff{O}}{\ff{Z_0}}}\right\rceil$ iterations,
since otherwise, it would entail $\ff{Z} > \ff{O}$,
which contradicts the fact that $O$ is the optimal solution.
Then, since the algorithm makes at most $\oh{n}$ queries at each iteration,
the query complexity can be bounded as follows,
\begin{align*}
\text{\# queries}& \le \oh{n \left\lceil\log_{1+\epsi/k}\brk{\frac{\ff{O}}{\ff{Z_0}}}\right\rceil} \le \oh{n\frac{k}{\epsi}\log\left(\frac{1}{\alpha}\right)} \tag{$\ff{Z_0}\ge \alpha \ff{O}$; Lemma~\ref{lemma:val-inq}}
\end{align*}

\textbf{Objective Value.}
For any $S\in \iset(\mtr)$,
we consider adding dummy elements into $S$ until $|S| = k$.
By Lemma~\ref{lemma:mtr-dummy},
we consider $Z$ and $S$ are bases of matroid $\mtr$ with or without dummy elements.
Then, by Lemma~\ref{lemma:mtr-bij},
there exits a bijection $\sigma: S\setminus Z \to Z \setminus S$
such that $Z+e-\sigma(e)$ is a basis for all $e \in S\setminus Z$.
After the algorithm terminates, 
for every $e \in S\setminus Z$,
it holds that,
$\marge{e}{Z} - \marge{\sigma(e)}{Z-\sigma(e)} < \frac{\epsi}{k} \ff{Z}$.
% Consider adding dummy elements to $S$ if $|S| < k$.
% Let $\ell = |Z\setminus S|$, $\{a_1, \ldots, a_\ell\} = Z\setminus S$,
% $\{e_1, \ldots, e_{|S\setminus Z|}\} =S\setminus Z$,
% and $\{e_{|S\setminus Z|+1}, \ldots, e_\ell\} \subseteq E$.
Then,
\begin{align*}
   \epsi \ff{Z}
    > \sum_{e\in S\setminus Z} \marge{e}{Z}-\sum_{a \in Z\setminus S}\marge{a}{Z-a}
\end{align*}
Let $\ell = |S\setminus Z| = |Z\setminus S|$, $S\setminus Z = \{e_1, \ldots, e_\ell\}$,
$Z\setminus S = \{a_1, \ldots, a_\ell\}$.
By submodularity,
\begin{align*}
    \sum_{e\in S\setminus Z} \marge{e}{Z} 
    &=  \brk{\ff{Z+e_1}-\ff{Z}} + \brk{\ff{Z+e_2}-\ff{Z}} + \ldots + \brk{\ff{Z+e_\ell}-\ff{Z}}\\
    &\ge \brk{\ff{Z+e_1+e_2}-\ff{Z}} + \brk{\ff{Z+e_3}-\ff{Z}} + \ldots + \brk{\ff{Z+e_\ell}-\ff{Z}}\\
    &\ge \ldots \\
    &\ge \ff{Z+e_1+\ldots+e_\ell} - \ff{Z} = \ff{S \cup Z} - \ff{Z}. \text{ Also by submodularity,}\\
    \sum_{a\in Z\setminus S} \marge{a}{Z-a}
    &= \sum_{i=1}^\ell \marge{a_i}{Z-a_i}
    \le \sum_{i=1}^\ell \marge{a_i}{Z-a_1-\ldots -a_i}
    = \ff{Z}-\ff{S\cap Z}
\end{align*}
Thus,
\begin{align*}
&\epsi \ff{Z} > \ff{S\cup Z}-\ff{Z}-\brk{\ff{Z}-\ff{S\cap Z}}\\
\Rightarrow & (2+\epsi)\ff{Z} > \ff{S\cup Z} + \ff{S\cap Z}.
\end{align*}
\end{proof}

\subsubsection{Approximation Ratio achieved by \fls}\label{apx:fls-ratio}
In this section, we show that \fls can be employed independently to achieves approximation
ratios of nearly $1/2$ and $1/4$ for both the monotone and non-monotone versions
of the problem, respectively.

\textbf{Monotone submodular functions.}
By employing \fls once, it returns an $\frac{1}{2+\epsi}$-approximation result
in monotone cases.
\begin{theorem}
    Let $\epsi > 0$, and let $(f,\iset(\mtr))$ be an instance of \sm,
    where $f$ is monotone.
    The input set $Z_0$ is an $\alpha_0$-approximate solution to $(f, \iset(\mtr))$.
    \fls (Alg.~\ref{alg:fls}) returns a solution $Z$ 
    such that $\ff{Z} \ge \ff{O}/(2+\epsi)$
    with $\oh{kn\log(1/\alpha_0)/\epsi}$ queries.
\end{theorem}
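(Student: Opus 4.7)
The plan is to obtain this result as a direct corollary of Lemma~\ref{lemma:fls}, which has already been proven for general (not necessarily monotone) $f$, combined with the extra structure provided by monotonicity. The query complexity claim is immediate from Lemma~\ref{lemma:fls}, since it is the same bound stated there; I would simply invoke that part of the lemma without further work.

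For the approximation guarantee, I would instantiate Lemma~\ref{lemma:fls} with $S = O$, the optimal solution (which is certainly feasible, i.e., in $\iset(\mtr)$). This yields
\[
f(O \cup Z) + f(O \cap Z) < (2+\epsi) f(Z).
\]
Now I would use two simple properties to lower bound the left hand side. First, by monotonicity of $f$ and $O \subseteq O \cup Z$, we have $f(O \cup Z) \ge f(O)$. Second, by nonnegativity of $f$, we have $f(O \cap Z) \ge 0$. Combining these with the displayed inequality gives
\[
f(O) \le f(O \cup Z) + f(O \cap Z) < (2+\epsi) f(Z),
\]
and rearranging yields $f(Z) > f(O)/(2+\epsi)$, which is the desired bound.

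There is essentially no significant obstacle here: the monotone case exists to illustrate that \fls{} is also of independent interest as a standalone local search, and the heavy lifting (the swap-based increase argument and the bijection argument on matroid bases used to derive $f(S\cup Z)+f(S\cap Z)<(2+\epsi)f(Z)$) has already been done in the proof of Lemma~\ref{lemma:fls}. The only subtlety worth noting is that the appeal to monotonicity is what allows us to drop the $f(O \cap Z)$ term safely and directly recover $f(O)$, a step which is unavailable in the non-monotone setting and which is precisely why a second run of \fls{} is required there to achieve the $1/4$ ratio mentioned in the paper.
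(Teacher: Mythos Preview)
Your proposal is correct and matches the paper's own proof essentially line for line: instantiate Lemma~\ref{lemma:fls} with $S=O$, then use monotonicity for $f(O\cup Z)\ge f(O)$ and nonnegativity for $f(O\cap Z)\ge 0$ to conclude $(2+\epsi)f(Z)>f(O)$. The query complexity is likewise inherited directly from the lemma, exactly as you say.
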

\begin{proof}
    By Lemma~\ref{lemma:fls}, set $S = O$,
    it holds that
    \[(2+\epsi)\ff{Z} > \ff{O\cup Z}+\ff{O\cap Z}\ge \ff{O},\]
    where the last inequality follows by monotonicity and non-negativity.
\end{proof}

\textbf{Non-monotone submodular functions.}
For the non-monotone problem, $2$ repetitions of \fls (Alg.~\ref{alg:fls-non}) yields a ratio of $\frac{1}{4+2\epsi}$.
The theoretical guarantees and the corresponding analysis are provided as follows. We remark that this is
a primitive implementation of the guiding idea: the second run of \fls avoids the output of the first one. 
\begin{algorithm}[h]\label{alg:fls-non}
   \caption{An $1/(4+2\epsi)$-approximation algorithm with $\oh{kn/\epsi}$}
     \textbf{Input:} oracle $f$, constraint $\iset$, an approximation result $Z_0$, switch point $t$, error rate $\epsi$\;
     $Z_1 \gets \fls(f, \uni, \iset, Z_0, \epsi)$
     \tcc*[r]{run \fls with ground set $\uni$}
     $Z_2 \gets \fls(f, \uni\setminus Z_1, \iset, Z_0, \epsi)$
     \tcc*[r]{run \fls with ground set $\uni\setminus Z_1$}
     \textbf{return} $Z\gets \argmax\{\ff{Z_1}, \ff{Z_2}\}$\;
\end{algorithm}
\begin{theorem}
    Let $\epsi > 0$, and let $(f,\iset(\mtr))$ be an instance of \sm,
    where $f$ is not necessarily monotone.
    The input set $Z_0$ is an $\alpha_0$-approximate solution to $(f, \iset(\mtr))$.
    Alg.~\ref{alg:fls-non} returns a solution $Z$ 
    such that $\ff{Z} \ge \ff{O}/(4+2\epsi)$
    with $\oh{kn\log(1/\alpha_0)/\epsi}$ queries.
\end{theorem}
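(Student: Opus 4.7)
The plan is to apply Lemma~\ref{lemma:fls} once for each of the two \fls invocations in Alg.~\ref{alg:fls-non}, and then combine the two resulting inequalities via submodularity to bound $\ff{O}$ in terms of $\ff{Z_1}+\ff{Z_2}$, which is at most $2\max\{\ff{Z_1},\ff{Z_2}\} = 2\ff{Z}$. The query complexity follows immediately: each call is $\oh{kn\log(1/\alpha_0)/\epsi}$ by Lemma~\ref{lemma:fls}, and the two sum to the same bound.

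For the first invocation (ground set $\uni$, output $Z_1$), I would take $S = O$ in Lemma~\ref{lemma:fls} to obtain
\[\ff{O \cup Z_1} + \ff{O \cap Z_1} < (2+\epsi)\ff{Z_1}.\]
For the second invocation (restricted ground set $\uni \setminus Z_1$, output $Z_2 \subseteq \uni \setminus Z_1$), the correct choice of feasible test set is $S = O \setminus Z_1$, which lies in $\iset(\mtr)$ by the hereditary property and satisfies $S \subseteq \uni \setminus Z_1$. Because $Z_2 \cap Z_1 = \emptyset$, we have $(O\setminus Z_1)\cap Z_2 = O\cap Z_2$, so Lemma~\ref{lemma:fls} gives
\[\ff{(O \setminus Z_1) \cup Z_2} + \ff{O \cap Z_2} < (2+\epsi)\ff{Z_2}.\]

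Next, I would apply submodularity to the pair $A = O \cup Z_1$ and $B = (O \setminus Z_1) \cup Z_2$; a direct set computation (again using $Z_1 \cap Z_2 = \emptyset$) yields $A \cup B = O \cup Z_1 \cup Z_2$ and $A \cap B = O \setminus Z_1$, so
\[\ff{O \cup Z_1} + \ff{(O \setminus Z_1) \cup Z_2} \ge \ff{O \cup Z_1 \cup Z_2} + \ff{O \setminus Z_1}.\]
A second, easier application of submodularity to the disjoint union $O = (O \cap Z_1) \cup (O \setminus Z_1)$, together with non-negativity of $f$, gives $\ff{O \cap Z_1} + \ff{O \setminus Z_1} \ge \ff{O}$. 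Summing the two Lemma~\ref{lemma:fls} inequalities, substituting these two submodularity bounds, and dropping the non-negative terms $\ff{O \cup Z_1 \cup Z_2}$ and $\ff{O \cap Z_2}$ produces $\ff{O} < (2+\epsi)(\ff{Z_1}+\ff{Z_2}) \le (4+2\epsi)\ff{Z}$, as desired.

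The only real obstacle is selecting the right feasible set for the second application of Lemma~\ref{lemma:fls}. Choosing $S = O$ directly would be illegal, since $O$ need not be contained in $\uni \setminus Z_1$; but $S = O \setminus Z_1$ is both feasible in the restricted matroid \emph{and}, crucially, produces union/intersection terms that align with the first run's inequality so that the two submodularity lower bounds telescope into $\ff{O}$. Once this alignment is identified, the remaining steps are routine bookkeeping with the submodular inequality $f(A)+f(B)\ge f(A\cup B)+f(A\cap B)$ and non-negativity of $f$.
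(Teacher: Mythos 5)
Your proof is correct and follows essentially the same route as the paper's: apply Lemma~\ref{lemma:fls} to the first call with $S=O$ and to the second call with $S=O\setminus Z_1$, then sum the two inequalities and use submodularity plus non-negativity to telescope down to $\ff{O}$. You are slightly more explicit than the paper in justifying the feasibility of $O\setminus Z_1$ under the restricted ground set and in computing $(O\cup Z_1)\cap\bigl((O\setminus Z_1)\cup Z_2\bigr)=O\setminus Z_1$; you also correctly use the factor $(4+2\epsi)$ throughout, whereas the paper's displayed chain contains a small typo ($(4+\epsi)$) that does not match its own theorem statement.
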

\begin{proof}
    By repeated application of Lemma~\ref{lemma:fls} for
    the two calls of \fls in Alg.~\ref{alg:fls-non},
    it holds that 
    \begin{align*}
        &\ff{O\cup Z_1}+\ff{O\cap Z_1} < (2+\epsi)\ff{Z_1}\\
        &\ff{(O\setminus Z_1)\cup Z_2}+\ff{(O\setminus Z_1)\cap Z_2} < (2+\epsi)\ff{Z_2}
    \end{align*}
    By summing up the above two inequalities, it holds that
    \begin{align*}
        (4+\epsi)\ff{Z}&\ge (2+\epsi)\ff{Z_1}+(2+\epsi)\ff{Z_2}\\
        &\ge \ff{O\cup Z_1}+\ff{O\cap Z_1}+\ff{(O\setminus Z_1)\cup Z_2}+\ff{(O\setminus Z_1)\cap Z_2}\\
        &\ge \ff{O\cup Z_1}+\ff{(O\setminus Z_1)\cup Z_2}+\ff{O\cap Z_1}\tag{nonnegativity}\\
        &\ge \ff{O\setminus Z_1} + \ff{O\cap Z_1}\tag{submodularity}\\
        &\ge \ff{O}. \tag{submodularity}
    \end{align*}
\end{proof}

\subsection{Pseudocode of \grg (Alg.~\ref{alg:grg}) and its Analysis}\label{apx:grg}
\begin{algorithm}[h]
   \caption{An algorithm guided by an $(\alpha,\beta)$-guidance set $Z$ with $\oh{kn}$ queries}\label{alg:grg}
     \Proc{\grg($f, \iset, Z, t$)}{
     \textbf{Input:} oracle $f$, constraint $\iset$, guidance set $Z$, switch point $t$\;
     $A_{0} \gets k$ dummy elements
     \tcc*[r]{Equivalent to an empty set}
     \For{$i \gets 1$ to $ k$}
        { \textbf{if} $i \le t\cdot k$ \textbf{then}
          $M_i \gets \argmax_{M\subseteq \uni \setminus (A_{i-1}\cup Z), M \text{ is a basis}} \sum_{a\in M}\marge{a}{A_{i-1}}$\label{line:grg-candidate1}\;
          \textbf{else} $M_i \gets \argmax_{M\subseteq \uni \setminus A_{i-1}, M  \text{ is a basis}} \sum_{a\in M}\marge{a}{A_{i-1}}$\label{line:grg-candidate2}\;
        \uIf{$\iset$ represents the size constraint}{
            $a_i \gets$ randomly pick an element from $M_i$\;
            $A_{i} \gets A_{i-1} +a_i-e_0$
            \tcc*[r]{$e_0$ is the dummy element}
         }
         \uElseIf{$\iset$ represents the matroid constraint}{
            $\sigma_i\gets $ a bijection from $M_i$ to $A_{i-1}$,
            where $A_{i-1}+x-\func{\sigma_i}{x} \in \iset(\mtr), \forall x \in M_i$ \;
            $e_i \gets$ randomly pick an element from $M_i$\;
            $A_{i} \gets A_{i-1} + e_i-\sigma_i(e_i)$\;
         }
         }
     \textbf{return} $A_k$}
\end{algorithm}
In this section, we present the pseudocode for \grg as Alg.~\ref{alg:grg}.
Then, we provide the detailed proof of Lemma~\ref{lemma:grg3-size}
in Appendix~\ref{apx:grg-size}, which addresses size constraints.
Finally, we analyze the algorithm under matroid constraints and
provide the guarantees and its analysis in Appendix~\ref{apx:mc}.
\subsubsection{\grg under Size Constraints}\label{apx:grg-size}
In Sec.~\ref{sec:grg}, we introduce the intuition behind \grg under size constraint.
Below, we reiterate theoretical guarantees achieved by \grg
under size constraints and provide the detailed analysis. 
\lemmagrgsize*
We provide the recurrence of $\ff{(O\setminus Z)\cup A_i}$, $\ff{O\cup A_i}$
and $\ff{A_i}$ in Lemmata~\ref{lemma:grg1} and~\ref{lemma:grg2}
and their analysis below to help prove Lemma~\ref{lemma:grg3-size} under size constraint.
\begin{lemma}\label{lemma:grg1}
% \textit{(Degradation)}
When $0 <i \le t\cdot k$, it holds that
\begin{align*}
&\ex{\ff{(O\setminus Z) \cup A_i}} \ge \left(1-\frac{1}{k}\right)\ex{\ff{(O\setminus Z)\cup A_{i-1}}}
+\frac{1}{k}\left[\ff{O\setminus Z} - \ff{O\cup Z}\right],\\
&\ex{f\left(O \cup A_i\right)} \ge \left(1-\frac{1}{k}\right)\ex{\ff{O\cup A_{i-1}}}
+\frac{1}{k}\left[\ff{O}-\ff{O\cup Z}\right].
\end{align*}
When $t\cdot k < i \le k$, it holds that
\begin{align*}
\ex{f\left(O \cup A_i\right)} \ge \left(1-\frac{1}{k}\right)\ex{\ff{O\cup A_{i-1}}}
\end{align*}
\end{lemma}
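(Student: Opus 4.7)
The plan is to derive each recurrence by conditioning on the pair $(A_{i-1}, M_i)$, obtaining a per-step bound from the uniform random choice of $a_i$ from $M_i$, and then lower bounding the resulting ``union term'' $\ff{S \cup A_{i-1} \cup M_i}$ either via nonnegativity (stage 2) or via a submodularity exchange against $O \cup Z$ (stage 1).

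First, for an arbitrary fixed set $S$, I would establish the single-step inequality
\[
\ex{\ff{S \cup A_i} \mid A_{i-1}, M_i} \ge \left(1 - \tfrac{1}{k}\right)\ff{S \cup A_{i-1}} + \tfrac{1}{k}\ff{S \cup A_{i-1} \cup M_i}.
\]
Since $a_i$ is uniform over $M_i$ (which has size $k$), the conditional expectation equals $\tfrac{1}{k}\sum_{a \in M_i}\ff{S \cup A_{i-1} + a}$. Rewriting each summand as $\ff{S \cup A_{i-1}} + \marge{a}{S \cup A_{i-1}}$ and applying the submodular telescoping bound $\sum_{a \in M_i} \marge{a}{S \cup A_{i-1}} \ge \ff{S \cup A_{i-1} \cup M_i} - \ff{S \cup A_{i-1}}$ produces the claim. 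I would then specialize to $S = O$ and to $S = O \setminus Z$.

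Next, in stage 1 ($i \le tk$), I would lower bound $\ff{S \cup A_{i-1} \cup M_i}$ using the submodular inequality $\ff{A} + \ff{B} \ge \ff{A \cup B} + \ff{A \cap B}$ with $A = S \cup A_{i-1} \cup M_i$ and $B = O \cup Z$. The critical structural fact is the invariant $A_{i-1} \cap Z = \emptyset$ throughout stage 1, which follows by a short induction since every element added so far was drawn from $\uni \setminus (A_{j-1} \cup Z)$ on Line~\ref{line:grg-candidate1}; together with $M_i \cap Z = \emptyset$ by construction, this gives $A \cap B = S$ both when $S = O$ and when $S = O \setminus Z$. Combined with nonnegativity of $\ff{A \cup B}$, this yields $\ff{S \cup A_{i-1} \cup M_i} \ge \ff{S} - \ff{O \cup Z}$. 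Substituting into the per-step bound and taking expectation over $(A_{i-1}, M_i)$ gives both stage-1 recurrences.

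Finally, in stage 2 ($tk < i \le k$), $M_i$ no longer avoids $Z$, so the intersection argument above breaks; however, nonnegativity alone gives $\ff{O \cup A_{i-1} \cup M_i} \ge 0$, which combined with the per-step inequality for $S = O$ yields the stated recurrence. The main obstacle is the intersection computation in stage 1: one must be careful that the cross terms $A_{i-1} \cap (O \cup Z)$ and $M_i \cap (O \cup Z)$ both reduce to subsets of $S$ (namely of $O$ and of $O \setminus Z$ in the two cases), so that $A \cap B$ collapses exactly to $S$ rather than something strictly larger, and I expect the bookkeeping there to be where the argument needs the most care.
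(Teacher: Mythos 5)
Your proposal is correct and follows the same route as the paper's proof: condition on $A_{i-1}$ and $M_i$, apply the submodular telescoping bound to get the per-step inequality with coefficient $(1-1/k)$ and the union term $\ff{S\cup A_{i-1}\cup M_i}$, then (in stage~1) expand that union term against $O\cup Z$ via $\ff{A}+\ff{B}\ge\ff{A\cup B}+\ff{A\cap B}$, use the invariants $A_{i-1}\cap Z=\emptyset$ and $M_i\cap Z=\emptyset$ to identify $A\cap B$ as exactly $S$, and finish with nonnegativity; in stage~2, apply nonnegativity directly. The only presentational difference is that you factor out a generic single-step lemma parameterized by $S$ before specializing to $S=O$ and $S=O\setminus Z$, whereas the paper carries out the two chains of inequalities in parallel, but this is the same argument in slightly cleaner packaging.
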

\begin{proof}
At iteration $i$, condition on a given $A_{i-1}$.
When $i \le tk$, $A_{i-1}\cap Z = \emptyset$ and $M_i$ is selected out of $A_{i-1}\cup Z$,
so
\begin{equation}
  (O \cup Z)\cap ((O\setminus Z) \cup A_{i-1}\cup M_i) = O\setminus Z
\end{equation}
\begin{equation}
  ((O \cup Z)\cap (O \cup A_{i-1}\cup M_i) = O.
\end{equation}
Then,
\begin{align*}
&\exc{\ff{(O\setminus Z)\cup A_i}}{A_{i-1}} = \frac{1}{k}\sum_{x\in M_i}\ff{(O\setminus Z)\cup A_{i-1}\cup \{x\}} \tag{selection of next element}\\
&\ge \frac{1}{k} \sbrk{(k-1)\ff{(O\setminus Z)\cup A_{i-1}} + \ff{(O\setminus Z) \cup A_{i-1}\cup M_i}} \tag{submodularity}\\
&\ge \frac{1}{k} \sbrk{(k-1)\ff{(O\setminus Z)\cup A_{i-1}} + \ff{O\setminus Z}+\ff{O\cup Z \cup A_{i-1}\cup M_i}-\ff{O \cup Z}} \tag{submodularity}\\
&\ge \frac{1}{k} \sbrk{(k-1)\ff{(O\setminus Z)\cup A_{i-1}} + \ff{O\setminus Z}-\ff{O \cup Z}} \tag{nonnegativity}\\
&\exc{\ff{O\cup A_i}}{A_{i-1}} = \frac{1}{k}\sum_{x\in M_i}\ff{O\cup A_{i-1}\cup \{x\}}\\
&\ge \frac{1}{k} \sbrk{(k-1)\ff{O\cup A_{i-1}} + \ff{O \cup A_{i-1}\cup M_i}} \tag{submodularity}\\
&\ge \frac{1}{k} \sbrk{(k-1)\ff{O\cup A_{i-1}} + \ff{O} +\ff{O\cup Z \cup A_{i-1}\cup M_i}- \ff{O \cup Z}} \tag{submodularity}\\
&\ge \frac{1}{k} \sbrk{(k-1)\ff{O\cup A_{i-1}} + \ff{O} - \ff{O \cup Z}} \tag{nonnegativity}
\end{align*}
When $i > tk$, it holds that
\begin{align*}
&\exc{\ff{O\cup A_i}}{A_{i-1}} = \frac{1}{k}\sum_{x\in M_i}\ff{O\cup A_{i-1}\cup \{x\}}\\
&\ge \frac{1}{k} \sbrk{(k-1)\ff{O\cup A_{i-1}} + \ff{O \cup A_{i-1}\cup M_i}} \tag{submodularity}\\
&\ge \brk{1-\frac{1}{k}} \ff{O\cup A_{i-1}} \tag{nonnegativity}
\end{align*}
By unconditioning $A_{i-1}$, the lemma is proved.
\end{proof}
\begin{lemma}\label{lemma:grg2}
% \textit{(Recursion)}
When $0 <i \le t\cdot k$, it holds that
\[\ex{\ff{A_i}} -\ex{\ff{A_{i-1}}} \ge \frac{1}{k}\brk{\ex{\ff{(O\setminus Z)\cup A_{i-1}}}-\ex{\ff{A_{i-1}}}}.\]
When $t\cdot k < i \le k$, it holds that
\[\ex{\ff{A_i}} -\ex{\ff{A_{i-1}}} \ge \frac{1}{k}\brk{\ex{\ff{O\cup A_{i-1}}}-\ex{\ff{A_{i-1}}}}.\]
\end{lemma}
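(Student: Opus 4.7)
The plan is to prove each of the two inequalities by the standard two-step greedy-gain argument: condition on $A_{i-1}$, use the greedy choice of $M_i$ together with submodularity, and then unconditionalize. The key distinction between the two regimes is simply the ground set from which $M_i$ is drawn, which will change the reference optimum from $O\setminus Z$ to $O$.

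First, I would fix $A_{i-1}$ and observe that $a_i$ is drawn uniformly from the size-$k$ set $M_i$, so
\begin{equation*}
\exc{\ff{A_i}-\ff{A_{i-1}}}{A_{i-1}} = \frac{1}{k}\sum_{x\in M_i}\marge{x}{A_{i-1}}.
\end{equation*}
Next, I would identify a competing size-$k$ candidate set whose total marginal gain lower bounds $\sum_{x\in M_i}\marge{x}{A_{i-1}}$ by the definition of $M_i$. In the guided phase $i\le tk$, elements are drawn from $\uni\setminus(A_{i-1}\cup Z)$; since the algorithm has only added elements from $\uni\setminus Z$, we have $A_{i-1}\cap Z=\emptyset$, so $(O\setminus Z)\setminus A_{i-1}$ is contained in the allowed ground set and has size at most $k$. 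Padding it with dummy elements (which contribute zero marginal gain) gives a competitor of size exactly $k$, and the greedy choice of $M_i$ on Line~\ref{line:grg-candidate1} yields $\sum_{x\in M_i}\marge{x}{A_{i-1}} \ge \sum_{x\in (O\setminus Z)\setminus A_{i-1}}\marge{x}{A_{i-1}}$. In the unguided phase $i>tk$, the analogous argument with $O\setminus A_{i-1}$ as the competitor and Line~\ref{line:grg-candidate2} yields $\sum_{x\in M_i}\marge{x}{A_{i-1}} \ge \sum_{x\in O\setminus A_{i-1}}\marge{x}{A_{i-1}}$.

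Finally, I would telescope using submodularity (exactly as in the proof of Lemma~\ref{lemma:fls}): for any set $T$ with $T\cap A_{i-1}=\emptyset$,
\begin{equation*}
\sum_{x\in T}\marge{x}{A_{i-1}} \ge \ff{T\cup A_{i-1}}-\ff{A_{i-1}}.
\end{equation*}
Applying this with $T=(O\setminus Z)\setminus A_{i-1}$ in the guided phase gives
$\sum_{x\in M_i}\marge{x}{A_{i-1}} \ge \ff{(O\setminus Z)\cup A_{i-1}}-\ff{A_{i-1}}$, and with $T=O\setminus A_{i-1}$ in the unguided phase gives the bound with $O$ replacing $O\setminus Z$. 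Dividing by $k$, combining with the conditional expectation above, and then taking expectation over $A_{i-1}$ yields both claimed recurrences.

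There is no real obstacle here; the only point that requires care is verifying that $(O\setminus Z)\setminus A_{i-1}$ is a legal competitor in the guided phase, which rests on the invariant $A_{i-1}\cap Z=\emptyset$ maintained by the algorithm whenever $i\le tk$. Everything else is standard submodular calculus identical in spirit to the \rg{} analysis of \citet{buchbinder2014submodular}.
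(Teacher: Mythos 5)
Your proposal is correct and follows essentially the same argument as the paper's proof: condition on $A_{i-1}$, lower-bound $\sum_{x\in M_i}\marge{x}{A_{i-1}}$ by the marginal gains over $O\setminus(A_{i-1}\cup Z)$ (respectively $O\setminus A_{i-1}$) via the greedy choice of $M_i$, telescope with submodularity to get $f((O\setminus Z)\cup A_{i-1})-f(A_{i-1})$ (respectively $f(O\cup A_{i-1})-f(A_{i-1})$), and uncondition. Your explicit mention of dummy-element padding and the invariant $A_{i-1}\cap Z=\emptyset$ are details the paper leaves implicit but are consistent with it.
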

\begin{proof}
Given $A_{i-1}$ at iteration $i$.
When $i \le t\cdot k$, it holds that
\begin{align*}
&\exc{\ff{A_i}- \ff{A_{i-1}}}{A_{i-1}} = \frac{1}{k}\sum_{x \in M_i} \marge{x}{A_{i-1}}\\
&\ge \frac{1}{k}\sum_{x \in O\setminus \brk{A_{i-1} \cup Z}} \marge{x}{A_{i-1}}\tag{Line~\ref{line:grg-candidate1} in Alg.~\ref{alg:grg}}\\
&\ge \frac{1}{k}\sbrk{\ff{(O \setminus Z)\cup A_{i-1}}-\ff{A_{i-1}}}. \tag{submodularity}
\end{align*}
When $i > t\cdot k$, it holds that
\begin{align*}
&\exc{\ff{A_i}- \ff{A_{i-1}}}{A_{i-1}} = \frac{1}{k}\sum_{x \in M_i} \marge{x}{A_{i-1}}\\
&\ge \frac{1}{k}\sum_{x \in O\setminus A_{i-1} } \marge{x}{A_{i-1}}\tag{Line~\ref{line:grg-candidate2} in Alg.~\ref{alg:grg}}\\
&\ge \frac{1}{k}\sbrk{\ff{O \cup A_{i-1}}-\ff{A_{i-1}}}.\tag{submodularity}
\end{align*}
By unconditioning $A_{i-1}$, the lemma is proved.
\end{proof}

\begin{proof}[Proof of Lemma~\ref{lemma:grg3-size}]
It follows from Lemma~\ref{lemma:grg1} and the closed form
for a recurrence provided in Lemma~\ref{lemma:recurrence} that
\begin{equation}\label{inq:grg3-size-O}
\left\{
\begin{aligned}
&\ex{\ff{(O\setminus Z) \cup A_i}} \ge \ff{O\setminus Z} - \left(1-\left(1-\frac{1}{k}\right)^i\right)\ff{O\cup Z}, &\forall 0<i\le tk\\
&\ex{\ff{O \cup A_i}} \ge \left(1-\frac{1}{k}\right)^{i-\lfloor tk \rfloor}\left[\ff{O} - \left(1-\left(1-\frac{1}{k}\right)^{\lfloor tk \rfloor}\right)\ff{O\cup Z}\right], &\forall tk<i\le k\\
\end{aligned}
\right.
\end{equation}
With the above inequalities,
% and the closed form for a recurrence provided in Lemma~\ref{lemma:recurrence}
we can solve the recursion in Lemma~\ref{lemma:grg2} as follows,
\begin{align*}
&\ex{\ff{A_{\lfloor tk \rfloor}}}
% \ge \left(1-\frac{1}{k}\right)\ex{\ff{A_{\lfloor tk \rfloor-1}}} + 
% \frac{1}{k}\left(\ff{O\setminus Z} - \left(1-\left(1-\frac{1}{k}\right)^{i-1}\right)\ff{O\cup Z}\right)\\
\stackrel{(a)}{\ge} \left(1-\left(1-\frac{1}{k}\right)^{\lfloor tk \rfloor}\right)\ff{O\setminus Z}-
\left(1-\left(1-\frac{1}{k}\right)^{\lfloor tk \rfloor} - t\left(1-\frac{1}{k}\right)^{\lfloor tk \rfloor-1}\right)\ff{O\cup Z}\\
% \end{align*}
% \begin{align*}
&\ex{\ff{A_{k}}}
% \ge \left(1-\frac{1}{k}\right)\ex{\ff{A_{k-1}}} + 
% \frac{1}{k}\left(1-\frac{1}{k}\right)^{(1-t)k-1}\left(\ff{O} - \left(1-\left(1-\frac{1}{k}\right)^{tk}\right)\ff{O\cup Z}\right)\\
\stackrel{(b)}{\ge} \left(1-\frac{1}{k}\right)^{k-\lfloor tk \rfloor}\ex{\ff{A_{\lfloor tk \rfloor}}}
+(1-t)\left(1-\frac{1}{k}\right)^{k-\lfloor tk \rfloor-1}\left[\ff{O}-\left(1-\left(1-\frac{1}{k}\right)^{\lfloor tk \rfloor}\right)\ff{O\cup Z}\right]\\
&\ge (1-t)\left(1-\frac{1}{k}\right)^{k-\lfloor tk \rfloor-1} \ff{O}
+\left(\left(1-\frac{1}{k}\right)^{k-\lfloor tk \rfloor}-\left(1-\frac{1}{k}\right)^{k}\right)\ff{O\setminus Z}\\
&\hspace{2em} - \left(\left(1+\frac{1-t}{1-\frac{1}{k}}\right)\left(1-\frac{1}{k}\right)^{k-\lfloor tk \rfloor}-\left(2-\frac{1}{k}\right)\left(1-\frac{1}{k}\right)^{k-1}\right)\ff{O\cup Z}\tag{Inequality $(a)$}\\
&\ge (1-t)\left(1-\frac{1}{k}\right)^{(1-t)k} \ff{O}
+\left(\left(1-\frac{1}{k}\right)^{(1-t)k+1}-\left(1-\frac{1}{k}\right)^{k}\right)\ff{O\setminus Z}\\
&\hspace{2em} - \left(\left(1+\frac{1-t}{1-\frac{1}{k}}\right)\left(1-\frac{1}{k}\right)^{(1-t)k}-\left(2-\frac{1}{k}\right)\left(1-\frac{1}{k}\right)^{k-1}\right)\ff{O\cup Z}\tag{$tk-1 < \lfloor tk \rfloor\le tk$}\\
&\ge (1-t)\left(1-\frac{1}{k}\right)e^{t-1}\ff{O}+\left(\left(1-\frac{1}{k}\right)^2e^{t-1}-e^{-1}\right)\ff{O\setminus Z}\\
&\hspace{2em} - \left(\left(1+\frac{1-t}{1-\frac{1}{k}}\right)e^{t-1}-\left(2-\frac{1}{k}\right)e^{-1}\right)\ff{O\cup Z}
\tag{nonnegativity;Lemma~\ref{lemma:val-inq}}\\
&\ge (1-t)\left(1-\frac{1}{k}\right)e^{t-1}\ff{O}+\left(\left(1-\frac{1}{k}\right)^2e^{t-1}-e^{-1}\right)\left(\ff{O}-\ff{O\cap Z}\right)\\
&\hspace{2em} - \left(\left(1+\frac{1-t}{1-\frac{1}{k}}\right)e^{t-1}-\left(2-\frac{1}{k}\right)e^{-1}\right)\ff{O\cup Z}
\tag{submodularity}\\
& = \left(\left(2-t-\frac{1}{k}\right)\left(1-\frac{1}{k}\right)e^{t-1}-e^{-1}\right)\ff{O}-\left(\left(1-\frac{1}{k}\right)^2e^{t-1}-e^{-1}\right)\ff{O\cap Z}\\
&\hspace{2em}- \left(\left(1+\frac{1-t}{1-\frac{1}{k}}\right)e^{t-1}-\left(2-\frac{1}{k}\right)e^{-1}\right)\ff{O\cup Z},
% & = \theta \ff{O} - \eta\ff{O\cap Z} - \mu \ff{ O \cup Z } 
\end{align*}
% where 
% \begin{align*}
% &\theta = \left(2-t-\frac{1}{k}\right)e^{t-1}-e^{-1},\\
% &\eta = \left(1-\frac{1}{k}\right)e^{t-1}-e^{-1},\\
% &\mu = \left(1+\frac{1-t}{1-\frac{1}{k}}\right)e^{t-1}-\left(2-\frac{1}{k}\right)e^{-1}.
% \end{align*}
where Inequality $(a)$ and $(b)$ follow from Inequality~\ref{inq:grg3-size-O},
Lemma ~\ref{lemma:grg2} and~\ref{lemma:recurrence}.
\end{proof}
\subsubsection{\grg under Matroid Constraints}\label{apx:mc}
\begin{algorithm}[h]\caption{\rg for Matroid}\label{alg:rgm} 
    \Proc{\rg$(f, \mtr)$}{
     \textbf{Input:} oracle $f$, matroid constraint $\mtr$\;
     \textbf{Initialize:} $A_0\gets$ arbitrary basis in $\iset(\mtr)$\;
     \For{$i \gets 1$ to $ k$}{
        $M_i\gets \argmax_{S\subseteq \uni, S \text{ is a basis}}\sum_{x\in S}\marge{x}{A_{i-1}}$\;
        $\sigma \gets$ a bijection from $M_i$ to $A_{i-1}$\;
        $x_i \gets$ a uniformly random element from $M_i$\;
        $A_i\gets A_{i-1}+x_i-\sigma(x_i)$\;
     }
     \textbf{return} $A_k$ \;}
 \end{algorithm}
\textbf{Discussion about Intuition behind \grg under Matroid Constraints.}
The pseudocode for \rg under matroid constraints
is provided in Alg.~\ref{alg:rgm}.
To deal with the feasibility for matroid constraints,
Alg.~\ref{alg:rgm} starts with an arbitrary basis
and builds the solution by randomly swapping the elements
in ground set with a candidate basis.
The analysis of it proceeds according to two main recurrences.
\begin{align*}
&\text{1)} \; \ex{\ff{A_i} -\ff{A_{i-1}}} \ge \frac{1}{k}\ex{\ff{O\cup A_{i-1}}-2\ff{A_{i-1}}}, \\
&\text{2)} \; \ex{\ff{O\cup A_i}}\ge \left(1-\frac{2}{k}\right)\ex{\ff{O\cup A_{i-1}}}+\frac{1}{k}\ex{\ff{O}+\ff{O\cup A_{i-1}\cup M_i}}.
\end{align*}
Fig.~\ref{fig:grgm-deg}(a) depicts the worse-case behavior
of $\ex{\ff{A_i}}$ and $\ex{\ff{O\cup A_i}}$.
As discussed in Section~\ref{sec:grg},
we consider improving the degradation of $\ex{\ff{O\cup A_i}}$
by selecting elements from outside of an $(\alpha+\epsi, \alpha)$-guidance set $Z$
to enhance the lower bound of $\ex{\ff{O\cup A_{i-1}\cup M_i}}$.
The blue line in Fig.~\ref{fig:grgm-deg}(b) illustrated 
the improvement of $\ex{\ff{O\cup A_i}}$ with an $(\alpha+\epsi, \alpha)$-guidance set.
However, restricting the selection only to elements outside $Z$
restricts the increase in $\ex{\ff{A_i}}$ to the difference between
$\ex{\ff{(O\setminus Z)\cup A_{i-1}}}$ and $\ex{\ff{A_{i-1}}}$.
This restriction is illustrated by
the red line in Figure~\ref{fig:grgm-deg}(b), indicating
degradation in $\ex{\ff{(O\setminus Z)\cup A_{i}}}$.

To benefit from the improved degradation of $\ex{\ff{O\cup A_i}}$,
we consider transitioning to selecting elements from the whole ground set at a suitable point.
The blue line in Fig.~\ref{fig:grgm-deg}(b) illustrates
how $\ex{\ff{O\cup A_i}}$ degrades before and after we switch,
and the orange line illustrates the evolution of $\ex{\ff{A_i}}$.
Even starting with an inferior selection at the first stage,
we still get an overall improvement on the objective value.

We provide the updated recursions for $\ff{(O\setminus Z)\cup A_i}$,
$\ff{O\cup A_i}$ and $\ff{A_i}$ in Lemma~\ref{lemma:grg1-m} and \ref{lemma:grg2-m} below.
Then, the closed form of the solution value, derived from these lemmata,
is presented in Lemma~\ref{lemma:grg3-matroid}.
After that, we prove the approximation ratio of the randomized algorithm
under matroid constraint.
\begin{figure}
    \centering
    \includegraphics[width=\linewidth]{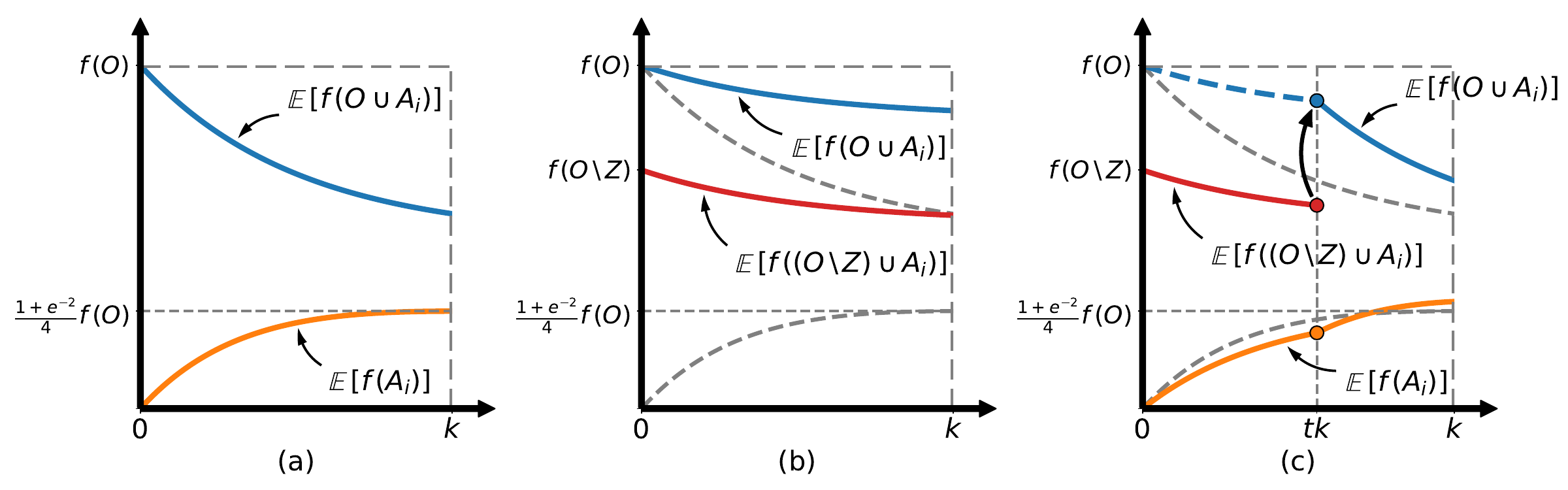}
    \vspace*{-1em}
    \caption{
    This set of figures indicates how guiding benefits \rg under matroid constraints.
    The figure (a) depicts the evolution of $\ff{O\cup A_i}$ and $\ff{A_i}$ with \rg.
    The figure (b) illustrates how the degradation of $\ff{O\cup A_i}$ changes as we introduce
    an $(0.305+\epsi, 0.305)$-guidance set. 
    Additionally, we also need to consider the degradation of $\ff{(O\setminus Z)\cup A_i}$,
    which is the value that the solution approaches with the guidance.
    The figure (c) shows the updated degradation with a switch point $tk$, where the algorithm
    starts with guidance and then switches to running without guidance.
    It demonstrates that even though the value of $A_i$ decreases initially
    when the selection starts outside of $Z$,
    it benefits from the improved degradation of $\ff{O\cup A_i}$
    upon switching back to the original algorithm.}
    \label{fig:grgm-deg}
    \vspace*{-1em}
\end{figure}
\begin{lemma}\label{lemma:grg1-m}
% \textit{(Degradation)}
When $0 <i \le t\cdot k$, it holds that
\begin{align*}
&\ex{f\left((O\setminus Z) \cup A_i\right)} \ge \left(1-\frac{2}{k}\right)\ex{\ff{(O\setminus Z)\cup A_{i-1}}}
+\frac{1}{k}\left[2\ff{O\setminus Z} - \ff{O\cup Z}\right],\\
&\ex{f\left(O \cup A_i\right)} \ge \left(1-\frac{2}{k}\right)\ex{\ff{O\cup A_{i-1}}}
+\frac{1}{k}\left[2\ff{O}-\ff{O\cup Z}\right].
\end{align*}
When $t\cdot k < i \le k$, it holds that
\begin{align*}
\ex{f\left(O \cup A_i\right)} \ge \left(1-\frac{2}{k}\right)\ex{\ff{O\cup A_{i-1}}}+\frac{1}{k}\ff{O}
\end{align*}
\end{lemma}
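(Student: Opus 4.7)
The plan is to parallel the proof of Lemma~\ref{lemma:grg1}, adjusting for the matroid update rule $A_i = A_{i-1} + e_i - \sigma_i(e_i)$ in Alg.~\ref{alg:grg}. Fix $S\in\{O,\,O\setminus Z\}$. Conditioning on $A_{i-1}$, I would first expand
\begin{equation*}
\exc{\ff{S\cup A_i}}{A_{i-1}} = \frac{1}{k}\sum_{x\in M_i}\ff{S\cup A_{i-1}+x-\sigma_i(x)},
\end{equation*}
and then apply the swap form of submodularity -- pairing $S\cup A_{i-1}+x-\sigma_i(x)$ with $S\cup A_{i-1}$ in the four-point inequality $\ff{X}+\ff{Y}\geq \ff{X\cup Y}+\ff{X\cap Y}$, whose union is $S\cup A_{i-1}+x$ and intersection is $S\cup A_{i-1}-\sigma_i(x)$ -- to obtain
\begin{equation*}
\ff{S\cup A_{i-1}+x-\sigma_i(x)} \ge \ff{S\cup A_{i-1}+x}+\ff{S\cup A_{i-1}-\sigma_i(x)}-\ff{S\cup A_{i-1}}.
\end{equation*}

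Averaging over $x\in M_i$ and using the bijection $\sigma_i: M_i\to A_{i-1}$ to reindex the removal sum as a sum over $y\in A_{i-1}$, I would then invoke the standard ingredients of the \citet{buchbinder2014submodular} \rg{}-for-matroid analysis: an addition-side bound $\frac{1}{k}\sum_{x\in M_i}\ff{S\cup A_{i-1}+x}\ge (1-\tfrac{1}{k})\ff{S\cup A_{i-1}}+\tfrac{1}{k}\ff{S\cup A_{i-1}\cup M_i}$ from subadditivity of marginals (as in Lemma~\ref{lemma:grg1}), together with a removal-side bound $\frac{1}{k}\sum_{y\in A_{i-1}}\ff{S\cup A_{i-1}-y}\ge (1-\tfrac{1}{k})\ff{S\cup A_{i-1}}+\tfrac{1}{k}\ff{S}$ that contributes the second $-\tfrac{1}{k}\ff{S\cup A_{i-1}}$ discount. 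Subtracting the extra $\ff{S\cup A_{i-1}}$ from the swap identity yields the unguided baseline
\begin{equation*}
\exc{\ff{S\cup A_i}}{A_{i-1}} \ge \Bigl(1-\tfrac{2}{k}\Bigr)\ff{S\cup A_{i-1}} + \tfrac{1}{k}\bigl[\ff{S}+\ff{S\cup A_{i-1}\cup M_i}\bigr],
\end{equation*}
which is the matroid analogue of the $(1-\tfrac{1}{k})$ recurrence in Lemma~\ref{lemma:grg1}.

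The guidance then enters only through the term $\ff{S\cup A_{i-1}\cup M_i}$, exactly as in the size-constrained case. In stage 1, the choice $M_i\subseteq \uni\setminus(A_{i-1}\cup Z)$ propagates $A_{i-1}\cap Z=M_i\cap Z=\emptyset$, so for $S\in\{O,O\setminus Z\}$ the intersection $(S\cup A_{i-1}\cup M_i)\cap(O\cup Z)$ collapses to $S$ (the $A_{i-1}$- and $M_i$-contributions to this intersection lie in $O\setminus Z\subseteq S$). Applying $\ff{X}+\ff{Y}\ge \ff{X\cup Y}+\ff{X\cap Y}$ to the pair $(S\cup A_{i-1}\cup M_i,\, O\cup Z)$ and invoking nonnegativity of $\ff{O\cup Z\cup A_{i-1}\cup M_i}$ gives $\ff{S\cup A_{i-1}\cup M_i}\ge \ff{S}-\ff{O\cup Z}$, and substituting into the baseline yields both stage-1 inequalities. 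In stage 2, $M_i$ ranges over all bases of $\uni\setminus A_{i-1}$; the baseline still holds, and nonnegativity alone gives $\ff{O\cup A_{i-1}\cup M_i}\ge 0$, producing the stage-2 inequality. Unconditioning on $A_{i-1}$ completes the proof.

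The main technical obstacle will be the removal-side bound, which has no counterpart in Lemma~\ref{lemma:grg1} (where iterations only add elements). To prove it cleanly, I would pair $S\cup A_{i-1}-y$ with $A_{i-1}$ in the four-point inequality -- whose union is $S\cup A_{i-1}$ and intersection is $A_{i-1}-y$ -- giving $\ff{S\cup A_{i-1}-y}\ge \ff{S\cup A_{i-1}}+\ff{A_{i-1}-y}-\ff{A_{i-1}}$, and then use nonnegativity and a further submodular pairing to recover the $\tfrac{1}{k}\ff{S}$ residual. The critical check is that this step remains compatible with the guidance: since the removal only touches $A_{i-1}$, which is disjoint from $Z$ throughout stage 1, the addition-side guidance argument is untouched, and the two sides combine cleanly.
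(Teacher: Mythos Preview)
Your proposal is correct and follows essentially the same route as the paper: the same conditional expansion over $x\in M_i$, the same submodular swap splitting each term into an addition piece $\marge{x}{S\cup A_{i-1}}$ and a removal piece $\ff{S\cup(A_{i-1}-\sigma_i(x))}$, the same addition-side bound via iterated submodularity, and the same guidance step via the four-point inequality with $O\cup Z$ (using $A_{i-1}\cap Z=M_i\cap Z=\emptyset$) followed by nonnegativity.

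One minor wrinkle: the specific route you sketch in the last paragraph for the removal-side bound --- pairing $S\cup(A_{i-1}-y)$ with $A_{i-1}$ --- does not cleanly recover the $\tfrac{1}{k}\ff{S}$ residual. That pairing produces $\ff{S\cup(A_{i-1}-y)}\ge \ff{S\cup A_{i-1}}+\ff{A_{i-1}-y}-\ff{A_{i-1}}$, and the leftover sum $\sum_y[\ff{A_{i-1}-y}-\ff{A_{i-1}}]$ has no direct relation to $\ff{S}$. The paper (and the standard \rg{}-for-matroid argument) instead iterates the four-point inequality \emph{within} the family $\{S\cup(A_{i-1}-y)\}_{y}$: pairing $S\cup(A-a_1)$ with $S\cup(A-a_2)$ gives union $S\cup A$ and intersection $S\cup(A-a_1-a_2)$, and telescoping down through all of $A_{i-1}$ yields $\sum_y \ff{S\cup(A_{i-1}-y)}\ge (k-1)\ff{S\cup A_{i-1}}+\ff{S}$ directly, with no need for nonnegativity or a second pairing. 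Since you already invoke this bound correctly in the body of the argument and attribute it to the standard analysis, this is a cosmetic issue in your elaboration, not a gap in the proof.
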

\begin{proof}
When $i \le tk$, it holds that
\begin{align*}
&\exc{\ff{(O\setminus Z)\cup A_i}}{A_{i-1}} = \frac{1}{k}\sum_{x\in M_i}\ff{(O\setminus Z)\cup \brk{A_{i-1}+x-\sigma_i(x)}}\\
&\ge \frac{1}{k}\sum_{x\in M_i}\sbrk{\marge{x}{(O\setminus Z)\cup A_{i-1}}+\ff{(O\setminus Z)\cup \brk{A_{i-1}-\sigma_i(x)}}}\tag{submodularity}\\
&\ge \frac{1}{k} \sbrk{\ff{(O\setminus Z) \cup A_{i-1}\cup M_i}-\ff{(O\setminus Z)\cup A_{i-1}}+ (k-1)\ff{(O\setminus Z)\cup A_{i-1}} + \ff{O\setminus Z}} \tag{submodularity}\\
&\ge \frac{1}{k} \sbrk{(k-2)\ff{(O\setminus Z)\cup A_{i-1}} + 2\ff{O\setminus Z}-\ff{O \cup Z}} \tag{submodularity; nonnegativity}\\
&\exc{\ff{O\cup A_i}}{A_{i-1}} = \frac{1}{k}\sum_{x\in M_i}\ff{O\cup \brk{A_{i-1}+x-\sigma_i(x)}}\\
&\ge \frac{1}{k}\sum_{x\in M_i}\sbrk{\marge{x}{O\cup A_{i-1}}+\ff{O\cup \brk{A_{i-1}-\sigma_i(x)}}} \tag{submodularity}\\
&\ge \frac{1}{k} \sbrk{\ff{O \cup A_{i-1}\cup M_i} - \ff{O\cup A_{i-1}} + (k-1)\ff{O\cup A_{i-1}} +\ff{O}} \tag{submodularity}\\
&\ge \frac{1}{k} \sbrk{(k-2)\ff{O\cup A_{i-1}} + 2\ff{O} - \ff{O \cup Z}} \tag{submodularity; nonnegativity}
\end{align*}

When $i > tk$, it holds that
\begin{align*}
&\exc{\ff{O\cup A_i}}{A_{i-1}} = \frac{1}{k}\sum_{x\in M_i}\ff{O\cup \brk{A_{i-1}+x-\sigma_i(x)}}\\
&\ge \frac{1}{k}\sum_{x\in M_i}\sbrk{\marge{x}{O\cup A_{i-1}}+\ff{O\cup \brk{A_{i-1}-\sigma_i(x)}}} \tag{submodularity}\\
&\ge \frac{1}{k} \sbrk{\ff{O \cup A_{i-1}\cup M_i} - \ff{O\cup A_{i-1}} + (k-1)\ff{O\cup A_{i-1}} +\ff{O}} \tag{submodularity}\\
&\ge \frac{1}{k} \sbrk{(k-2)\ff{O\cup A_{i-1}} + \ff{O} } \tag{nonnegativity}
\end{align*}
By unconditioning $A_{i-1}$, the lemma is proved.
\end{proof}

\begin{lemma}\label{lemma:grg2-m}
% \textit{(Recursion)}
When $0 <i \le t\cdot k$, it holds that
\[\ex{\ff{A_i}} -\ex{\ff{A_{i-1}}} \ge \frac{1}{k}\brk{\ex{\ff{(O\setminus Z)\cup A_{i-1}}}-2\ex{\ff{A_{i-1}}}}.\]
When $t\cdot k < i \le k$, it holds that
\[\ex{\ff{A_i}} -\ex{\ff{A_{i-1}}} \ge \frac{1}{k}\brk{\ex{\ff{O\cup A_{i-1}}}-2\ex{\ff{A_{i-1}}}}.\]
\end{lemma}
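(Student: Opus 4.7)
The plan is to mirror the size-constraint proof (Lemma~\ref{lemma:grg2}) while accounting for the swap structure $A_i = A_{i-1} + e_i - \sigma_i(e_i)$ that is forced by the matroid exchange. First, I would condition on $A_{i-1}$. Since $e_i$ is drawn uniformly from $M_i$, the conditional expected gain equals
\begin{equation*}
\exc{\ff{A_i}-\ff{A_{i-1}}}{A_{i-1}} = \frac{1}{k}\sum_{x \in M_i}\brk{\ff{A_{i-1}+x-\sigma_i(x)} - \ff{A_{i-1}}}.
\end{equation*}
For each $x \in M_i$, setting $y = \sigma_i(x)$ and inserting and subtracting $\ff{A_{i-1}-y}$ gives
\begin{equation*}
\ff{A_{i-1}+x-y}-\ff{A_{i-1}} = \marge{x}{A_{i-1}-y} - \marge{y}{A_{i-1}-y} \ge \marge{x}{A_{i-1}} - \marge{y}{A_{i-1}-y},
\end{equation*}
where the last inequality uses submodularity and $A_{i-1}-y \subseteq A_{i-1}$.

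Next I would sum over $x \in M_i$ and rewrite the loss term using the bijection as $\sum_{y \in A_{i-1}} \marge{y}{A_{i-1}-y}$. A standard telescoping argument via submodularity bounds this by $\ff{A_{i-1}}$: ordering $A_{i-1} = \{y_1,\ldots,y_k\}$, submodularity yields $\marge{y_j}{A_{i-1}-y_j} \le \marge{y_j}{\{y_1,\ldots,y_{j-1}\}}$, so the sum telescopes to $\ff{A_{i-1}} - \ff{\emptyset} \le \ff{A_{i-1}}$ by nonnegativity.

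What I expect to be the main obstacle is lower-bounding $\sum_{x \in M_i} \marge{x}{A_{i-1}}$ by $\ff{(O\setminus Z)\cup A_{i-1}} - \ff{A_{i-1}}$ when $i \le tk$, and by $\ff{O \cup A_{i-1}} - \ff{A_{i-1}}$ when $i > tk$. Using the optimality of $M_i$ on Lines~\ref{line:grg-candidate1} and~\ref{line:grg-candidate2}, it suffices to exhibit a basis of the restricted ground set $\uni \setminus (A_{i-1}\cup Z)$ (resp.~$\uni \setminus A_{i-1}$) that contains $(O\setminus Z)\setminus A_{i-1}$ (resp.~$O\setminus A_{i-1}$). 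Here I would invoke Lemma~\ref{lemma:mtr-dummy}: because the ground set contains $k$ dummy elements of zero marginal contribution, any independent subset can be padded to a basis while remaining inside the restricted ground set. Since $(O\setminus Z)\setminus A_{i-1} \subseteq O$ is independent, this padding is legitimate, and the candidate basis's total marginal equals $\sum_{x \in (O\setminus Z)\setminus A_{i-1}} \marge{x}{A_{i-1}}$, which in turn is at least $\ff{(O\setminus Z)\cup A_{i-1}} - \ff{A_{i-1}}$ by the usual submodular decomposition used in Lemma~\ref{lemma:grg2}. The stage-2 argument is identical with $Z$ removed.

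Combining these pieces gives, for $i \le tk$,
\begin{equation*}
\exc{\ff{A_i}-\ff{A_{i-1}}}{A_{i-1}} \ge \frac{1}{k}\brk{\ff{(O\setminus Z)\cup A_{i-1}} - 2\ff{A_{i-1}}},
\end{equation*}
and the analogous bound with $O$ in place of $O\setminus Z$ for $i > tk$. Unconditioning on $A_{i-1}$ yields the two stated recurrences. The only non-routine step is the dummy-element reduction that turns the local optimality of $M_i$ over matroid bases into the desired comparison with subsets of $O$; every other step is a direct application of submodularity and the bijection guaranteed by Lemma~\ref{lemma:mtr-bij}.
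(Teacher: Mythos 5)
Your proposal is correct and follows essentially the same route as the paper's proof: conditioning on $A_{i-1}$, decomposing the swap gain via submodularity into $\marge{x}{A_{i-1}} - \marge{\sigma_i(x)}{A_{i-1}-\sigma_i(x)}$, telescoping the loss term to $\ff{A_{i-1}}$, and appealing to the optimality of $M_i$ against the basis obtained by padding $O\setminus(A_{i-1}\cup Z)$ (resp.\ $O\setminus A_{i-1}$) with dummy elements. The paper states the dummy-padding observation in prose rather than as an explicit invocation of Lemma~\ref{lemma:mtr-dummy}, but the argument is the same.
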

\begin{proof}
Given $A_{i-1}$ at iteration $i$.
When $i \le t\cdot k$, 
since $O$ is a base,
$O\setminus (A_{i-1}\cup Z)$ with dummy elements is also a base.
It holds that
\begin{align*}
&\exc{\ff{A_i}- \ff{A_{i-1}}}{A_{i-1}} = 
\frac{1}{k}\sum_{x \in M_i}\sbrk{\ff{A_{i-1}+x-\sigma_i(x)}-\ff{A_{i-1}}}\\
&\ge \frac{1}{k}\sum_{x \in M_i} \sbrk{\marge{x}{A_{i-1}} + \ff{A_{i-1}-\sigma_i(x)} - \ff{A_{i-1}}} \tag{submodularity}\\
&\ge \frac{1}{k}\sum_{x \in O\setminus (A_{i-1}\cup Z)} \marge{x}{A_{i-1}} + \frac{1}{k}\sum_{x \in M_i}\sbrk{\ff{A_{i-1}-\sigma_i(x)} - \ff{A_{i-1}}}\tag{Line~\ref{line:grg-candidate1} in Alg.~\ref{alg:grg}}\\
&\ge \frac{1}{k}\sbrk{\ff{(O \setminus Z)\cup A_{i-1}}-\ff{A_{i-1}}} - \frac{1}{k} \ff{A_{i-1}}. \tag{submodularity}
\end{align*}
When $i > t\cdot k$, it holds that
\begin{align*}
&\exc{\ff{A_i}- \ff{A_{i-1}}}{A_{i-1}} = 
\frac{1}{k}\sum_{x \in M_i}\sbrk{\ff{A_{i-1}+x-\sigma_i(x)}-\ff{A_{i-1}}}\\
&\ge \frac{1}{k}\sum_{x \in M_i} \sbrk{\marge{x}{A_{i-1}} + \ff{A_{i-1}-\sigma_i(x)} - \ff{A_{i-1}}} \tag{submodularity}\\
&\ge \frac{1}{k}\sum_{x \in O} \marge{x}{A_{i-1}} + \frac{1}{k}\sum_{x \in M_i}\sbrk{\ff{A_{i-1}-\sigma_i(x)} - \ff{A_{i-1}}}\tag{Line~\ref{line:grg-candidate2} in Alg.~\ref{alg:grg}}\\
&\ge \frac{1}{k}\sbrk{\ff{O\cup A_{i-1}}-\ff{A_{i-1}}} - \frac{1}{k} \ff{A_{i-1}}. \tag{submodularity}
\end{align*}
\end{proof}
\begin{restatable}{lemma}{lemmagrgmatroid}
\label{lemma:grg3-matroid}
With an input matroid constraint $\iset$ and 
a $((1+\epsi)\alpha, \alpha)$-guidance set $Z$,
\grg returns set $A_k$ with $\oh{kn}$ queries, \st
$\ex{\ff{A_k}}\ge\frac{1}{2} \left(\frac{1}{2}+\left(\frac{3}{2}-t-\frac{1}{k}\right)\brk{1-\frac{2}{k}}e^{2(t-1)}-e^{-2} - (1+\epsi)\alpha\left(\brk{1-\frac{2}{k}}^2e^{2(t-1)} - e^{-2}\right)\right.$
$\left. -\alpha\left(\brk{\frac{1}{2}+\frac{1-t}{1-\frac{2}{k}}}e^{2(t-1)}-\brk{\frac{3}{2}-\frac{1}{k}}e^{-2}\right)\right)\ff{O}.$
\end{restatable}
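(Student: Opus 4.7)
The plan is to solve the coupled two-phase recurrences from Lemmata~\ref{lemma:grg1-m} and~\ref{lemma:grg2-m} using Lemma~\ref{lemma:recurrence}, and then combine with the guidance-set bounds and the continuous-limit estimates from Lemma~\ref{lemma:val-inq}. The structure closely mirrors the proof of Lemma~\ref{lemma:grg3-size} for the size-constrained case; the main quantitative difference is that the geometric factor is $(1-2/k)$ instead of $(1-1/k)$, so closed-form sums acquire an extra factor of $1/2$ and the limit is $e^{-2}$ rather than $e^{-1}$, which explains the overall $1/2$ prefactor in the statement.

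First I close the guided-phase recurrences from Lemma~\ref{lemma:grg1-m}. Each has the form $X_i \ge (1-2/k) X_{i-1} + c/k$ with a constant forcing term, so Lemma~\ref{lemma:recurrence} directly yields, for $0 < i \le \lfloor tk\rfloor$,
\[
  \ex{\ff{(O\setminus Z)\cup A_i}} \ge \ff{O\setminus Z} - \tfrac{1}{2}\ff{O\cup Z}\bigl(1-(1-2/k)^i\bigr),
\]
and the analogous bound with $O\setminus Z$ replaced by $O$. For the unguided phase, I apply Lemma~\ref{lemma:recurrence} once more to the $\ex{\ff{O\cup A_i}}$ recurrence with the value at $i=\lfloor tk\rfloor$ as initial condition, producing an expression affine in $\ff{O}$ and $\ff{O\cup Z}$.

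Next I substitute these closed forms into Lemma~\ref{lemma:grg2-m}. In phase~1, with $\ff{A_0}=0$, the recurrence $\ex{\ff{A_i}} \ge (1-2/k)\ex{\ff{A_{i-1}}} + (1/k)\ex{\ff{(O\setminus Z)\cup A_{i-1}}}$ has a constant forcing piece and a piece proportional to $(1-2/k)^{i-1}$; telescoping gives a geometric sum from the constant piece and an $i(1-2/k)^{i-1}$ term from the decaying piece. Evaluating at $i=\lfloor tk\rfloor$ supplies the initial value for phase~2, where an analogous telescoping carries through to $i=k$. I then use submodularity to replace $\ff{O\setminus Z}$ by $\ff{O}-\ff{O\cap Z}$ so the bound becomes linear in $\ff{O}$, $\ff{O\cap Z}$ and $\ff{O\cup Z}$, and substitute the guidance-set bounds $\ff{O\cap Z}\le (1+\epsi)\alpha\ff{O}$ and $\ff{O\cup Z}\le \alpha\ff{O}$.

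The final step is to pass to the continuous limit using Lemma~\ref{lemma:val-inq}, giving $(1-2/k)^{(1-t)k}\ge e^{2(t-1)}$ and $(1-2/k)^k\ge e^{-2}$; boundary factors of $(1-2/k)$ picked up from replacing $\lfloor tk\rfloor$ by $tk$ are absorbed into the advertised prefactors $(1-2/k)$ and $(1-2/k)^2$ in the statement. The main obstacle is the bookkeeping in the phase-2 telescoping: the $\lfloor tk\rfloor (1-2/k)^{\lfloor tk\rfloor -1}$ term inherited from phase~1 interacts with a fresh $(k-\lfloor tk\rfloor)(1-2/k)^{k-1-\lfloor tk\rfloor}$ term generated by phase~2's own nonconstant forcing (coming from the $\tfrac12\ff{O\cup Z}(1-(1-2/k)^{\lfloor tk\rfloor})$ piece in $\ex{\ff{O\cup A_{i-1}}}$). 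Aligning all $(1-2/k)^{*}$ powers before applying the exponential estimates is what yields exactly the coefficient $(3/2-t-1/k)(1-2/k)e^{2(t-1)}-e^{-2}$ on $\ff{O}$ together with the two matching coefficients on $\ff{O\cap Z}$ and $\ff{O\cup Z}$; once that is done, the lemma follows.
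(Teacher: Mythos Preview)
Your proposal is correct and follows essentially the same approach as the paper: solve the phase-1 and phase-2 recurrences of Lemma~\ref{lemma:grg1-m} via Lemma~\ref{lemma:recurrence}, feed the resulting closed forms into Lemma~\ref{lemma:grg2-m} and telescope through both phases, replace $\ff{O\setminus Z}$ by $\ff{O}-\ff{O\cap Z}$ via submodularity, apply the $(1-2/k)^{\bullet}\to e^{-2\bullet/k}$ estimates from Lemma~\ref{lemma:val-inq}, and finally plug in the guidance-set bounds. The only cosmetic difference is ordering (the paper passes to the exponential limit before invoking submodularity, whereas you do the reverse), which is immaterial.
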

\begin{proof}
% [Proof of Lemma~\ref{lemma:grg3-matroid}]
It follows from Lemma~\ref{lemma:grg1-m}
and the closed form
for a recurrence provided in Lemma~\ref{lemma:recurrence} that
\begin{equation}\label{inq:grg3-m-O}
\left\{
\begin{aligned}
&\ex{\ff{(O\setminus Z) \cup A_i}} \ge \ff{O\setminus Z} - \frac{1}{2}\left(1-\left(1-\frac{2}{k}\right)^i\right)\ff{O\cup Z}, &\forall 0<i\le tk\\
&\ex{\ff{O \cup A_i}} \ge \frac{1}{2}\left(1+\left(1-\frac{2}{k}\right)^{i-\lfloor tk \rfloor}\right)\ff{O} - \frac{1}{2}\left(\left(1-\frac{2}{k}\right)^{i-\lfloor tk \rfloor}-\left(1-\frac{2}{k}\right)^{i}\right)\ff{O\cup Z}, &\forall tk<i\le k\\
\end{aligned}
\right.
\end{equation}
Then, by sovling the recursion in Lemma~\ref{lemma:grg2-m} with the above inequalities, it holds that
\begin{align*}
&\ex{\ff{A_{\lfloor tk \rfloor}}}\stackrel{(a)}{\ge} \frac{1}{2}\left[1-\left(1-\frac{2}{k}\right)^{\lfloor tk \rfloor}\right]\ff{O\setminus Z} - \left[\frac{1}{4}-\frac{1}{4}\left(1-\frac{2}{k}\right)^{\lfloor tk \rfloor}-\frac{t}{2}\left(1-\frac{2}{k}\right)^{\lfloor tk \rfloor-1}\right]\ff{O\cup Z}\\
&\ex{\ff{A_k}} \stackrel{(b)}{\ge} \left(1-\frac{2}{k}\right)^{k-\lfloor tk \rfloor}\ex{\ff{A_{\lfloor tk \rfloor}}}
+\frac{1}{2}\left[\frac{1}{2}+\brk{\frac{1}{2}-t+\frac{1}{k}}\brk{1-\frac{2}{k}}^{k-\lfloor tk \rfloor-1}\right]f(O)\\
&\hspace{5em}-\frac{1-t}{2}\left[\brk{1-\frac{2}{k}}^{k-\lfloor tk \rfloor-1} 
- \brk{1-\frac{2}{k}}^{k-1}\right]\ff{O\cup Z}\\
&\ge \frac{1}{2} \left[\left(\brk{1-\frac{2}{k}}^{k-\lfloor tk \rfloor} - \brk{1-\frac{2}{k}}^k\right)\ff{O\setminus Z}
+\left(\frac{1}{2}+\brk{\frac{1}{2}-t+\frac{1}{k}}\brk{1-\frac{2}{k}}^{k-\lfloor tk \rfloor-1}\right)f(O)\right.\\
&\hspace{2em}\left.-\left(\brk{\frac{1}{2}+\frac{1-t}{1-\frac{2}{k}}}\brk{1-\frac{2}{k}}^{k-\lfloor tk \rfloor}-\brk{\frac{3}{2}-\frac{1}{k}}\brk{1-\frac{2}{k}}^{k-1}\right)\right]\ff{O\cup Z}\tag{Inequality $(a)$}\\
&\ge \frac{1}{2} \left[\left(\brk{1-\frac{2}{k}}^{(1-t)k+1} - \brk{1-\frac{2}{k}}^k\right)\ff{O\setminus Z}
+\left(\frac{1}{2}+\brk{\frac{1}{2}-t+\frac{1}{k}}\brk{1-\frac{2}{k}}^{(1-t)k}\right)f(O)\right.\\
&\hspace{2em}\left.-\left(\brk{\frac{1}{2}+\frac{1-t}{1-\frac{2}{k}}}\brk{1-\frac{2}{k}}^{(1-t)k}-\brk{\frac{3}{2}-\frac{1}{k}}\brk{1-\frac{2}{k}}^{k-1}\right)\right]\ff{O\cup Z}\tag{$tk-1 < \lfloor tk \rfloor\le tk$}\\
&\ge \frac{1}{2} \left[\left(\brk{1-\frac{2}{k}}^2e^{2(t-1)} - e^{-2}\right)\ff{O\setminus Z}
+\left(\frac{1}{2}+\brk{\frac{1}{2}-t+\frac{1}{k}}\brk{1-\frac{2}{k}}e^{2(t-1)}\right)f(O)\right.\\
&\hspace{2em}\left.-\left(\brk{\frac{1}{2}+\frac{1-t}{1-\frac{2}{k}}}e^{2(t-1)}-\brk{\frac{3}{2}-\frac{1}{k}}e^{-2}\right)\ff{O\cup Z}\right]
\tag{nonnegativity;Lemma~\ref{lemma:val-inq}}\\
&\ge\frac{1}{2}\left[\left(\frac{1}{2}+\left(\frac{3}{2}-t-\frac{1}{k}\right)\brk{1-\frac{2}{k}}e^{2(t-1)}-e^{-2}\right)\ff{O}-\left(\brk{1-\frac{2}{k}}^2e^{2(t-1)} - e^{-2}\right)\ff{O\cap Z}\right.\\
&\left.-\left(\brk{\frac{1}{2}+\frac{1-t}{1-\frac{2}{k}}}e^{2(t-1)}-\brk{\frac{3}{2}-\frac{1}{k}}e^{-2}\right)\ff{O\cup Z}\right], \tag{submodularity}
% &=\theta \ff{O} - \eta\ff{O\cap Z} - \mu \ff{ O \cup Z },
\end{align*}
% where
% \begin{align*}
% &\theta = \frac{1}{2}\left(\frac{1}{2}+\left(\frac{3}{2}-t-\frac{1}{k}\right)e^{2(t-1)}-e^{-2}\right),\\
% &\eta = \frac{1}{2} \left(\brk{1-\frac{2}{k}}e^{2(t-1)} - e^{-2}\right),\\
% &\mu = \frac{1}{2}\left(\brk{\frac{1}{2}+\frac{1-t}{1-\frac{2}{k}}}e^{2(t-1)}-\brk{\frac{3}{2}-\frac{1}{k}}e^{-2}\right).
% \end{align*}
where Inequality $(a)$ and $(b)$ follow from Inequality~\ref{inq:grg3-m-O},
Lemma ~\ref{lemma:grg2-m} and~\ref{lemma:recurrence}.
\end{proof}

\begin{proof}[Proof of Theorem~\ref{thm:irg} under matroid constraint]
% [Proof of Theorem \ref{thm:irg} under matroid constraint]
  Let $(f,\iset)$ be an instance of \sm, with optimal solution set $O$.
If $\ff{Z}\ge (0.305-\epsi)\ff{O}$ under matroid constraint,
the approximation ratio holds immediately.
Otherwise, by Corollary~\ref{cor:fls}, 
\fls returns a set $Z$ which is an $((1+\epsi)\alpha, \alpha)$-guidance set,
where $\alpha = 0.305-\epsi$.

By Lemma~\ref{lemma:grg3-matroid} and $Z$ is an $((1+\epsi)\alpha, \alpha)$-guidance set
with $\alpha = 0.305-\epsi$,
\begin{align*}
&\ex{\ff{A_k}}
% \ge\frac{1}{2}\left[\left(\frac{1}{2}+\left(\frac{3}{2}-t-\frac{1}{k}\right)e^{2(t-1)}-e^{-2}\right)\ff{O}
% -\left(\left(\frac{1}{2}-\frac{1}{k}\right)e^{-2}-\left(\frac{1-t}{1-\frac{2}{k}}-\frac{1}{2}+\frac{2}{k}\right)e^{2(t-1)}\right)\ff{O\cap Z}\right.\\
% &\hspace{2em}\left.-\left(\left(\frac{1}{2}+\frac{1-t}{1-\frac{2}{k}}\right)e^{2(t-1)}-\left(\frac{3}{2}-\frac{1}{k}\right)e^{-2}\right)\left(\ff{O\cup Z} + \ff{O\cap Z}\right)\right]\\
\ge \frac{1}{2}\left[\frac{1}{2}+\left(\frac{3}{2}-t-\epsi\right)(1-2\epsi)e^{2(t-1)}-e^{-2}
-(0.305-0.695\epsi)\left(\left(1-2\epsi\right)^2 e^{2(t-1)}-e^{-2}\right)\right.\\
&\hspace{2em}-\left.(0.305-\epsi)\left(\left(\frac{1}{2}+\frac{1-t}{1-2\epsi}\right)e^{2(t-1)}-\left(\frac{3}{2}-\epsi\right)e^{-2}\right)\right]\ff{O}
% \tag{$\left(\frac{1}{2}+\frac{1-t}{1-\frac{2}{k}}\right)e^{2(t-1)}-\left(\frac{3}{2}-\frac{1}{k}\right)e^{-2} \ge \left(\frac{3}{2}-t\right)e^{2(t-1)}-\frac{3}{2}e^{-2}\ge 0; \frac{1}{2}e^{-2}-\left(\frac{1}{2}-t\right)e^{2(t-1)}\ge 0$}
\tag{$\forall k \ge \frac{1}{\epsi}$}\\
&\ge (0.305-\epsi)\ff{O}. \tag{$t=0.559$}
\end{align*}
\end{proof}

\section{Analysis of Deterministic Approximation Algorithm}\label{apx:determ-full}
In this section, we present the pseudocode of deterministic algorithm and its analysis.
The organization of this section is as follows:
firstly, in Appendix~\ref{apx:ig}, we introduce the deterministic subroutine, 
\gig and \gigm, along with their analysis;
then, in Appendix~\ref{apx:determ}, we provide a randomized version of the deterministic algorithm
for analytical purposes;
finally, in Appendix~\ref{apx:derand}, we provide the deterministic algorithm and its theoretical guarantee.
\subsection{Deterministic Subroutines - \gig and \gigm}\label{apx:ig}
Inspired by \ig algorithm,
a subroutine of \itpg, proposed by~\citet{chen2023approximation},
we introduce guided versions of it for both size and matroid constraints.
The algorithm for the size constraint closely resembles
\ig in~\citet{chen2023approximation}.
Hence, we provide the pseudocode (Alg.~\ref{alg:gig}), guarantees, and analysis in Appendix~\ref{apx:ig-size}.
In this section, we focus on presenting \gigm for matroid constraints as Alg.~\ref{alg:gigm}.
This algorithm addresses the feasibility issue by incorporating \ig into matroid constraints.
Moreover, while it compromises the approximation ratio over size constraint
to some extent,
it no longer has the drawback of low success probability,
which the size-constrained version has.
      \begin{algorithm}[t]
    \caption{A guided \ig subroutine for matroid constraints.} \label{alg:gigm}
    \Proc{\gigm($f, \iset(\mtr), Z, G, \ell$)}{
        \textbf{Input:} oracle $f$, matroid constraint $\mtr$, guidance set $Z$,
        starting set $G$ , set size $\ell$\;
          \textbf{Initialize:} $A, A_1, \ldots, A_\ell\gets \emptyset$\;
          \For{$i \gets 1$ to $k$}{
                      $X_i \gets \{x \in \uni\setminus ( G\cup A\cup Z): A+x \in \iset(\mtr)\}$\;
                      $j_i^*, a_i^* \gets \argmax_{j \in [\ell], x\in X}\marge{x}{G\cup A_j}$\;
                      $A\gets A+a_i^*$, $A_{j_i^*}\gets A_{j_i^*}+a_i^*$\;}
          $\sigma\gets$ a bijection from $G$ to $A$ \st $(G\cup A_j)\setminus\left(\sum_{x\in A_j}\sigma^{-1}(x)\right)$ is a basis\;
        \textbf{return} $\left\{(G\cup A_j)\setminus\left(\sum_{x\in A_j}\sigma^{-1}(x)\right) :1 \le j \le \ell\right\}$}
\end{algorithm}
% \begin{figure} %{l}{0.6\textwidth}
%   \subfigure[Dependency map of the algorithms proposed in this work.]{ \label{fig:depend}
%     \includegraphics[width=0.45\linewidth]{}
%   }
%   \subfigure[The \rg algorithm of \citet{buchbinder2014submodular}.]{ \begin{minipage}{0.55\textwidth}
      
%  \end{minipage}
% }
% \end{figure}

\textbf{Algorithm overview.}
Under size constraints, \itpg~\citep{chen2023approximation} 
constructs the solution with $\ell$ iterations,
where each iteration involves calling the \ig subroutine 
and adding $k/\ell$ elements
into the solution.
% The marginal gain of each added element
% dominates the gain of at least $\ell$ elements in $O$.
% To achieve this, 
% it guesses the maximum singleton in $O$ with respect to the input starting set.
However, this approach is not applicable to  matroid constaint
due to the feasibility problem.
% we eliminate the need for adding elements to the solution sets one by one.
Consequently, we propose \gigm for matroid constraints
designed as follows: 
1) consider adding a basis ($k$ elements) $A$ to
$\ell$ solution sets,
where each addition dominates the gain of a distinct element in $O$;
2) by exchange property, establish a bijection between the basis $A$
and the starting set $G$;
3) delete elements in each solution set that are mapped to by the basis $A$.
% Instead, in each iteration, we add the element to the set with the maximum marginal gain
% among all feasible elements and all solutions.
% At last, we delete the elements of the same size as those added to the solution set to maintain the 
% feasibility with the exchange property of the matroid constraint.
This procedure avoids the extensive guessing
of \gig for size constraints
and reduces the number of potential solutions from $\ell(\ell+1)$ to $\ell$.
% Additionally, instead of making replacements after an element is considered to be added,
% Alg.~\ref{alg:gigm} builds the solution sets by considering only the feasibility
% of the set containing all the added elements.
% The $k$ elements added to the $\ell$ solutions form a basis of the matroid.
% By exchange property, each element added dominates one element in $O$ with respect to every solution set.
We provide the theoretical guarantees and 
the detailed analysis below.
\begin{restatable}{lemma}{thmigm}
\label{thm:ig-m}
  Let $O \in \iset(\mtr)$, and suppose
  $\gigm$(Alg.~\ref{alg:gigm}) is called with $(f, \mtr, Z, G, \ell)$,
  where $Z\cap G = \emptyset$.
  Then $\gigm$ outputs $\ell$ candidate sets
with $\oh{\ell kn}$ queries. 
Moreover, a randomly selected set $G'$ from the output satisfies that:
% \textbf{1)} $\ex{f(G')} \ge \brk{1-\frac{2}{\ell}} f(G)+
% \frac{1}{\ell+1} \brk{1-\frac{1}{\ell}}f((O\setminus Z) \cup G)$;
% \textbf{2)} $\ex{f(O \cup G')} \ge 
% \brk{1-\frac{2}{\ell}}f(O\cup G)+\frac{1}{\ell}\left(f(O)+\ff{O\cup (Z\cap G)}-f(O\cup Z)\right)$;
% \textbf{3)} $\ex{\ff{(O\setminus Z)\cup G'}} \ge  
%     \brk{1-\frac{2}{\ell}}f((O\setminus Z)\cup G)+\frac{1}{\ell}\left(f(O\setminus Z)+\ff{(O\setminus Z)\cup (Z\cap G)}-f(O\cup Z)\right)$.
\begin{align*}
    &\text{1) }\ex{f(G')} \ge \brk{1-\frac{2}{\ell}} f(G)+
    \frac{1}{\ell+1} \brk{1-\frac{1}{\ell}}f((O\setminus Z) \cup G);\\
    &\text{2) } \ex{f(O \cup G')} \ge 
    \brk{1-\frac{2}{\ell}}f(O\cup G)+\frac{1}{\ell}\left(f(O)+\ff{O\cup (Z\cap G)}-f(O\cup Z)\right);\\
    &\text{3) }\ex{\ff{(O\setminus Z)\cup G'}} \ge  
    \brk{1-\frac{2}{\ell}}f((O\setminus Z)\cup G)+\frac{1}{\ell}\left(f(O\setminus Z)+\ff{(O\setminus Z)\cup (Z\cap G)}-f(O\cup Z)\right) .
\end{align*}
\end{restatable}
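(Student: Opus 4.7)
My strategy is to exploit the fact that the only randomness in \gigm is the uniform choice of $J \in [\ell]$, so each claimed lower bound reduces to bounding the arithmetic mean $\tfrac{1}{\ell}\sum_{j=1}^\ell f(S \cup A_j \setminus B_j)$ from below, where $B_j := \sigma^{-1}(A_j)$ and $S \in \{\emptyset, O, O \setminus Z\}$ for the three bounds respectively. I would first dispose of the query complexity: each of the $k$ outer iterations searches over at most $n\ell$ pairs $(x,j)$ with one oracle query per pair (amortizing the maintenance of marginals), giving $\oh{k n \ell}$ queries overall. The existence of the bijection $\sigma$ demanded by the pseudocode follows from Brualdi's theorem (Lemma~\ref{lemma:mtr-bij}) applied to the two bases $A$ (built to remain independent by the feasibility check on $X_i$, together with at most $k$ added dummies) and $G$, coupled with the standard extension that respects a prescribed partition $A = A_1 \sqcup \cdots \sqcup A_\ell$.

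For each of the three inequalities, my plan has three layers. First, I would absorb the removal of $B_j$ by the submodular decomposition
\begin{align*}
f(S \cup A_j \setminus B_j) \;\ge\; f(S \cup A_j) \;+\; f(S \cup G \setminus B_j) \;-\; f(S \cup G),
\end{align*}
derived by telescoping the $B_j$-marginals and using that $S \cup G \setminus B_j \subseteq S \cup A_j \cup G \setminus B_j$. The hypothesis $Z \cap G = \emptyset$ together with the construction $X_i \cap Z = \emptyset$ (so $A_j \cap Z = \emptyset$ and $B_j \cap Z = \emptyset$) is what makes the cross terms $f((O\setminus Z) \cup (Z \cap G))$ and $f(O \cup (Z \cap G))$ appear cleanly when $S = O \setminus Z$ and $S = O$. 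Second, I would bound $\tfrac{1}{\ell}\sum_j f(S \cup G \setminus B_j)$ below by $(1 - 1/\ell)f(S \cup G)$ via the submodular inequality $f(S \cup G \setminus B_j) \ge f(S \cup G) - f(S \cup B_j) + f(S)$ combined with a partition argument $\sum_j f(S \cup B_j) \le f(S \cup G) + (\ell - 1) f(S)$; this is where the $(1 - 2/\ell)$ coefficient on $f(S \cup G)$ in the statements ultimately comes from.

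Third, I would bound $\tfrac{1}{\ell}\sum_j f(S \cup A_j)$ below using the greedy property of the interlace selection: at each iteration $i$, for every $j \in [\ell]$ and every $o \in (O \setminus Z) \setminus (G \cup A_{i-1})$ feasible as an exchange, we have $\marge{a_i^*}{G \cup A_{j_i^*}^{(i-1)}} \ge \marge{o}{G \cup A_j^{(i-1)}}$. Applying Brualdi's theorem to the bases $A$ and $O$ (padded with dummies) supplies an iteration-wise matching between the chosen $a_i^*$ and witness elements from $O \setminus Z$ that respects matroid feasibility; telescoping the matched marginals and applying submodularity yields $\sum_j [f(S \cup A_j) - f(S \cup G)] \ge$ a multiple of $f(S \cup (O \setminus Z) \cup G) - f(S \cup G)$, which after absorbing $f(O \cup Z)$ via $f(S \cup (O \setminus Z) \cup G) \ge f(S \cup O) + f(S \cup (O \setminus Z) \cup (Z \cap G)) - f(O \cup Z)$ matches the claimed right-hand sides.

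The main obstacle I anticipate is the bookkeeping in the last step: simultaneously tracking the submodular expansions for all three $S$ choices while keeping the coefficients of $f(O)$, $f(O \cup Z)$, $f(O \cup (Z \cap G))$, etc. at their claimed values requires careful and consistent use of $A_j \cap Z = \emptyset$ and $B_j \cap Z = \emptyset$. In particular, justifying the matroid feasibility of the witness pairs $(o, j)$ with $o \in O \setminus Z$ at every iteration (needed for the greedy inequality, and for the $1/(\ell+1)$ factor in bound~1 which reflects that there are effectively $\ell + 1$ positions an element could have landed) will likely require a within-iteration exchange argument applied to $A$ and $O$ relative to the current $A^{(i-1)}$, and is the most delicate point of the proof.
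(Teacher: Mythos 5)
Your layer-1 decomposition $f(S\cup G')\ge f(S\cup G\cup A_j)+f(S\cup(G\setminus B_j))-f(S\cup G)$ and your layer-2 partition argument $\sum_j f(S\cup(G\setminus B_j))\ge(\ell-1)f(S\cup G)+f(S)$ are both correct and are what the paper uses (the $\frac{1}{\ell}f(S)$ term from layer~2, which you drop in your stated summary but implicitly derive, is essential: it is exactly where $\frac{1}{\ell}f(O)$ and $\frac{1}{\ell}f(O\setminus Z)$ come from in bounds~2 and~3). The gap is in layer~3. You propose to bound $\sum_j[f(S\cup G\cup A_j)-f(S\cup G)]$ uniformly via the greedy interlace property for all three choices of $S$, culminating in the submodular inequality $f(S\cup(O\setminus Z)\cup G)\ge f(S\cup O)+f(S\cup(O\setminus Z)\cup(Z\cap G))-f(O\cup Z)$. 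This does not work: the greedy comparison $\marge{a_i^*}{G\cup A_{j_i^*}^{(i-1)}}\ge\marge{o}{G\cup A_j^{(i-1)}}$ controls marginals \emph{over $G\cup A_j^{(i-1)}$}, not over $S\cup G\cup A_j^{(i-1)}$, and therefore only yields the $\frac{\ell-1}{\ell+1}f((O\setminus Z)\cup G)$ bound on $\sum_j\marge{A_j}{G}$ needed for bound~1. For $S\in\{O,O\setminus Z\}$ the quantity $f(S\cup(O\setminus Z)\cup G)-f(S\cup G)$ is identically zero, so your ``multiple'' degenerates, and the submodular inequality you write is false as stated (it carries a spurious $f(S\cup O)$ term; for $S=O$ it would read $f(O\cup G)\ge f(O)+f(O\cup(Z\cap G))-f(O\cup Z)$, which nonnegativity and submodularity do \emph{not} give). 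What the paper does instead for bounds~2 and~3 is purely structural, with no greedy ingredient: telescope to $\sum_j\marge{A_j}{S\cup G}\ge f(S\cup G\cup A)-f(S\cup G)$, then use $(G\cup A)\cap Z=G\cap Z$ (the only place where $A\cap Z=\emptyset$ and $G\cap Z=\emptyset$ enter) together with submodularity and nonnegativity against $O\cup Z$ to obtain $f(S\cup G\cup A)\ge f(S\cup(Z\cap G))-f(O\cup Z)$.

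A secondary inaccuracy: your explanation that the $1/(\ell+1)$ in bound~1 ``reflects that there are effectively $\ell+1$ positions an element could have landed'' is the intuition for the size-constrained \textsc{GuidedIG-S} (Alg.~\ref{alg:gig}), where the outer loop over $u\in\{0,\dots,\ell\}$ genuinely creates $\ell+1$ probabilistic branches. In \gigm there is no such branching and all $\ell$ outputs are ``good''; the $\ell+1$ is algebraic, arising from summing $\ff{(O\setminus Z)\cup G\cup A_j}\le f(G)+\marge{A_j}{G}+\sum_l\marge{A_l}{G}$ over $j$, which collects $\ell+1$ copies of $\sum_l\marge{A_l}{G}$ on the right. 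Your instinct about the delicacy of the matroid exchange ordering (that $\{a_1^*,\dots,a_{i-1}^*,o_i\}$ remain independent) is well placed; the paper invokes this ordering with little justification and you would indeed need Brualdi-type reasoning there.
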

\begin{proof}
    $A = \{a_1^*, a_2^*, \ldots, a_k^*\}$
    be the sequence with the order of elements being added.
    Since $A$ and $O\setminus Z$ are basis of $\mtr$ (by adding dummy elements into $O\setminus Z$),
    we can order $O\setminus Z = \{o_1, o_2, \ldots, o_k\}$ 
    \st for any $1\le i\le k$, 
    $\{a_1^*,\ldots, a_{i-1}^*, o_i\}$ is an independent set.
    Thus, $o_i\in X_i$.
    Let $A_j^{(i)}$ be $A_j$ after $i$-th iteration.
    Therefore, for any $1\le j \le \ell$, by submodularity,
    \[\marge{o_i}{G\cup A_j}\le \marge{o_i}{G\cup A_j^{(i-1)}}
    \le\marge{a_i^*}{G\cup A_{j_i^*}^{(i-1)}}\]
    \[\Rightarrow \ff{(O\setminus Z)\cup G\cup A_j} -\ff{G\cup A_j}
    \le \sum_{i=1}^k \marge{o_i}{G\cup A_j}
    \le \sum_{i=1}^k\marge{a_i^*}{G\cup A_{j_i^*}^{(i-1)}}
    = \sum_{l=1}^\ell \marge{A_l}{G}\]
    By summing up the above inequality with $1\le j\le \ell$,
    \begin{align*}
    (\ell+1)\sum_{j=1}^\ell \marge{A_j}{G}
    &\ge \sum_{j=1}^\ell \ff{(O\setminus Z)\cup G\cup A_j}-\ell f(G)\\
    &\ge (\ell-1)\ff{(O\setminus Z)\cup G}+\ff{(O\setminus Z)\cup G \cup A} - \ell f(G)\\
    &\ge (\ell-1)\ff{(O\setminus Z)\cup G} - \ell f(G) \tag{nonnegativity}
    % &\note{\text{or }\ge (\ell-1)\ff{(O\setminus Z)\cup G}+\ff{(O\setminus Z) \cup (Z\cap G)} - \ff{O\cup Z} - \ell f(G)}
    \end{align*}
    % where the last inequality follows from that $A\cap Z = \emptyset$.
    Then, we can prove the first inequality as follows,
    \begin{align*}
        \ex{f(G')-f(G)}&=\frac{1}{\ell}\sum_{j=1}^\ell
        \brk{\ff{G\setminus \sigma^{-1}(A_j)\cup A_j} - f(G)}\\
        &\ge \frac{1}{\ell}\sum_{j=1}^\ell
        \brk{\marge{A_j}{G}+\ff{G\setminus \sigma^{-1}(A_j)}-f(G)}\\
        &\ge \frac{1}{\ell+1}\brk{1-\frac{1}{\ell}}
            \ff{(O\setminus Z)\cup G} -\frac{1}{\ell+1}f(G)- \frac{1}{\ell} f(G)\\
        \Rightarrow \hspace{4em} \ex{f(G')} &\ge \brk{1-\frac{2}{\ell}}f(G) + \frac{1}{\ell+1}\brk{1-\frac{1}{\ell}}
        \ff{(O\setminus Z)\cup G}
    \end{align*}
    By submodularity, nonnegativity, and $Z\cap A = \emptyset$,
    the last two inequalities can be proved as follows,
    \begin{align*}
        \ex{\ff{O\cup G'}} &= \frac{1}{\ell}\sum_{j-1}^\ell
        \ff{O\cup \brk{G\setminus \sigma^{-1}(A_j)\cup A_j}}\\
        &\ge \frac{1}{\ell}\sum_{j-1}^\ell
        \left[\marge{A_j}{O\cup G} + \ff{O\cup \brk{G\setminus \sigma^{-1}(A_j)}}\right]\\
        &\ge \frac{1}{\ell} \brk{f(O\cup G\cup A)- f(O\cup G)
        +(\ell-1)f(O\cup G) +f(O)}\\
        &\ge \brk{1-\frac{2}{\ell}}f(O\cup G)+\frac{1}{\ell}\left(f(O)+\ff{O\cup (Z\cap G)}-f(O\cup Z)\right)
    \end{align*}
    \begin{align*}
        \ex{\ff{(O\setminus Z)\cup G'}} &= \frac{1}{\ell}\sum_{j-1}^\ell
        \ff{(O\setminus Z)\cup \brk{G\setminus \sigma^{-1}(A_j)\cup A_j}}\\
        &\ge \frac{1}{\ell}\sum_{j-1}^\ell
        \left[\marge{A_j}{(O\setminus Z)\cup G} + \ff{(O\setminus Z)\cup \brk{G\setminus \sigma^{-1}(A_j)}}\right]\\
        &\ge \frac{1}{\ell} \brk{f((O\setminus Z)\cup G\cup A)- f((O\setminus Z)\cup G)
        +(\ell-1)f((O\setminus Z)\cup G) +f((O\setminus Z))}\\
        &\ge \brk{1-\frac{2}{\ell}}f((O\setminus Z)\cup G)+\frac{1}{\ell}\left(f(O\setminus Z)+\ff{(O\setminus Z)\cup (Z\cap G)}-f(O\cup Z)\right)
    \end{align*}
\end{proof}

\subsubsection{\gig and its Analysis}\label{apx:ig-size}
In this section, we provide the pseudocode, guarantees and analysis of \gig,
which highly resembles \ig in~\citet{chen2023approximation}.
\begin{algorithm}[h]
    \caption{A $({\approx}\ell)$-approximation that interlaces $\ell$ greedy procedures together and uses only $1/\ell$ fraction of the budget.} \label{alg:gig}
    \Proc{\gig($f, k, Z, G, \ell$)}{
    \textbf{Input:} oracle $f$, constraint $k$, guidance set $Z$,
    starting set $G$ , set size $\ell$\;
      $\{a_1, \ldots , a_\ell\} \gets$ top $\ell$ elements 
      in $\mathcal{U} \setminus (G\cup Z)$ with respect to marginal gains on $G$ \label{line:ig-maxl}\;
      \For{$u \gets 0$ to $\ell$ in parallel\label{line:ig-outerfor}}{
      \eIf{ $u = 0 $}{
        $A_{u,l} \gets G\cup \{a_l\}$, for all $1\le l \le \ell$\;}{
      $A_{u,l} \gets G \cup \{a_u\}$, for any $1\le l \le \ell$\;}
    \For{$j \gets 1$ to $k/\ell-1$\label{line:ig_j}}{
            \For{$i \gets 1$ to $\ell$}{
              $x_{j,i} \gets \argmax_{
              x \in \mathcal{U} \setminus Z\setminus \brk{\cup_{l=1}^\ell A_{u,l}^j }} 
              \marge{x}{A_{u,i} }$\label{line:max}\;
              $A_{u,i} \gets A_{u,i} \cup \{x_{j,i}\}$\;}}}
          % $u^* \gets \argmax_u \sum_{i=1}^\ell A_{u,i}$ }
    \textbf{return} $\{A_{u,i}:1 \le i \le \ell, 0 \le u \le \ell\}$}
\end{algorithm}
\begin{restatable}{lemma}{thmig}
    \label{thm:ig}
    Let $O \subseteq \uni$ be any set of size at most $k$, and suppose
  $\gig$(Alg.~\ref{alg:gig}) is called with $(f, k, Z, G, \ell)$.
  Then $\gig$ outputs $\ell(\ell+1)$ candidate sets
with $\oh{\ell kn}$ queries. 
Moreover, with a probability of $(\ell+1)^{-1}$, a randomly selected set $A$ from the output satisfies that:
\begin{align*}
    &\text{1) }\ex{\ff{A}}\ge \frac{1}{\ell+1}\ex{\ff{(O\setminus Z)\cup A}}+\frac{\ell}{\ell+1} \func{f}{G};\\
    &\text{2) } \ex{f(O \cup A)} \ge
        \left(1-\frac{1}{\ell}\right) f(O \cup G)
    +\frac{1}{\ell}\brk{\ff{O\cup (Z \cap G)}-\ff{O\cup Z}}\\
    &\text{3) }\ex{\ff{(O\setminus Z)\cup A}} \ge 
        \brk{1-\frac{1}{\ell}}\func{f}{(O\setminus Z) \cup G} + 
\frac{1}{\ell}(\ff{(O\setminus Z) \cup (Z \cap G)} - f(O\cup Z)).
\end{align*}
% \begin{align*}
%     &\text{1) }\ex{f(A)} \ge \frac{\ell}{\ell+1} f(G)+
%     \frac{1}{\ell+1} \brk{1-\frac{1}{\ell}}f((O\setminus Z) \cup G);\\
%     &\text{2) } \ex{f(O \cup A)} \ge\left\{ 
%     \begin{aligned}
%         & \left(1-\frac{1}{\ell}\right) f(O \cup G)
%     +\frac{1}{\ell}\brk{\ff{O\cup (Z \cap G)}-\ff{O\cup Z}}\\
%     &\left(1-\frac{1}{\ell}\right) f(O \cup G)
%     \end{aligned}
%     \right.;\\
%     &\text{3) }\ex{\ff{(O\setminus Z)\cup A}} \ge \left\{ 
%     \begin{aligned}
%         &\brk{1-\frac{1}{\ell}}\func{f}{(O\setminus Z) \cup G} + 
% \frac{1}{\ell}(\ff{(O\setminus Z) \cup (Z \cap G)} - f(O\cup Z))\\
%     &\brk{1-\frac{1}{\ell}}\func{f}{(O\setminus Z) \cup G}
%     \end{aligned}
%     \right. .
% \end{align*}
\end{restatable}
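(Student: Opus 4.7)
\textbf{Proof plan for Lemma \ref{thm:ig}.}

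The plan is to handle the query complexity first and then the three expected-value inequalities. For the query complexity, I count straightforwardly: the outer loop over $u$ gives $\ell+1$ parallel runs; each run maintains $\ell$ sets updated over $k/\ell$ iterations, and each iteration selects the max-marginal element in $\oh{n}$ queries. Multiplying gives $\oh{(\ell+1)\cdot \ell \cdot (k/\ell) \cdot n} = \oh{\ell k n}$.

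For the expectation bounds, I would mimic the analysis of the unguided \textsc{InterlaceGreedy} in \citet{chen2023approximation}, adapted to exclude $Z$. The key idea is to define a deterministic index $u^\ast \in \{0,1,\ldots,\ell\}$ depending on $O$: set $u^\ast = u$ if $a_u$ is the smallest-index top element not lying in $O\setminus Z$, and $u^\ast = 0$ if all of $a_1,\ldots,a_\ell$ lie in $O\setminus Z$. A uniformly random set $A$ from the $\ell(\ell+1)$ outputs lies in branch $u^\ast$ with probability $1/(\ell+1)$, which is where the probability factor in the statement comes from. Conditioned on $u = u^\ast$, the random index $i \in \{1,\ldots,\ell\}$ is still uniform, so all three claimed inequalities are really expectations over this uniform $i$.

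For inequality (1), I would use the greedy-choice criterion at each iteration: for every $1\le i \le \ell$, the chosen element beats the marginal gain of any element of $O\setminus Z$ not already used across the parallel interlaced chains. Summing these comparisons over the $\ell$ chains, averaging over $i$, and telescoping to $f((O\setminus Z)\cup A_{u^\ast,i})-f(A_{u^\ast,i})$ yields, via submodularity, the bound $\ex{f(A)} \ge \frac{1}{\ell+1}\ex{f((O\setminus Z)\cup A)} + \frac{\ell}{\ell+1} f(G)$; the $\frac{1}{\ell+1}$ factor is the combinatorial cost of having at most one ``bad'' greedy step per chain, absorbed by the $u=0$ branch (where all $a_l$ are safe) or by the $u=u^\ast$ branch (which forcibly blocks $a_{u^\ast}$). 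For inequalities (2) and (3), I would argue by submodularity that at each iteration, adding the selected element $x$ to the current set $A_{u^\ast,i}$ decreases $f(O\cup\cdot)$ or $f((O\setminus Z)\cup\cdot)$ only at rate $1/\ell$ per step on average, because the element is drawn from $\uni\setminus Z$ and $Z\cap G$ is preserved throughout. The $\ff{O\cup(Z\cap G)} - \ff{O\cup Z}$ correction term captures exactly the submodular slack between ``augmenting $O$ by the guidance already in $G$'' and ``augmenting $O$ by all of $Z$'', which is the penalty incurred by excluding $Z$ in the greedy selection.

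The main obstacle will be the bookkeeping for branch $u^\ast$ together with the exclusion of $Z$: I need to verify that the guarantee of a competitive element in $\uni\setminus Z \setminus (\cup_l A_{u,l})$ at every iteration is not broken by the exclusion, which requires using that $O\setminus Z$ has size at most $k$ and that at most $k$ elements are ever added across the $\ell$ chains in a given branch. Ensuring the additive terms involving $Z\cap G$ propagate correctly through the telescoping of inequalities (2) and (3) is the subtlest step, and hinges on the identity $f(O\cup G) - f(O\cup G\cup S) \le f(O\cup(Z\cap G)) - f(O\cup Z)$ whenever $S\subseteq Z\setminus G$ is added, applied via submodularity to track the degradation from $f(O\cup G)$ down to $f(O\cup A)$.
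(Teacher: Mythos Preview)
Your query-complexity count is fine, but the plan for the three inequalities has two genuine gaps.

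\textbf{The branch index $u^\ast$ is inverted.} You set $u^\ast=0$ when \emph{all} of $a_1,\ldots,a_\ell$ lie in $O\setminus Z$, and otherwise pick the first $a_u\notin O\setminus Z$. The paper does the opposite: it uses $u=0$ precisely when \emph{none} of the $a_l$ belong to $O\setminus(G\cup Z)$, and otherwise uses the branch $u$ with $a_u=o_{\max}:=\argmax_{o\in O\setminus(G\cup Z)}\marge{o}{G}$, which lies in $O$. Your choice fails already in your own $u^\ast=0$ case: if every $a_l\in O\setminus Z$, then in chain $i$ of branch $0$ the first gain is $\marge{a_i}{G}$, but $a_1\in O$ has $\marge{a_1}{G}\ge\marge{a_i}{G}$, so the very first greedy step does not dominate all elements of $O\setminus Z$, and the telescoping bound for inequality~(1) breaks. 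The paper's split works because either no top-$\ell$ element is in $O$ (so every $a_l$ already dominates all of $O\setminus Z$), or $o_{\max}$ is among them and all chains in branch $u$ are seeded with $o_{\max}$, which by definition dominates every element of $O\setminus Z$.

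\textbf{Inequalities (2) and (3) are not proved by iterative degradation.} You sketch a per-step decay argument (``decreases $f(O\cup\cdot)$ only at rate $1/\ell$ per step on average''), together with an identity $f(O\cup G)-f(O\cup G\cup S)\le f(O\cup(Z\cap G))-f(O\cup Z)$ for $S\subseteq Z\setminus G$; this inequality is neither correct in general nor relevant, since the algorithm never adds elements of $Z\setminus G$. The paper instead argues in one shot on the \emph{final} sets: in the good branch the $\ell$ sets $A_{u,1},\ldots,A_{u,\ell}$ are pairwise disjoint outside $G$ (or outside $G\cup\{a_u\}$, with $a_u\in O$), so repeated submodularity gives
\[
\frac{1}{\ell}\sum_{i=1}^\ell f(O\cup A_{u,i})\;\ge\;\Bigl(1-\tfrac{1}{\ell}\Bigr)f(O\cup G)+\tfrac{1}{\ell}\,f\!\Bigl(O\cup\bigcup_{i}A_{u,i}\Bigr),
\]
and then one more application of submodularity, using $(O\cup Z)\cap\bigl(O\cup\bigcup_i A_{u,i}\bigr)=O\cup(Z\cap G)$ (valid because every added element is outside $Z$), together with nonnegativity, yields the $f(O\cup(Z\cap G))-f(O\cup Z)$ term. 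The same reasoning gives~(3). So the correction term is not a penalty accumulated step by step; it is the residual from a single union-intersection inequality applied to the full solution.
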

\begin{proof}
% \textit{(Proof of Lemma~\ref{thm:ig})}
  Let $o_{\text{max}} = \argmax_{o \in O\setminus (G\cup Z)}\marge{o}{G}$,
  and let $\{ a_1, \ldots, a_\ell \}$ be the largest $\ell$ elements
  of $\{ \marge{x}{G} : x \in \uni \setminus (G\cup Z) \}$, as chosen on
  Line \ref{line:ig-maxl}. We consider the following two cases.

\textbf{Case  $(O\setminus (G\cup Z)) \cap \{a_1, \ldots, a_\ell\}=\emptyset$.}
Then, $o_{\text{max}} \not \in \{a_1, \ldots, a_\ell\}$ 
which implies that
$\marge{a_u}{G} \ge \marge{o}{G}$,
for every $1 \le u \le \ell$ and $o \in O\setminus (G\cup Z)$;
and, after the first iteration of the \textbf{for} loop on Line~\ref{line:ig_j} of Alg.~\ref{alg:gig},
none of the elements in $O \setminus (G\cup Z)$ is added into any of
$\{A_{0,i}\}_{i=1}^\ell$. We will analyze the iteration
of the \textbf{for} loop on Line \ref{line:ig-outerfor} with $u = 0$.
%In this case, we analyze the sets $\{A_{0,i}\}_{i=1}^\ell$.

Since none of the elements in $O\setminus (G\cup Z)$ is added into the collection
when $j=0$, we can order $O\setminus (G\cup Z)= \{o_1, o_2,\ldots\}$ such that the first $\ell$ elements
are not selected in any set before we get to $j = 1$,
the next $\ell$ elements are not selected in any set before we get
to $j = 2$,
and so on. Let $i \in \{1, \ldots, \ell \}$.
Let $A_{0,i}^{j}$ be the value of $A_{0,i}$ after $j$ elements are added into it,
and define $A_{0,i} = A_{0,i}^{k/\ell}$, the final value.
Finally, denote by $\delta_j$ the value $\marge{x_{j, i}}{A_{0,i}^{j}}$.
Then, 
\begin{align*}
&\func{f}{(O\setminus Z) \cup A_{0,i}} - \func{f}{A_{0,i}}
\le \sum_{o \in O\setminus(A_{0,i}\cup Z)}\marge{o}{A_{0,i}}\tag{submodularity}\\
&\le \sum_{o \in O\setminus (G\cup Z)}\marge{o}{A_{0,i}}\tag{$G \subseteq A_{0,i}$}\\
&\le \sum_{l = 1}^{\ell} \marge{o_l}{A_{0,i}^{0}}
+ \sum_{l = \ell+1}^{2\ell} \marge{o_l}{A_{0,i}^{1}}+\ldots \tag{submodularity}\\
&\le \ell \sum_{j=1}^{k/\ell}\delta_j
=\ell (\func{f}{A_{0,i}}-\func{f}{G}),%\numberthis \label{inq:ig-rec}
\end{align*}
where the last inequality follows from the ordering of $O$ and the
selection of elements into the sets.
By summing up the above inequality with all $1\le i \le \ell$,
it holds that,
\begin{align*}
&\ex{f(A)} = \frac{1}{\ell} \sum_{i=1}^\ell \ff{A_{0, i}}
 \ge \frac{1}{\ell(\ell+1)}\sum_{i=1}^\ell f((O\setminus Z)\cup A_{0,i}) + \frac{\ell}{\ell+1} f(G)\\
&= \frac{1}{\ell+1}\ex{\ff{(O\setminus Z)\cup A}}+\frac{\ell}{\ell+1} \func{f}{G},%\numberthis\label{inq:ig-recur}
\end{align*}

Since $A_{0,i_1} \cap A_{0,i_2} = G$
for any $1 \le i_1\neq i_2 \le \ell$,
% it holds that 
% $\brk{O \cup A_{u,i_1}} \cap \brk{O \cup A_{u,i_2}} = O\cup G$.
and each $x_{j, i}$ is selected outside of $Z$,
% it holds that $A_{0, i}\cap Z = G\cap Z$ for any $1\le i \le \ell$.
by repeated application of submodularity, it can be shown that
% \begin{equation}\label{inq:ig-deg}
% \begin{array}{rl}
% \sum_{i=1}^\ell f(O\cup A_{0,i})&\ge (\ell-1)\func{f}{O \cup G} + 
% \func{f}{O\cup \brk{\cup_{i=1}^\ell A_{0,i}}}\\
% &\ge (\ell-1)\func{f}{O \cup G}.
% \end{array}
% \end{equation}
\begin{align*}
    &\hspace{-1em}\ex{\ff{(O\setminus Z)\cup A}}= \frac{1}{\ell}
    \sum_{i=1}^\ell f((O\setminus Z)\cup A_{0,i})\\
    &\ge \brk{1-\frac{1}{\ell}}\func{f}{(O\setminus Z) \cup G} + 
\frac{1}{\ell}\func{f}{(O\setminus Z)\cup \brk{\cup_{i=1}^\ell A_{0,i}}}\\
&\ge \brk{1-\frac{1}{\ell}}\func{f}{(O\setminus Z) \cup G} + 
\frac{1}{\ell}(\ff{(O\setminus Z) \cup (Z \cap G)} - f(O\cup Z))%\numberthis \label{inq:ig-deg}
\end{align*}
\begin{align*}
&\hspace{-1em} \ex{\ff{O\cup A}} = \frac{1}{\ell}\sum_{i=1}^\ell f(O\cup A_{0,i}) 
\ge \brk{1-\frac{1}{\ell}}\func{f}{O \cup G} + 
\frac{1}{\ell}\func{f}{O\cup \brk{\cup_{i=1}^\ell A_{0,i}}}\\
&\ge \brk{1-\frac{1}{\ell}}\func{f}{O \cup G}+
\frac{1}{\ell}\brk{\ff{O\cup (Z \cap G)}-\ff{O\cup Z}}. 
\end{align*}
Therefore, if we select a random set from $\{A_{0,i}: 1\le i \le \ell\}$,
% by Inequalities~\ref{inq:ig-deg} and~\ref{inq:ig-recur},
the three inequalities in the Lemma hold
and we have probability $1/(\ell + 1 )$ of this happening.

\textbf{Case  $(O\setminus (G\cup Z)) \cap \{a_1, \ldots, a_\ell\}\neq\emptyset$.}
Then $o_{\text{max}} \in \{a_1, \ldots, a_\ell\}$,
so $a_u = o_{\text{max}}$, for some $u \in 1, \ldots, \ell$.
We analyze the iteration $u$ of the \textbf{for}
loop on Line \ref{line:ig-outerfor}.
Similar to the previous case,
let $i \in \{1, \ldots, \ell \}$, 
define $A_{u,i}^{j}$ be the value of $A_{u,i}$ after we add $j$ elements into it,
and we will use $A_{u,i}$ for $A_{u,i}^{k/\ell}$,
Also, let $\delta_j = \marge{x_{j, i}}{A_{u,i}^{j-1}}$.
Finally, let $x_{1,i} = a_u$ and 
observe $A_{u,i}^{1} = G \cup \{a_u\}$,
for any $i \in \{1, \ldots, \ell \}$.

Then, we can order $O\setminus G= \{o_1, o_2,\ldots\}$
such that: 1) for the first $\ell$ elements $\{o_l\}_{l=1}^\ell$, 
$\marge{o_l}{G} \le \marge{o_{\text{max}}}{G}=\delta_1$;
2) the next $\ell$ elements
$\{o_l\}_{l=\ell+1}^{2\ell}$ are not selected by any set before we get 
to $j = 2$, which implies that $\marge{o_l}{A_{u,i}^{1}} \le \delta_2$,
and so on.
Therefore, analagous to the the previous case,
we have that %$\{A_{u,i}\}_{i=1}^\ell$, such that
\begin{equation}\label{inq:ig-rec-2}
\ex{\ff{A}}\ge \frac{1}{\ell+1}\ex{\ff{(O\setminus Z)\cup A}}+\frac{\ell}{\ell+1} \func{f}{G}
\end{equation}
Since, $A_{u,i_1} \cap A_{u,i_2} = G\cup \{a_u\}$
for any $1 \le i_1\neq i_2 \le \ell$, $a_u \in O \setminus Z$,
and each $x_{j, i}$ is selected outside of $Z$,
% it holds that 
% $\brk{O \cup A_{u,i_1}} \cap \brk{O \cup A_{u,i_2}} = O \cup G$.
by submodularity and nonnegativity of $f$, it holds that
\begin{align*}
    &\hspace{-1em}\ex{\ff{(O\setminus Z)\cup A}}= \frac{1}{\ell}
    \sum_{i=1}^\ell f((O\setminus Z)\cup A_{u,i})\\
    &\ge \brk{1-\frac{1}{\ell}}\func{f}{(O\setminus Z) \cup G} + 
\frac{1}{\ell}\func{f}{(O\setminus Z)\cup \brk{\cup_{i=1}^\ell A_{u,i}}}\\
&\ge \brk{1-\frac{1}{\ell}}\func{f}{(O\setminus Z) \cup G} + 
\frac{1}{\ell}(\ff{(O\setminus Z) \cup (Z \cap G)} - f(O\cup Z)) %\numberthis \label{inq:ig-deg-2}
\end{align*}
\begin{align*}
&\hspace{-1em} \ex{\ff{O\cup A}} = \frac{1}{\ell}\sum_{i=1}^\ell f(O\cup A_{u,i}) 
\ge \brk{1-\frac{1}{\ell}}\func{f}{O \cup G} + 
\frac{1}{\ell}\func{f}{O\cup \brk{\cup_{i=1}^\ell A_{u,i}}}\\
&\ge \brk{1-\frac{1}{\ell}}\func{f}{O \cup G}+
\frac{1}{\ell}\brk{\ff{O\cup (Z \cap G)}-\ff{O\cup Z}}. 
\end{align*}
% where the second inequality follows from Inequality~\ref{inq:ig-deg-2}.
Therefore, if we select a random set from $\{A_{u,i}: 1\le i \le \ell\}$,
% by Inequalities~\ref{inq:ig-deg-2} and~\ref{inq:ig-recur-2},
the three inequalities in the lemma holds, and
this happens with probability $(\ell+1)^{-1}$.
\end{proof}

\subsection{Randomized Version of our Deterministic Algorithm} \label{apx:determ}
In this section, we provide the randomized version (Alg~\ref{alg:determ}) of our deterministic algorithm (Alg.~\ref{alg:determ2}, provided in Appendix~\ref{apx:determ-full}).
The deterministic version simply evaluates all possible paths
and returns the best solution.
In the following, we provide the theoretical guarantee and
its analysis under different constraints.
\begin{restatable}{theorem}{thmdeterm}
\label{thm:determ}
Let $(f,\iset)$ be an instance of \sm, with the optimal solution set $O$.
Algorithm~\ref{alg:determ} achieves 
an expected $(0.385-\epsi)$ approximation ratio 
with $(\ell+1)^{-\ell}$ success probability and input $t=0.372$ under size constraint,
where $\ell = \frac{10}{9\epsi}$.
Moreover, it achieves an expected $(0.305-\epsi)$ approximation ratio 
with $t=0.559$ under matroid constraint.
The query complexity of the algorithm is $\oh{kn/\epsi}$.
\end{restatable}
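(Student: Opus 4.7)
The plan is to mirror the proof of Theorem~\ref{thm:irg}, replacing the $k$-iteration recurrences of \grg{} by the $\ell$-iteration recurrences supplied by the \gig{}/\gigm{} subroutines that drive Alg.~\ref{alg:determ}. First, I would dispatch the easy case: if \fls{} produces $Z$ with $\ff{Z}\ge\alpha\ff{O}$ for the target $\alpha=0.385-\epsi$ (size) or $\alpha=0.305-\epsi$ (matroid), the approximation ratio is immediate and Alg.~\ref{alg:determ} simply returns $Z$. Otherwise Corollary~\ref{cor:fls} certifies $Z$ as a $((1+\epsi)\alpha,\alpha)$-guidance set, which then feeds the interpolated-greedy second stage.

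For the second stage, I would track the three quantities $\ex{\ff{G_j}}$, $\ex{\ff{O\cup G_j}}$, and $\ex{\ff{(O\setminus Z)\cup G_j}}$ across the $\ell$ outer iterations of guided \itpg{}, where $G_j$ is the partial solution after iteration $j$. For $j\le \lfloor t\ell\rfloor$ I would invoke Lemma~\ref{thm:ig} (size) or Lemma~\ref{thm:ig-m} (matroid) with the guidance set $Z$; for $j>\lfloor t\ell\rfloor$ I would invoke the same lemmas with $Z=\emptyset$ to recover the unguided \rg{}-style bounds. Each application yields a linear recurrence of the same algebraic shape as Lemmas~\ref{lemma:grg1}/\ref{lemma:grg2} (size) or \ref{lemma:grg1-m}/\ref{lemma:grg2-m} (matroid), but with $1/\ell$ playing the role that $1/k$ played in \grg{}. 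Telescoping each stage via Lemma~\ref{lemma:recurrence} then produces a closed-form lower bound on $\ex{\ff{G_\ell}}$ that mirrors Lemma~\ref{lemma:grg3-size} or Lemma~\ref{lemma:grg3-matroid} with the substitution $k\mapsto \ell$.

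The claimed ratios then follow from the same plug-in calculation performed at the end of Theorem~\ref{thm:irg}: set $\alpha=0.385-\epsi$ and $t=0.372$ for size, or $\alpha=0.305-\epsi$ and $t=0.559$ for matroid, use $(1-1/\ell)^{\ell}=e^{-1}\pm O(1/\ell)$ from Lemma~\ref{lemma:val-inq}, and verify that the choice $\ell=\frac{10}{9\epsi}$ forces every $O(1/\ell)$ correction term to sit inside the $\epsi$ slack. The success probability $(\ell+1)^{-\ell}$ in the size case comes from the fact that Lemma~\ref{thm:ig} asserts each of the $\ell$ guided \gig{} invocations satisfies its recurrence with probability $1/(\ell+1)$, independently across rounds; the matroid case is deterministic in expectation because Lemma~\ref{thm:ig-m} carries no failure probability. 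For queries, \fls{} contributes $\oh{kn/\epsi}$ (Lemma~\ref{lemma:fls}), and the $\ell$ calls to \gig{}/\gigm{} are absorbed into the same asymptotic bound.

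The chief obstacle is bookkeeping the low-order $O(1/\ell)$ slack that appears both in the base-case inequalities of Lemmas~\ref{thm:ig}/\ref{thm:ig-m} and in the closed-form solution produced by iterating Lemma~\ref{lemma:recurrence} $\ell$ times; the explicit constant $10/9$ in $\ell=10/(9\epsi)$ is chosen precisely to absorb this compounded slack into the $\epsi$ of the target ratio. A secondary subtlety is that Lemma~\ref{thm:ig} has a prefactor $1/(\ell+1)$ in front of $\ex{\ff{(O\setminus Z)\cup A}}$ where the \grg{} analysis had $1/k$; this misalignment must be relaxed (via $\tfrac{1}{\ell+1}\ge \tfrac{1}{\ell}-O(\tfrac{1}{\ell^2})$) to the same $1/\ell$ scale so that the two asymptotic forms match before the plug-in step.
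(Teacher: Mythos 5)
Your proposal is correct and mirrors the paper's own proof: dispatch the easy case $\ff{Z}\ge\alpha\ff{O}$, invoke Corollary~\ref{cor:fls} to certify $Z$ as a $((1+\epsi)\alpha,\alpha)$-guidance set, track $\ex{\ff{A_i}}$, $\ex{\ff{O\cup A_i}}$, and $\ex{\ff{(O\setminus Z)\cup A_i}}$ through the two stages via Lemma~\ref{thm:ig} (resp.\ Lemma~\ref{thm:ig-m}) and Lemma~\ref{lemma:recurrence} --- exactly what the paper packages as Corollaries~\ref{cor:gig} and~\ref{cor:gig-m} --- then plug in $t=0.372$ (resp.\ $t=0.559$), and accumulate the per-round $1/(\ell+1)$ success probability multiplicatively in the size case.

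One small correction worth noting on the query count: each \gig{}/\gigm{} call costs $\oh{\ell k n}$, and with $\ell$ calls the second stage alone is $\oh{\ell^2 k n}=\oh{kn/\epsi^2}$, so it is not literally ``absorbed into'' the $\oh{kn/\epsi}$ contributed by \fls{}. The paper suppresses this extra $1/\epsi$ factor, so your bookkeeping simply inherits the same slack; this does not affect the approximation-ratio argument, which is the content the paper's proof actually establishes.
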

\begin{algorithm}[t]
   \caption{The randomized algorithm suitable for derandomization.}\label{alg:determ}
     % \Proc{}{
          \textbf{Input:} oracle $f$, constraint $\iset$, an approximation result $Z_0$, switch point $t$, error rate $\epsi$\;
          $Z\gets \fls(f, \iset, Z_0, \epsi)$\;
          \textbf{Initialize} $\ell \gets \frac{10}{9\epsi},A_{0} \gets \emptyset$\;
          % \For{$i \gets 1$ to $ \ell$}{
          %         \eIf{$i \le t\ell$}{
          %         \eIf{$\iset$ is a size constraint}{
          %         $A_i \gets$ a random set in $\gig(f,\iset,Z,A_{i-1},\ell)$\;
          %         }{
          %         $A_i \gets$ a random set in $\gigm(f,\iset,Z,A_{i-1},\ell)$\;
          %         }}
          %           {\eIf{$\iset$ is a size constraint}{
          %         $A_i \gets$ a random set in $\gig(f,\iset,\emptyset,A_{i-1},\ell)$\;
          %         }{
          %         $A_i \gets$ a random set in $\gigm(f,\iset,\emptyset,A_{i-1},\ell)$\;
          %         }}}
            \eIf{$\iset$ is a size constraint}{
            \For{$i \gets 1$ to $ \ell$}{
                  \textbf{if} $i \le t\ell$ \textbf{then}
                  $A_i \gets$ a random set in $\gig(f,\iset,Z,A_{i-1},\ell)$\;
                \textbf{else}$A_i \gets$ a random set in $\gig(f,\iset,\emptyset,A_{i-1},\ell)$\;
                  }
            }{
            \For{$i \gets 1$ to $ \ell$}{
                  \textbf{if} $i \le t\ell$ \textbf{then}
                  $A_i \gets$ a random set in $\gigm(f,\iset,Z,A_{i-1},\ell)$\;
                    \textbf{else} $A_i \gets$ a random set in $\gigm(f,\iset,\emptyset,A_{i-1},\ell)$\;
                  }
            }
            
          \textbf{return} $A^*\gets \arg\max\{f(Z), f(A_\ell)\}$\;
          % }
\end{algorithm}
\subsubsection{Size constraints}
By Lemma~\ref{thm:ig} in Appendix~\ref{apx:ig-size}
and the closed form for a recurrence provided in Lemma~\ref{lemma:recurrence}, 
the following corollary holds,
\begin{corollary}\label{cor:gig}
    After iteration $i$ of the for loop in Alg.~\ref{alg:determ},
    the following inequalities hold with a probability of $(\ell+1)^{-i}$
    \begin{align*}
    &\ex{\ff{A_i}}\ge \frac{\ell}{\ell+1} \ex{\ff{A_{i-1}}}+
    \frac{1}{\ell+1}\left(\ff{O\setminus Z}-\brk{1-\brk{1-\frac{1}{\ell}}^i}\ff{O\cup Z}\right), && 1\le i \le t\ell\\
    &\ex{\ff{A_i}}\ge \frac{\ell}{\ell+1} \ex{\ff{A_{i-1}}}+
    \frac{1}{\ell+1}\brk{1-\frac{1}{\ell}}^{i-\lfloor t\ell \rfloor} \brk{f(O)-\brk{1-\brk{1-\frac{1}{\ell}}^{\lfloor t\ell \rfloor}}\ff{O\cup Z}}, && t\ell < i \le \ell
    \end{align*}
    % When $i \le t \ell$, it holds that
    % \begin{align*}
    %     &\ex{\ff{(O\setminus Z)\cup A_i}}\ge \ff{O\setminus Z}-\brk{1-\brk{1-\frac{1}{\ell}}^i}\ff{O\cup Z},\\
    %     & \ex{\ff{A_i}}\ge \frac{\ell}{\ell+1} \ex{\ff{A_{i-1}}}+
    % \frac{1}{\ell+1} \brk{1-\frac{1}{\ell}}\ex{f((O\setminus Z) \cup A_{i-1})}.
    % \end{align*}
    % When $t \ell < i \le \ell$, it holds that
    % \begin{align*}
    %     &\ex{\ff{O\cup A_i}}\ge\brk{1-\frac{1}{\ell}}^{i-t\ell} \brk{f(O)-\brk{1-\brk{1-\frac{1}{\ell}}^{t\ell}}\ff{O\cup Z}},\\
    %     &\ex{\ff{A_i}}\ge \frac{\ell}{\ell+1} \ex{\ff{A_{i-1}}}+
    % \frac{1}{\ell+1} \brk{1-\frac{1}{\ell}}\ex{f(O \cup A_{i-1})}.
    % \end{align*}
\end{corollary}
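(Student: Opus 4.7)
The plan is to chain the single-call guarantees of Lemma~\ref{thm:ig} across the $\ell$ iterations of Alg.~\ref{alg:determ} and convert the resulting one-step recurrences into closed form using Lemma~\ref{lemma:recurrence}. The two inequalities in the corollary correspond, respectively, to the guided stage ($i \le t\ell$) and the unguided stage ($i > t\ell$), so I would treat them in that order.

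For $1 \le i \le t\ell$, the $i$-th call is $\gig(f, \iset, Z, A_{i-1}, \ell)$, which by Lemma~\ref{thm:ig} succeeds (conditionally on the previous stages) with probability $1/(\ell+1)$; the joint success event across $i$ stages then has probability $(\ell+1)^{-i}$, matching the claim. On this event I would iterate part~(3) of Lemma~\ref{thm:ig}. Crucially, every guided call selects elements from $\uni \setminus Z$, so $A_{i-1} \cap Z = \emptyset$ for all $i$ in this stage, which collapses the constant term of part~(3) to $\tfrac{1}{\ell}(\ff{O \setminus Z} - \ff{O \cup Z})$. The resulting recurrence
\begin{equation*}
\ex{\ff{(O \setminus Z) \cup A_i}} \ge \left(1 - \tfrac{1}{\ell}\right)\ex{\ff{(O \setminus Z) \cup A_{i-1}}} + \tfrac{1}{\ell}\bigl(\ff{O \setminus Z} - \ff{O \cup Z}\bigr),
\end{equation*}
combined with $\ex{\ff{(O \setminus Z) \cup A_0}} = \ff{O \setminus Z}$, is the exact setup for Lemma~\ref{lemma:recurrence} and yields $\ex{\ff{(O \setminus Z) \cup A_i}} \ge \ff{O \setminus Z} - (1 - (1-1/\ell)^i)\ff{O \cup Z}$. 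Substituting this into part~(1) of Lemma~\ref{thm:ig} with $G = A_{i-1}$ produces the first inequality of the corollary.

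For $t\ell < i \le \ell$, the call is $\gig(f, \iset, \emptyset, A_{i-1}, \ell)$, so inside Lemma~\ref{thm:ig} the role of the guidance set is played by $\emptyset$, and $O \setminus Z$ there reads as $O$. Part~(2) then reduces to $\ex{\ff{O \cup A_i}} \ge (1-1/\ell)\ex{\ff{O \cup A_{i-1}}}$, which iterates trivially to $(1-1/\ell)^{i - \lfloor t\ell \rfloor}\ex{\ff{O \cup A_{\lfloor t\ell \rfloor}}}$. I would seed this by going back to the first stage and applying part~(2) of Lemma~\ref{thm:ig} there with the actual $Z$: the same disjointness $A_{j-1} \cap Z = \emptyset$ collapses the constant to $\tfrac{1}{\ell}(f(O) - \ff{O \cup Z})$, and Lemma~\ref{lemma:recurrence} with base value $f(O)$ yields $\ex{\ff{O \cup A_{\lfloor t\ell \rfloor}}} \ge f(O) - (1 - (1-1/\ell)^{\lfloor t\ell \rfloor})\ff{O \cup Z}$. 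Plugging this into part~(1) of Lemma~\ref{thm:ig} (with $Z$ there set to $\emptyset$, so the right-hand side involves $\ex{\ff{O \cup A_i}}$) produces the second inequality.

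The main delicacy is the conditioning and chaining of expectations: Lemma~\ref{thm:ig} provides its guarantees for a fixed $G$, while in the corollary $G = A_{i-1}$ is itself random and the $1/(\ell+1)$ success factor is conditional on the current call's internal randomness. I would handle this with the tower property, taking outer expectations over $A_{i-1}$ at each step while conditioning on the success events of all prior stages, so that both the one-step recurrences and the $(\ell+1)^{-i}$ joint probability are justified. Apart from that bookkeeping, the derivation is a direct substitution of the closed forms from Lemma~\ref{lemma:recurrence} into part~(1) of Lemma~\ref{thm:ig}.
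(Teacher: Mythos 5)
Your proposal is correct and takes the same route the paper intends: the paper's one-line justification ("By Lemma~\ref{thm:ig} ... and ... Lemma~\ref{lemma:recurrence}") unpacks to exactly the two-stage chaining you describe -- iterating part~(3) of Lemma~\ref{thm:ig} to get the closed form for $\ex{\ff{(O\setminus Z)\cup A_i}}$ (respectively part~(2) for $\ex{\ff{O\cup A_i}}$, seeded by the stage-1 bound), then substituting into part~(1), with the disjointness $A_{i-1}\cap Z=\emptyset$ collapsing the constant terms. You have correctly noted that in stage 2 the lemma's $Z$-parameter is $\emptyset$ while the corollary's $Z$ is the original guidance set carried through only via the seed, and your observation about conditioning via the tower property and the $(\ell+1)^{-i}$ joint-success factor is a detail the paper leaves implicit.
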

% \begin{proof}
%     When $i \le t \ell$, $A_i \cap Z = \emptyset$.
%     By Theorem~\ref{thm:ig} and unconditioning the input $G$ of \ig,
%     \begin{align*}
%         \ex{\ff{O \cup A_i}}&\ge \left(1-\frac{1}{\ell}\right) \ex{\ff{O \cup A_{i-1}}}
%     +\frac{1}{\ell}\brk{\ff{O}-\ff{O\cup Z}}\\
%     &\ge \ff{O}- \brk{1-\brk{1-\frac{1}{\ell}}^i}\ff{O\cup Z} \numberthis \label{inq:gig-deg}\\
%     \ex{\ff{(O\setminus Z) \cup A_i}}&\ge \left(1-\frac{1}{\ell}\right) \ex{\ff{(O\setminus Z) \cup A_{i-1}}}
%     +\frac{1}{\ell}\brk{\ff{O\setminus Z}-\ff{O\cup Z}}\\
%     &\ge \ff{O\setminus Z}- \brk{1-\brk{1-\frac{1}{\ell}}^i}\ff{O\cup Z}
%     % &\ge \ff{O} -\ff{O\cap Z}- \brk{1-\brk{1-\frac{1}{\ell}}^i}\ff{O\cup Z}
%     \end{align*}
%     When $i > t\ell$, by Theorem~\ref{thm:ig} and unconditioning the input $G$ of \ig,
%     \begin{align*}
%         \ex{\ff{O \cup A_i}}&\ge \brk{1-\frac{1}{\ell}}^{i-t\ell}\ex{\ff{O \cup A_{t\ell}}}\\
%         &\ge \brk{1-\frac{1}{\ell}}^{i-t\ell}\brk{\ff{O}- \brk{1-\brk{1-\frac{1}{\ell}}^{t\ell}}\ff{O\cup Z}} \tag{Inequality~\ref{inq:gig-deg}}
%     \end{align*}
% \end{proof}
\begin{proof}[Proof of approximation ratio]
    If $\ff{Z}\ge (0.385-\epsi)\ff{O}$, the approximation ratio holds immediately. 
    So, we analyze the case $\ff{Z}< (0.385-\epsi)\ff{O}$ in the following.
    
    Recall in Corollary~\ref{cor:fls} that $Z$ is a $\left(1+\epsi)\alpha, \alpha\right)$-guidance set,
    it holds that $\ff{O\cup Z} + \ff{O \cap Z}< (0.77-1.615\epsi)\ff{O}$
    and $\ff{O \cap Z} < (0.385-0.615\epsi)\ff{O}$.

    By repeatedly implementing Lemma~\ref{lemma:recurrence}
    with the recursion in Corollary~\ref{cor:gig}, it holds that
    \begin{align*}
    &\ex{\ff{A_{\lfloor t\ell \rfloor}}}\ge \left(1-\left(1-\frac{1}{\ell+1}\right)^{\lfloor t\ell \rfloor}\right)\left(\ff{O\setminus Z}-\ff{O\cup Z}\right)+\frac{\lfloor t\ell \rfloor}{\ell+1}\left(1-\frac{1}{\ell}\right)^{\lfloor t\ell \rfloor}\ff{O\cup Z}\\
    &\ge \left(1-\left(1-\frac{1}{\ell+1}\right)^{\lfloor t\ell \rfloor}\right)\left(\ff{O}-\ff{O\cap Z}-\ff{O\cup Z}\right)+\frac{\lfloor t\ell \rfloor}{\ell+1}\left(1-\frac{1}{\ell}\right)^{\lfloor t\ell \rfloor}\ff{O\cup Z}\tag{submodularity}\\
    &\ex{\ff{A_{\ell}}}\ge\left(1-\frac{1}{\ell+1}\right)^{\ell-\lfloor t\ell \rfloor}\ex{\ff{A_{\lfloor t\ell \rfloor}}}+ \frac{\ell-\lfloor t\ell \rfloor}{\ell+1}\left(1-\frac{1}{\ell}\right)^{\ell-\lfloor t\ell \rfloor}\left(\ff{O}-\left(1-\left(1-\frac{1}{\ell}\right)^{\lfloor t\ell \rfloor}\right)\ff{O\cup Z}\right)\\
    &\ge \brk{\brk{1-\frac{1}{\ell+1}}^{(1-t) \ell+1}-\brk{1-\frac{1}{\ell+1}}^{\ell}}\brk{\ff{O} -\ff{O\cap Z}-\ff{O\cup Z}}\\
        &+\frac{t\ell}{\ell+1}\brk{1-\frac{1}{\ell+1}}^{(1-t)\ell+1}
        \brk{1-\frac{1}{\ell}}^{t\ell}\ff{O\cup Z}\\
        &+(1-t)\brk{1-\frac{1}{\ell+1}}\brk{1-\frac{1}{\ell}}^{(1-t)\ell+1}
        \brk{f(O)-\brk{1-\brk{1-\frac{1}{\ell}}^{t\ell-1}}\ff{O\cup Z}}\tag{$t\ell-1< \lfloor t\ell\rfloor \le t\ell$}\\
        &=\brk{\brk{1-\frac{1}{\ell+1}}\brk{1-\frac{1}{\ell+1}}^{(1-t) \ell}-\brk{1+\frac{1}{\ell}}\brk{1-\frac{1}{\ell+1}}^{\ell+1}}\brk{\ff{O} -\ff{O\cap Z}-\ff{O\cup Z}}\\
        &+t\cdot \frac{\ell-1}{\ell+1}\brk{1-\frac{1}{\ell+1}}\brk{1-\frac{1}{\ell+1}}^{(1-t)\ell}
        \brk{1-\frac{1}{\ell}}^{t\ell-1}\ff{O\cup Z}\\
        &+(1-t)\brk{1-\frac{1}{\ell+1}}^2
        \left[\brk{1-\frac{1}{\ell}}\brk{1-\frac{1}{\ell}}^{(1-t)\ell-1}f(O)\right.\\
        &\left.\hspace{11em} -\brk{1-\frac{1}{\ell}}^{(1-t)\ell}\ff{O\cup Z}
        +\brk{1-\frac{1}{\ell}}\brk{1-\frac{1}{\ell}}^{\ell-1}\ff{O\cup Z}\right]\\
        &\ge \brk{\brk{1-\frac{1}{\ell}}e^{t-1}-e^{-1}}
        \brk{\ff{O} -\ff{O\cap Z}-\ff{O\cup Z}}
        +t\cdot \brk{1-\frac{1}{\ell}}^3e^{-1}\ff{O\cup Z}\\
        &+(1-t)\brk{1-\frac{1}{\ell}}^2
        \brk{\brk{1-\frac{1}{\ell}}e^{t-1}f(O)-e^{t-1}f(O\cup Z)
        +\brk{1-\frac{1}{\ell}}e^{-1}f(O\cup Z)}\tag{$\ff{O}-\ff{O\cap Z}-\ff{O\cup Z} > 0$; nonnegativity; Lemma~\ref{lemma:val-inq}}\\
        &\ge \left[c (c^2(1-t)+1)e^{t-1}-e^{-1}\right]\ff{O}-\left[c(c(1-t)+1)e^{t-1}-(c^3+1)e^{-1}\right]\left(\ff{O\cup Z}+\ff{O\cap Z}\right)\\
        &\hspace{2em}-\left[c^3e^{-1}-c^2(1-t)e^{t-1}\right]\ff{O\cap Z}\tag{Let $c =1-\frac{9\epsi}{10}= 1-\frac{1}{\ell}$}\\
        &\ge \left[c (c^2(1-t)+1)e^{t-1}-e^{-1}\right]\ff{O}-\left[c(c(1-t)+1)e^{t-1}-(c^3+1)e^{-1}\right](0.77-1.615\epsi)\ff{O}\\
        &\hspace{2em}-\left[c^3e^{-1}-c^2(1-t)e^{t-1}\right](0.385-0.615\epsi)\ff{O}\tag{$\ff{O\cup Z}+\ff{O\cap Z} < (0.77-1.615\epsi)\ff{O}$; $\ff{O\cap Z} < (0.385-0.615\epsi)\ff{O}$}\\
        % &\ge \left[(0.845-0.615t)e^{t-1}+0.155e^{-1}-\frac{9\epsi}{10}\left((0.111t-0.053)e^{t-1}+2.823e^{-1}\right)\right]\ff{O}\tag{$0 < \epsi < 0.385$}\\
        &\ge (0.385-\epsi)\ff{O} \tag{$0 < \epsi < 0.385$; $t = 0.372$}
    \end{align*}
\end{proof}

\subsubsection{Matroid Constraints}
By Lemma~\ref{thm:ig-m} in Appendix~\ref{apx:ig}
and the closed form for a recurrence provided in Lemma~\ref{lemma:recurrence}, 
the following corollary holds,
\begin{corollary}\label{cor:gig-m}
    After iteration $i$ of the for loop in Alg.~\ref{alg:determ},
    the following inequalities hold,
    \begin{align*}
        & \ex{\ff{A_i}}\ge \left(1-\frac{2}{\ell}\right)\ex{\ff{A_{i-1}}}\\
        &\hspace{1em}+\frac{1}{\ell+1}\left(1-\frac{1}{\ell}\right)\left(\ff{O\setminus Z} - \frac{1}{2}\left(1-\left(1-\frac{2}{\ell}\right)^{i-1}\right)\ff{O\cup Z}\right), 1\le i\le t\ell\\
        & \ex{\ff{A_i}}\ge \left(1-\frac{2}{\ell}\right)\ex{\ff{A_{i-1}}}\\
        &\hspace{1em}+\frac{1}{\ell+1}\left(1-\frac{1}{\ell}\right)\left(\frac{1}{2} \brk{1+\brk{1-\frac{2}{\ell}}^{i-t\ell}}f(O)
        -\frac{1}{2}\brk{\brk{1-\frac{2}{\ell}}^{i-t\ell}-\brk{1-\frac{2}{\ell}}^{i}}f(O\cup Z)\right),t\ell< i\le \ell
    \end{align*}
\end{corollary}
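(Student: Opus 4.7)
The proof will derive the corollary by combining the three expectation bounds of Lemma~\ref{thm:ig-m} with the closed-form solution of Lemma~\ref{lemma:recurrence}, treating the guided phase ($1 \le i \le t\ell$) and the unguided phase ($t\ell < i \le \ell$) separately. A preliminary step is to establish the invariant $Z \cap A_i = \emptyset$ throughout the guided phase: since \gigm{} chooses elements only from $\uni \setminus (G \cup A \cup Z)$, if $A_{i-1} \cap Z = \emptyset$ then $A_i \cap Z = \emptyset$; the base case $A_0 = \emptyset$ is immediate.

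For the guided phase, the invariant $Z \cap A_{i-1} = \emptyset$ simplifies the third bound of Lemma~\ref{thm:ig-m} (since $(O\setminus Z)\cup(Z\cap A_{i-1}) = O\setminus Z$) to
$$\ex{f((O\setminus Z)\cup A_i)} \ge \left(1-\tfrac{2}{\ell}\right)\ex{f((O\setminus Z)\cup A_{i-1})} + \tfrac{1}{\ell}\bigl(2f(O\setminus Z) - f(O\cup Z)\bigr).$$
Starting from $f((O\setminus Z)\cup A_0) = f(O\setminus Z)$ and applying Lemma~\ref{lemma:recurrence} with $a = 1 - 2/\ell$ yields the closed form $\ex{f((O\setminus Z)\cup A_i)} \ge f(O\setminus Z) - \tfrac{1}{2}\bigl(1-(1-\tfrac{2}{\ell})^i\bigr)f(O\cup Z)$. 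Substituting this (with index $i{-}1$) into the first bound of Lemma~\ref{thm:ig-m} applied to \gigm$(f,\iset,Z,A_{i-1},\ell)$ gives exactly the first recursion in the corollary.

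For the unguided phase, \gigm{} is invoked with guidance set $\emptyset$, so $O\setminus Z$ is replaced by $O$ in the first bound of Lemma~\ref{thm:ig-m}, giving $\ex{f(A_i)} \ge (1-\tfrac{2}{\ell})\ex{f(A_{i-1})} + \tfrac{1}{\ell+1}(1-\tfrac{1}{\ell})\ex{f(O\cup A_{i-1})}$. To bound $\ex{f(O\cup A_{i-1})}$, I would first run the second bound of Lemma~\ref{thm:ig-m} through the guided phase (again using $Z \cap A_{i-1} = \emptyset$ so that $f(O\cup(Z\cap A_{i-1})) = f(O)$) to obtain the recurrence $\ex{f(O\cup A_i)} \ge (1-\tfrac{2}{\ell})\ex{f(O\cup A_{i-1})} + \tfrac{1}{\ell}(2f(O) - f(O\cup Z))$, which Lemma~\ref{lemma:recurrence} solves to $\ex{f(O\cup A_{t\ell})} \ge f(O) - \tfrac{1}{2}\bigl(1-(1-\tfrac{2}{\ell})^{t\ell}\bigr)f(O\cup Z)$. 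Then in the unguided phase the second bound with $Z = \emptyset$ degenerates to $\ex{f(O\cup A_i)} \ge (1-\tfrac{2}{\ell})\ex{f(O\cup A_{i-1})} + \tfrac{1}{\ell}f(O)$, and Lemma~\ref{lemma:recurrence} applied with the above initial value at $i = t\ell$ gives, after regrouping the $(1-\tfrac{2}{\ell})^{i-t\ell}$ and $(1-\tfrac{2}{\ell})^i$ terms, the expression $\tfrac{1}{2}(1+(1-\tfrac{2}{\ell})^{i-t\ell})f(O) - \tfrac{1}{2}\bigl((1-\tfrac{2}{\ell})^{i-t\ell} - (1-\tfrac{2}{\ell})^i\bigr)f(O\cup Z)$ that appears in the corollary.

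The bulk of the work is routine recurrence solving; the main obstacle is the bookkeeping at the phase transition, namely propagating the residual factor $\tfrac{1}{2}(1-(1-\tfrac{2}{\ell})^{t\ell})f(O\cup Z)$ through the geometric decay of the unguided recurrence and seeing that the cross terms telescope into the clean difference $(1-\tfrac{2}{\ell})^{i-t\ell} - (1-\tfrac{2}{\ell})^i$. Once this algebraic simplification is verified, both recursions in the corollary follow directly.
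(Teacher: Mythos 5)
Your approach is exactly the one the paper gestures at: the corollary is stated in the paper with no proof beyond citing Lemma~\ref{thm:ig-m} and Lemma~\ref{lemma:recurrence}, and your writeup fills in precisely the intended details. The key steps are all present and correct: the invariant $Z \cap A_i = \emptyset$ during the guided phase (which simplifies the $(O\setminus Z)\cup(Z\cap G)$ and $O\cup(Z\cap G)$ terms in the lemma's bounds); plugging inequality (1) with $G = A_{i-1}$ to obtain the outer recursion for $\ex{\ff{A_i}}$; solving inequality (3)'s recursion from $A_0 = \emptyset$ to get the closed form $\ex{\ff{(O\setminus Z)\cup A_{i-1}}} \ge \ff{O\setminus Z} - \tfrac{1}{2}(1-(1-\tfrac{2}{\ell})^{i-1})\ff{O\cup Z}$, which yields the first recursion; and chaining inequality (2)'s solutions through the guided phase and then the unguided phase to get the $\frac{1}{2}(1+\gamma^{i-t\ell})f(O) - \frac{1}{2}(\gamma^{i-t\ell}-\gamma^i)f(O\cup Z)$ expression. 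One bookkeeping imprecision worth flagging: since the outer recursion requires a lower bound on $\ex{\ff{O\cup A_{i-1}}}$ (not $\ex{\ff{O\cup A_i}}$), the exponents you derive for $\ex{\ff{O\cup A_i}}$ should properly read $i-1-\lfloor t\ell\rfloor$ and $i-1$ when substituted; the corollary's stated exponents $i-t\ell$ and $i$ do not quite line up, but this off-by-one already appears in the paper's own statement and is ultimately absorbed into the $\epsi$-slack, so your derivation inherits rather than introduces it.
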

\begin{proof}[Proof of approximation ratio]
If $\ff{Z}\ge (0.305-\epsi)\ff{O}$, the approximation ratio holds immediately. 
    So, we analyze the case $\ff{Z}< (0.305-\epsi)\ff{O}$ in the following.
    
    Recall that $Z$ is a $\left(1+\epsi)\alpha, \alpha\right)$-guidance set in Corollary~\ref{cor:fls},
    it holds that $\ff{O\cup Z} + \ff{O \cap Z}< (0.61-1.695\epsi)\ff{O}$
    and $\ff{O \cap Z} < (0.305-0.695\epsi)\ff{O}$.

    By repeatedly implementing Lemma~\ref{lemma:recurrence}
    with the recursion in Corollary~\ref{cor:gig-m}, it holds that
    \begin{align*}
        &\ex{\ff{A_{\lfloor t\ell \rfloor}}}\ge \frac{\ell-1}{2(\ell+1)}\left[\left(1-\left(1-\frac{2}{\ell}\right)^{\lfloor t\ell \rfloor}\right)\left(\ff{O\setminus Z}-\frac{1}{2}\ff{O\cup Z}\right)+t\left(1-\frac{2}{\ell}\right)^{\lfloor t\ell \rfloor-1} \ff{O\cup Z}\right]\\
        &\hspace{2em}\ge\frac{\ell-1}{2(\ell+1)}\left[\left(1-\left(1-\frac{2}{\ell}\right)^{\lfloor t\ell \rfloor}\right)\left(\ff{O}-\ff{O\cap Z}-\frac{1}{2}\ff{O\cup Z}\right)+t\left(1-\frac{2}{\ell}\right)^{\lfloor t\ell \rfloor-1} \ff{O\cup Z}\right]\tag{submodularity}\\
        &\ex{\ff{A_{\ell}}}\ge
        \brk{1-\frac{2}{\ell}}^{\ell-\lfloor t\ell \rfloor}\ex{\ff{A_{\lfloor t\ell \rfloor}}}
        +\frac{\ell-1}{2(\ell+1)}\left[
        \brk{\frac{1}{2}+\brk{\frac{1}{2}-t+\frac{1}{\ell}}\brk{1-\frac{2}{\ell}}^{\ell-\lfloor t\ell \rfloor-1}}f(O)\right.\\
        &\hspace{2em}\left. -(1-t)\brk{\brk{1-\frac{2}{\ell}}^{\ell-\lfloor t\ell \rfloor-1}-\brk{1-\frac{2}{\ell}}^{\ell-1}}f(O\cup Z)\right]\\
        &\ge\frac{\ell-1}{2(\ell+1)}\left[\left(\frac{1}{2}+\left(\frac{3}{2}-t-\frac{1}{\ell}\right)\left(1-\frac{2}{\ell}\right)^{\ell- t\ell}-\left(1-\frac{2}{\ell}\right)^\ell\right)\ff{O}\right.\\
        &\hspace{2em}-\left(\left(1-\frac{2}{\ell}\right)^{(1-t)\ell}-\left(1-\frac{2}{\ell}\right)^\ell\right)\ff{O\cap Z}\\
        &\hspace{2em}\left. -\left(\left(\frac{1}{2}+\frac{1-t}{1-\frac{2}{\ell}}\right)\left(1-\frac{2}{\ell}\right)^{\ell- t\ell}-\left(\frac{3}{2}-\frac{1}{\ell}\right)\left(1-\frac{2}{\ell}\right)^{\ell-1}\right)\ff{O\cup Z}\right]\tag{$t\ell-1< \lfloor t\ell\rfloor \le t\ell$}\\
        &\ge \frac{\ell-1}{2(\ell+1)}\left[\left(\frac{1}{2}+\left(\frac{3}{2}-t-\frac{1}{\ell}\right)\brk{1-\frac{2}{\ell}}e^{2(t-1)}-e^{-2}\right)\ff{O}-\left(e^{2(t-1)}-\left(1-\frac{2}{\ell}\right)e^{-2}\right)\ff{O\cap Z}\right.\\
        &\hspace{2em}\left. -\left(\left(\frac{1}{2}+\frac{1-t}{1-\frac{2}{\ell}}\right)e^{2(t-1)}-\left(\frac{3}{2}-\frac{1}{\ell}\right)e^{-2}\right)\ff{O\cup Z}\right]\tag{Lemma~\ref{lemma:val-inq}; nonnegativity}\\
        &\ge \frac{\ell-1}{2(\ell+1)}\left[\left(\frac{1}{2}+\left(\frac{3}{2}-t-\frac{1}{\ell}\right)\brk{1-\frac{2}{\ell}}e^{2(t-1)}-e^{-2}\right)\ff{O}-\left(\left(\frac{1}{2}+\frac{1}{\ell}\right)e^{-2}-\left(\frac{1}{2}-t\right)e^{2(t-1)}\right)\ff{O\cap Z} \right.\\
        &\hspace{2em} \left. -\left(\left(\frac{1}{2}+\frac{1-t}{1-\frac{2}{\ell}}\right)e^{2(t-1)}-\left(\frac{3}{2}-\frac{1}{\ell}\right)e^{-2}\right)\left(\ff{O\cup Z}+\ff{O\cap Z}\right)\right]\\
        &\ge \frac{\ell-1}{2(\ell+1)}\left[\frac{1}{2}+\left(\frac{3}{2}-t-\frac{1}{\ell}\right)\brk{1-\frac{2}{\ell}}e^{2(t-1)}-e^{-2}-(0.305-0.695\epsi)\left(\left(\frac{1}{2}+\frac{1}{\ell}\right)e^{-2}-\left(\frac{1}{2}-t\right)e^{2(t-1)}\right)\right.\\
        &\hspace{2em}\left.-(0.61-1.695\epsi)\left(\left(\frac{1}{2}+\frac{1-t}{1-\frac{2}{\ell}}\right)e^{2(t-1)}-\left(\frac{3}{2}-\frac{1}{\ell}\right)e^{-2}\right)\right]\ff{O}\\
        % &\ge \left[\frac{1}{2}\left(\frac{1}{2}+(0.7375-0.695t)e^{2(t-1)}-0.2375e^{-2}\right)-0.9\epsi\left(\frac{1}{2}+(2.04-1.796t)e^{2(t-1)}+0.77e^{-2}\right)\right]\ff{O}\tag{$0 < \epsi < 0.305$}\\
        % &\ge \left[\frac{1}{2}\left(\frac{1}{2}+(0.7375-0.695t)e^{2(t-1)}-0.2375e^{-2}\right)-\epsi\right]\ff{O}\\
        & \ge (0.305-\epsi)\ff{O} \tag{$\ell = \frac{10}{9\epsi}$; $0 < \epsi < 0.305$; $t=0.559$}
    \end{align*}
\end{proof}

\subsection{Derandomize Alg.~\ref{alg:determ} in Section~\ref{sec:determ}}\label{apx:derand}
In this section, we present the deranmized version of Alg.~\ref{alg:determ},
which simply evaluates all possible paths and returns the best solution.
We reiterate the guarantee as follows.
\thmdgig*
\begin{algorithm}[h]
   \caption{Deterministic combinatorial approximation algorithm with the same ratio as Alg.~\ref{alg:irg}}\label{alg:determ2}
     \textbf{Input:} oracle $f$, constraint $\iset$, an approximation result $Z_0$,
     switch point $t$, error rate $\epsi$\;
     $Z\gets \fls(f, \iset, Z_0, \epsi)$\;
     \textbf{Initialize} $\ell \gets \frac{10}{9\epsi},G_{0} \gets \{\emptyset\}$\;
     \eIf{$\iset$ is a size constraint}{
     \For{$i \gets 1$ to $ \ell$}{
     $G_i \gets \emptyset$\;
        \For{$A_{i-1}\in G_{i-1}$}{
        \eIf{$i \le t\ell$}{
          $G_i \gets G_i \cup \gig(f,k,Z,A_{i-1},\ell)$\;}{
          $G_i \gets G_i \cup \gig(f,k,\emptyset,A_{i-1},\ell)$\;}
          }
        }}{
        \For{$i \gets 1$ to $ \ell$}{
     $G_i \gets \emptyset$\;
        \For{$A_{i-1}\in G_{i-1}$}{
        \eIf{$i \le t\ell$}{
          $G_i \gets G_i \cup \gigm(f,k,Z,A_{i-1},\ell)$\;}{
          $G_i \gets G_i \cup \gigm(f,k,\emptyset,A_{i-1},\ell)$\;}
          }
        }
        }
     \textbf{return} $A^*\gets \arg\max\{\ff{Z}, \ff{A_\ell}:A_\ell\in G_\ell\}$\;
\end{algorithm}

% \begin{algorithm}[h]
%    \caption{Deterministic Algorithm with $(0.305-\epsi)$-Approximation Ratio under Matroid Constraint}\label{alg:determ-0.305}
%            \textbf{Input:} oracle $f$, size constraint $k$, an approximation result $Z_0$, error rate $\epsi$\;
%            $Z\gets \fls(f, \iset(\mtr), Z_0, \epsi)$\;
%            Initialize $\ell \gets \frac{10}{9\epsi},t\gets 0.559,G_{0} \gets \{\emptyset\}$\;
%           \For{$i \gets 1$ to $ \ell$}{
%                         $G_i \gets \emptyset$\;
%                        \For{$A_{i-1}\in G_{i-1}$}{
%                                         \eIf{$i \le t\ell$}{
%                                         $G_i \gets G_i \cup \ig(f,k,Z,A_{i-1},\ell)$\;}{
%                                             $G_i \gets G_i \cup \ig(f,k,\emptyset,A_{i-1},\ell)$\;}}}
%            \textbf{return} $A^*\gets \arg\max\{f(Z), f(A_\ell): A_\ell \in G_{\ell}\}$\;
% \end{algorithm}
% \begin{theorem}\label{thm:gig-m-d}
% Let $(f,k)$ be an instance of \sm, with the optimal solution set $O$.
% Alg.~\ref{alg:determ-0.305} achieves a deterministic $(0.305-\epsi)$ approximation ratio 
% with $\oh{nk\brk{k\epsi^{-1}+\ell^\ell}}$ queries,
% where $\ell = \frac{10}{9\epsi}$.
% \end{theorem}

\section{Proofs for Nearly Linear Time Deterministic Algorithm in Section~\ref{sec:0.377}}\label{apx:0.377}
In this section, we provide the pseudocode and additional analysis of our nearly linear time deterministic algorithm introduced in Section~\ref{sec:0.377}.
We organize this section as follows:
in Appendix~\ref{apx:tgig},  we provide a speedup version of \gig
which queries $\oh{n\log(k)}$ times;
in Appendix~\ref{apx:prune}, we analyze the subroutine, \prune;
at last, in Appendix~\ref{apx:dgig}, we provide the pseudocode of
nearly linear time deterministic algorithm and its analysis.
\subsection{\gig Speedup}\label{apx:tgig}
In this section, we provide the algorithm \tgig,
which combines the guiding and descending threshold greedy techniques
with \ig~\citep{chen2023approximation}.
\begin{algorithm}[t]
\caption{A guided \ig subroutine with descending threshold technique for size constraints.}\label{alg:tgig}
\Proc{\tgig($f, k, Z, G, \ell, \epsi$)}{
    \textbf{Input:} oracle $f$, constraint $k$, 
    guidance set $Z$, starting set $G$, set size $\ell$, error $\epsi>0$\;
    $\{a_1, \ldots , a_\ell\} \gets$ top $\ell$ elements
    in $\uni \setminus (G\cup Z)$ with respect to marginal gains on $G$\label{line:tig-maxl}\; 
    \For{$u \gets 0$ to $\ell$ in parallel\label{line:tig-outerfor}}{ 
        \eIf{$u = 0$}{
            $M \gets \marge{a_\ell}{G}$\;
            $A_{u,l} \gets G\cup \{a_l\}$, for any $l \in [\ell]$ \label{line:A_ini}\;}{
                $M \gets \marge{a_u}{G}$\;
                $A_{u,l} \gets G \cup \{a_u\}$, for any $l \in [\ell]$ \label{line:A_ini_2}\;}
        $\tau_l \gets M$, $I_l\gets \textbf{true}$, for any $l \in [\ell]$\;
        \While{$\lor_{l=1}^\ell I_l $\label{line:tig_j}}{
            \For{$i \gets 1$ to $\ell$}{
                \If{$I_i$}{
                    $V\gets \uni\setminus \left(\cup_{l=1}^\ell A_{u,l}\cup Z\right)$\;
                     $A_{u,i}, \tau_i \gets \add(f, V, A_{u,i},\epsi, \tau_i, \frac{\epsi M}{k})$\;
                     \textbf{if} $|A_{u,i}\setminus G|=k/\ell \lor \tau_i < \frac{\epsi M}{k}$ \textbf{then} $I_i \gets \textbf{false}$\;
                    % \If{$|A_{u,i}\setminus G|=k/\ell \lor \tau_i < \frac{\epsi M}{k}$}
                    %   $I_i \gets \textbf{false}$
                      }}}
        $G_m \gets G_m \cup \{A_{u, 1}, A_{u, 2}, \ldots, A_{u, \ell}\}$\;}
    \textbf{return} $S^* \gets \argmax \{f(S_\ell): S_\ell \in G_\ell\}$\;}
\Proc{\add($f, V, A, \epsi, \tau, \tau_{\min}$)}{
    \textbf{Input:} oracle $f$, candidate set $V$, solution set $A$,
    $\epsi$, threshold $\tau$, and its lower bound $\tau_{\min}$\;
    \While{$\tau \ge \tau_{\text{min}}$}{
        \For{$x \in V$}{
            \textbf{if} $\marge{x}{A} \ge \tau$ \textbf{then return}
            $A \gets A \cup \{x\}$, $\tau$\;
            % \If{$\marge{x}{A} \ge \tau$}{
            %     \textbf{return} $A \gets A \cup \{x\}$, $\tau$\;}}
        $\tau \gets (1-\epsi)\tau$\;}}
    \textbf{return} $A$, $\tau$\;}
\end{algorithm}
\begin{lemma}
\label{lemma:tgig}
    Let $O \subseteq \uni$ be any set of size at most $k$, and suppose
  $\tgig$(Alg.~\ref{alg:tgig}) is called with $(f, k, Z, G, \ell)$.
  Then $\tgig$ outputs $\ell(\ell+1)$ candidate sets
with $\oh{\ell^2n\log(k)/\epsi}$ queries. 
Moreover, with a probability of $(\ell+1)^{-1}$, a randomly selected set $A$ from the output satisfies that:
\begin{align*}
    &\text{1) }\left(\frac{\ell}{1-\epsi}+1\right)\ex{\ff{A}}
\ge \ex{\ff{(O\setminus Z) \cup A}}+
\frac{\ell}{1-\epsi}\func{f}{G}
-\epsi f(O);\\
    &\text{2) } \ex{f(O \cup A)} \ge
        \left(1-\frac{1}{\ell}\right) f(O \cup G)
    +\frac{1}{\ell}\brk{\ff{O\cup (Z \cap G)}-\ff{O\cup Z}}\\
    &\text{3) }\ex{\ff{(O\setminus Z)\cup A}} \ge 
        \brk{1-\frac{1}{\ell}}\func{f}{(O\setminus Z) \cup G} + 
\frac{1}{\ell}(\ff{(O\setminus Z) \cup (Z \cap G)} - f(O\cup Z)).
\end{align*}
\end{lemma}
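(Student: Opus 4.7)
The plan is to mirror the two-case argument used for \gig{} (Lemma~\ref{thm:ig}), modifying only those steps where \add{}'s descending threshold replaces the exact arg-max rule. Inequalities (2) and (3) depend only on the structural invariants of the output: every element added lies outside $Z$, elements chosen across distinct $A_{u,i}$'s are disjoint (because the pool $V$ excludes their union), and the sets $A_{u,i}$ share the common prefix $G$ (or $G\cup\{a_u\}$ for $u>0$). These invariants are preserved by \add{}, so the submodularity and nonnegativity chain used for the second and third inequalities of Lemma~\ref{thm:ig} transfers verbatim, yielding (2) and (3) without modification.

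For the query bound, the initial threshold $M$ is the $\ell$-th largest marginal gain on $G$ and the algorithm halts when $\tau<\epsi M/k$. Each of the $\ell$ interlaced sets accepts at most $k/\ell$ elements; between acceptances \add{} sweeps the pool $V$ and decreases $\tau$ by a $(1-\epsi)$ factor, so the threshold traverses $\oh{\log(k)/\epsi}$ levels per set, each level costing $\oh{n}$ queries. The outer loop spawns $\ell+1$ parallel copies, giving a total of $\ohs{\epsi}{\ell^2 n\log(k)}$ oracle calls.

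The main content, and the step I expect to be hardest, is the threshold-robust version of inequality (1). I would focus on the branch corresponding to Case~1 of Lemma~\ref{thm:ig} (\ie $(O\setminus(G\cup Z))\cap\{a_1,\dots,a_\ell\}=\emptyset$), analyzing the iteration with $u=0$; the other branch is handled analogously, as in the proof of Lemma~\ref{thm:ig}. Fix $i\in[\ell]$ and, for each $o\in O\setminus(G\cup Z)$, follow the timeline of \add{} on set $i$. Either (a) the algorithm inspected $o$ at some threshold $\tau$ while the marginal gain of $o$ relative to the then-current $A_{0,i}^{(j)}$ was below $\tau$, in which case $\marge{o}{A_{0,i}^{(j)}}<\tau$ and the next element accepted by \add{} into set $i$ had marginal gain at least $(1-\epsi)\tau$; or (b) the threshold descended below $\epsi M/k$ before $o$ was ever inspected favorably, in which case $\marge{o}{A_{0,i}}\le\epsi M/k\le\epsi f(O)/k$, since $M\le f(O)$.

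Summing over $o\in O\setminus(A_{0,i}\cup Z)$ and combining with submodularity to bound $f((O\setminus Z)\cup A_{0,i})-f(A_{0,i})$ from above by $\sum_o \marge{o}{A_{0,i}}$, I would charge each type-(a) element $o$ to an accepted element of set $i$ by adapting the $\ell$-to-one ordering argument from Lemma~\ref{thm:ig} to the asynchronous threshold timeline: at most $\ell$ optimal elements are charged to each accepted gain, with the $(1-\epsi)$ factor paid once per charge. Since the accepted marginal gains for set $i$ telescope to $f(A_{0,i})-f(G)$ and there are at most $k$ type-(b) elements each contributing at most $\epsi f(O)/k$, this yields
\[f((O\setminus Z)\cup A_{0,i})-f(A_{0,i})\le \frac{\ell}{1-\epsi}\bigl(f(A_{0,i})-f(G)\bigr)+\epsi f(O).\]
Averaging over $i\in[\ell]$ and accounting for the $1/(\ell+1)$ probability of selecting the correct outer branch gives the claimed inequality after rearrangement. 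The hard part will be making the $\ell$-to-one charging argument survive the loss of synchrony between the $\ell$ interlaced sets; I expect to handle this by reordering $O\setminus(G\cup Z)$ along the acceptance timeline of set $i$ itself (rather than along synchronized rounds, as in Lemma~\ref{thm:ig}) and invoking submodularity to replace exact marginal-gain comparisons with threshold bounds.
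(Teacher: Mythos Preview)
Your proposal is essentially correct and tracks the paper's own proof closely: the same two-case split on whether $(O\setminus(G\cup Z))$ meets $\{a_1,\dots,a_\ell\}$, the same observation that inequalities (2) and (3) depend only on the disjointness and $Z$-avoidance invariants (which \add{} preserves), and the same charging of each $o\in O\setminus(G\cup Z)$ either to an accepted gain inflated by $1/(1-\epsi)$ or to the residual $\epsi M/k$ floor.

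One clarification: your anticipated difficulty about ``loss of synchrony'' does not actually arise. Each call to \add{} returns after accepting a single element, so the outer \textbf{while} loop of \tgig{} is still a round-robin that grows every active $A_{u,i}$ by exactly one element per pass. Consequently the paper reuses the batch ordering of Lemma~\ref{thm:ig} verbatim: for a fixed index $l$, order $O\setminus(G\cup Z)$ so that the $j$-th block of $\ell$ elements consists of those not yet selected by any set when $A_{0,l}^{(j)}$ is formed. Because at most $\ell$ elements are consumed across all sets between $A_{0,l}^{(j-1)}$ and $A_{0,l}^{(j)}$, such an ordering exists, and each $o$ in the $j$-th block was present in $V$ throughout every threshold sweep up to (and including) the level $\tau_l^{(j)}/(1-\epsi)$ on set $l$; hence $\marge{o}{A_{0,l}^{(j)}}<\tau_l^{(j)}/(1-\epsi)\le\bigl(f(A_{0,l}^{(j)})-f(A_{0,l}^{(j-1)})\bigr)/(1-\epsi)$. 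Your ``timeline of set $i$'' reordering is exactly this batch ordering in disguise; no new idea is needed beyond the $(1-\epsi)$ slack you already identified.
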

\begin{proof}
    Let $o_{\text{max}} = \argmax_{o \in O\setminus (G\cup Z)}\marge{o}{G}$,
    and let $\{ a_1, \ldots, a_\ell \}$ be the largest $\ell$ elements
    of $\{ \marge{x}{G} : x \in \uni \setminus (G\cup Z) \}$, as chosen on
    Line \ref{line:tig_j}. We consider the following two cases.

    \textbf{Case  $(O\setminus (G\cup Z)) \cap \{a_1, \ldots, a_\ell\}=\emptyset$.}
    Then, $o_{\text{max}} \not \in \{a_1, \ldots, a_\ell\}$ 
    which implies that
    $\marge{a_u}{G} \ge \marge{o}{G}$,
    for every $1 \le u \le \ell$ and $o \in O\setminus (G\cup Z)$;
    and, after the first iteration of the \textbf{while} loop on Line~\ref{line:tig_j},
    none of the elements in $O \setminus (G\cup Z)$ is added into any of
    $\{A_{0,i}\}_{i=1}^\ell$. We will analyze the iteration
    of the \textbf{for} loop on Line \ref{line:tig-outerfor} with $u = 0$.
    %In this case, we analyze the sets $\{A_{0,i}\}_{i=1}^\ell$.

    For any $l \in [\ell]$, 
    let $A_{0,l}^{(j)}$ be $A_{0,l}$ after we add $j$ elements into it,
    $\tau_l^{(j)}$ be $\tau_l$ when we adopt $j$-th elements into $A_{0,l}$,
    and $\tau_l^{(1)} = M$.
    By Line~\ref{line:A_ini}, it holds that $A_{0,l}^{(1)}=G\cup \{a_l\}$.
    % $x_l^{(j)}$ be the $j$-th element added into $A_{0,l}$,
    Since $(O\setminus (G\cup Z)) \cup \{a_1, \ldots, a_\ell\}=\emptyset$,
    and we add elements to each set in turn,
    we can order $O\setminus (G\cup Z) = \{o_1, o_2, \ldots\}$ 
    such that the first $\ell$ elements are not selected by any set 
    before we get $A_{0,l}^{(1)}$,
    the next $\ell$ elements are not selected in any set before we get $A_{0,l}^{(2)}$,
    and so on.
    Therefore, 
    % for $1\le i \le \ell$,
    % it is obvious that $\marge{o_i}{A_{0,l}^{(0)}} \le \marge{a_l}{A_{0,l}^{(0)}}$;
    for any $j \le |A_{0,l} \setminus G|$ 
    and $\ell (j-1)+1 \le i \le \ell j$,
    $o_i$ are filtered out by $A_{0,l}$
    with threshold $\tau_l^{(j)}/(1-\epsi)$, 
    which follows that 
    $\marge{o_i}{A_{0,l}^{(j)}} < \tau_l^{(j)}/(1-\epsi) \le (f(A_{0,l}^{(j)})- f(A_{0,l}^{(j-1)}))/(1-\epsi)$;
    for any $\ell |A_{0,l} \setminus G| < i \le |O \setminus (G\cup Z)|$,
    $o_i$ are filtered out by $A_{0,l}$
    with threshold $\frac{\epsi M}{k}$, 
    which follows that 
    $\marge{o_i}{A_{0,l}} < \epsi M/k$.
    Thus,
    \begin{align*}
    &\func{f}{(O\setminus Z) \cup A_{0,l}} - \func{f}{A_{0,l}}
    \le \sum_{o \in O\setminus A_{0,l}}\marge{o}{A_{0,l}}\tag{submodularity}\\
    &\le \sum_{o \in O\setminus (G\cup Z)}\marge{o}{A_{0,l}}\tag{$G \subseteq A_{0,l}$}\\
    &\le \sum_{i = 1}^{\ell} \marge{o_i}{A_{0,l}^{(1)}}
    + \sum_{i = \ell+1}^{2\ell} \marge{o_i}{A_{0,l}^{(2)}}+\ldots 
    + \sum_{i > \ell |A_{0,l} \setminus G|} \marge{o_i}{A_{0,l}}
    \tag{submodularity}\\
    &\le \ell \cdot \frac{\func{f}{A_{0,l}}-\func{f}{G}}{1-\epsi}
    +\epsi M\\
    &\le \ell \cdot \frac{\func{f}{A_{0,l}}-\func{f}{G}}{1-\epsi}
    +\epsi\ff{O}.
    \end{align*}
    By summing up the above inequality with all $1\le l \le \ell$,
    it holds that 
    \begin{equation}\label{inq:tig-rec}
    \left(\frac{\ell}{1-\epsi}+1\right)\ex{\ff{A}} \ge \ex{\ff{(O\setminus Z) \cup A}}+\frac{\ell}{1-\epsi}\ff{G}-\epsi\ff{O}.
    \end{equation}

    Since $A_{u,l_1} \cap A_{u,l_2} = G$
    for any $1 \le l_1 \neq l_2 \le \ell$
    it holds that 
    $\brk{(O\setminus Z) \cup A_{u,l_1}} \cap \brk{(O\setminus Z) \cup A_{u,l_2}} = (O\setminus Z)\cup G$.
    By repeated application of submodularity, it holds that
    \begin{align*}
    &\hspace{-1em}\ex{\ff{(O\setminus Z)\cup A}}= \frac{1}{\ell}
    \sum_{i=1}^\ell f((O\setminus Z)\cup A_{0,i})\\
    &\ge \brk{1-\frac{1}{\ell}}\func{f}{(O\setminus Z) \cup G} + 
    \frac{1}{\ell}\func{f}{(O\setminus Z)\cup \brk{\cup_{i=1}^\ell A_{0,i}}}\\
    &\ge \brk{1-\frac{1}{\ell}}\func{f}{(O\setminus Z) \cup G} + 
    \frac{1}{\ell}(\ff{(O\setminus Z) \cup (Z \cap G)} - f(O\cup Z))%\numberthis \label{inq:ig-deg}
    \end{align*}
    \begin{align*}
    &\hspace{-1em} \ex{\ff{O\cup A}} = \frac{1}{\ell}\sum_{i=1}^\ell f(O\cup A_{0,i}) 
    \ge \brk{1-\frac{1}{\ell}}\func{f}{O \cup G} + 
    \frac{1}{\ell}\func{f}{O\cup \brk{\cup_{i=1}^\ell A_{0,i}}}\\
    &\ge \brk{1-\frac{1}{\ell}}\func{f}{O \cup G}+
    \frac{1}{\ell}\brk{\ff{O\cup (Z \cap G)}-\ff{O\cup Z}}. 
    \end{align*}
    Therefore, the last two inequalities in the theorem hold.

    \textbf{Case  $(O\setminus (G\cup Z)) \cap \{a_1, \ldots, a_\ell\}\neq\emptyset$.}
    Then $o_{\text{max}} \in \{a_1, \ldots, a_\ell\}$.
    Suppose that $a_u = o_{\text{max}}$.
    We analyze that lemma holds with sets $\{A_{u,l}\}_{l=1}^\ell$.

    Similar to the analysis of the previous case, 
    let $A_{0,l}^{(j)}$ be $A_{0,l}$ after we add $j$ elements into it,
    $\tau_l^{(j)}$ be $\tau_l$ when we adopt $j$-th elements into $A_{0,l}$,
    and $\tau_l^{(1)} = M$.
    By Line~\ref{line:A_ini_2}, it holds that $A_{u,l}^{(1)}=G\cup \{a_u\}$.
    Then, we can order $O\setminus \left(G \cup Z \cup \{a_u\}\right) = \{o_1, o_2, \ldots\}$ 
    such that the first $\ell$ elements are not selected by any set 
    before we get $A_{u,l}^{(1)}$,
    the next $\ell$ elements are not selected in any set before we get $A_{u,l}^{(2)}$,
    and so on.
    Therefore, Inequality~\ref{inq:tig-rec} also holds in this case,
    where $A$ is a random set from
    $\{A_{u,l}\}_{l=1}^\ell$.

    Since, $a_u \in O$, and
    $A_{u,l_1} \cap A_{u,l_2} = G\cup \{a_u\}$
    for any $1 \le l_1\neq l_2 \le \ell$,
    it holds that 
    $\brk{(O\setminus Z) \cup A_{u,i_1}} \cap \brk{(O\setminus Z) \cup A_{u,i_2}} = O \cup G$.
    Following the proof of case $(O\setminus (G\cup Z)) \cap \{a_1,\ldots,a_\ell\}=\emptyset$, the last two inequalities still hold in this case.

    Overall, since either one of the above cases happens,
    the theorem holds.
\end{proof}

% \begin{algorithm}[h]
% \caption{}\label{alg:dgig-3}
% \Proc{($f, k, \epsi$)}{
%     \textbf{Input:} oracle $f$, size constraint $k$, error rate $\epsi$\;
%     Initialize $\epsi'=\frac{\epsi}{2},\ell_1 \gets \frac{5}{3\epsi'},\ell_2 \gets \frac{1}{\epsi'},t\gets 0.3$\;
%     $\triangleright$ \textit{Create guided sets}\;
%     $Z_{0} \gets \emptyset$\;
%     \For{$i \gets 1$ to $ \ell_1$}{
%         $Z_i \gets \emptyset$\;     
%         \For{$A_{i-1}\in Z_{i-1}$}{$Z_i \gets Z_i \cup \prune(\tgig(f,k,\emptyset,A_{i-1},\ell_1))$\;}}
%     $Z \gets \cup_{i=1}^{\ell_1} Z_i$ \;%\cmt{Candidate sets}\;
%     $\triangleright$ \textit{Build solution based on guided sets}\;
%     $G \gets \emptyset$\;
%     \For{$A \in Z$}{$G_0\gets \emptyset $\;
%         \For{$i \gets 1$ to $\ell_2$}{$G_i \gets \emptyset$\;
%             \For{$B_{i-1} \in G_{i-1}$}{\eIf{$i \le t\ell$}{
%                 $G_i \gets G_i \cup \tgig(f,k,A,B_{i-1},\ell_2)$\;}{
%                 $G_i \gets G_i \cup \tgig(f,k,\emptyset,B_{i-1},\ell_2)$\;}}}
%         $G\gets G\cup G_{\ell_2}$\;}
%     \textbf{return} $C^*\gets \arg\max_{C\in Z\cup G}f(C)$}
% \end{algorithm}
\subsection{Pruning Subroutine}\label{apx:prune}
\prune is used in Alg.~\ref{alg:dgig-2} to help construct the $(\alpha, \beta)$-guidance set,
where $\alpha = 0.377$ and $\beta = 0.46$.
In this section, we provide the pseudocode, guarantee and its analysis.
\begin{algorithm}[h]
    \caption{A pruning algorithm which deletes element with negative marginal gain} \label{alg:prune}
    \Proc{\prune($f, \mathcal A$)}{
        \textbf{Input:} oracle $f$, a sequence of subsets $\mathcal A$\;
        Initialize: $\mathcal A' \gets \emptyset$\;
        \For{$A \in \mathcal A $}{
            \For{$x \in A$}{
                \textbf{if} $\marge{x}{A-x} < 0$
                \textbf{then} $A \gets A-x$ \label{line:prune-del}\;}
            $\mathcal A' \gets \mathcal A' \cup \{A\}$}
         \textbf{return} $\mathcal A'$\;}
\end{algorithm}
\begin{lemma}
    \label{lemma:prune}
    Suppose $\prune$(Alg.~\ref{alg:prune}) is called with $(f, \mathcal A)$
    and returns the set $\mathcal A'$.
    For every $A \in \mathcal A$ and its related output set $A' \in \mathcal A'$,
    it holds that,
\begin{align*}
    &\text{1) }\ff{A'}\ge \ff{A};\\
    &\text{2) } \ff{S\cup A'} \ge \ff{S\cup A}, \forall S\subseteq \uni;\\
    &\text{3) }\ff{T} \le \ff{A'}, \forall T \subseteq A'.
\end{align*}
\end{lemma}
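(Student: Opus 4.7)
Let $A = A_0 \supseteq A_1 \supseteq \cdots \supseteq A_m = A'$ denote the sequence of intermediate sets produced by the inner for-loop, where each step deletes an element $x_i$ satisfying $\marge{x_i}{A_{i-1}-x_i} < 0$. My first task is to establish the invariant that after termination, every $x \in A'$ has $\marge{x}{A'-x} \ge 0$. The subtlety is that the algorithm modifies $A$ while iterating over its elements, so the relevant witness is: when $x \in A'$ was processed, the current set $B$ was some superset of $A'$ (since only removals occur afterwards), and the algorithm chose not to delete $x$, hence $\marge{x}{B-x} \ge 0$. Because $A'-x \subseteq B-x$ and $x \notin B-x$, submodularity gives $\marge{x}{A'-x} \ge \marge{x}{B-x} \ge 0$.

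Given the invariant, (1) follows by induction on the deletions: since $\marge{x_i}{A_{i-1}-x_i} < 0$, we have $f(A_i) = f(A_{i-1}) - \marge{x_i}{A_{i-1}-x_i} > f(A_{i-1})$, and chaining the inequalities yields $f(A') \ge f(A)$. For (2), fix $S \subseteq \uni$ and analyze deletion step $i$ by cases on whether $x_i \in S$. If $x_i \in S$, then $S \cup A_{i-1} = S \cup A_i$ and nothing changes. If $x_i \notin S$, then $x_i \notin S \cup A_i$ and $A_i \subseteq S \cup A_i$, so by submodularity
\[
\marge{x_i}{S \cup A_i} \le \marge{x_i}{A_i} = \marge{x_i}{A_{i-1}-x_i} < 0,
\]
which gives $f(S \cup A_i) > f(S \cup A_{i-1})$. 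Either way $f(S \cup A_i) \ge f(S \cup A_{i-1})$, and chaining yields $f(S \cup A') \ge f(S \cup A)$.

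For (3), fix $T \subseteq A'$ and order $A' \setminus T = \{y_1, \ldots, y_r\}$ arbitrarily. For each $j$, the set $T \cup \{y_1, \ldots, y_{j-1}\}$ is a subset of $A'-y_j$ and does not contain $y_j$, so combining submodularity with the invariant gives
\[
\marge{y_j}{T \cup \{y_1, \ldots, y_{j-1}\}} \ge \marge{y_j}{A'-y_j} \ge 0.
\]
Telescoping these marginals recovers $f(A')-f(T) \ge 0$. I do not anticipate a major obstacle; the only nontrivial point is the invariant at the start, where one must account for the fact that an element deemed non-negative when checked is only compared against the set at that moment, not against the smaller final $A'$—but submodularity handles precisely this gap.
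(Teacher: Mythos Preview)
Your proof is correct and follows essentially the same route as the paper's: induct along the deletion sequence for (1) and (2), and for (3) use that each surviving element had nonnegative marginal when it was examined, together with submodularity. If anything, your argument is cleaner---in (2) you separate the case $x_i \in S$, whereas the paper writes $\marge{x_i}{S\cup A_i} \le \marge{x_i}{A_i}$ uniformly (which is false when $x_i \in S$, though the conclusion survives), and in (3) you explicitly derive the invariant $\marge{x}{A'-x}\ge 0$ before telescoping, while the paper works directly with the set $A_{i_y}$ at the time of the check.
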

\begin{proof}
    Let $A_i$ be $A$ after we delete $i$-th element $x_i$,
    $A_0$ be the input set $A$, $A_m$ be the output set $A'$.
    Since any element $x_i$ being deleted follows that $\marge{x_i}{A_i} < 0$,
    it holds that $\ff{A_i}>\ff{A_{i}+x_i} = \ff{A_{i-1}}$.
    Therefore, $\ff{A'} = \ff{A_m}> \ldots > \ff{A_0} = \ff{A}$.
    The first inequality holds.

    For any $x_i \in A\setminus A'$, it holds that $\marge{x_i}{A_i}<0$.
    By submodularity,
    \[\ff{S\cup A}-\ff{S\cup A'} = \sum_{i=1}^m \marge{x_i}{S\cup A_i}
    \le \sum_{i=1}^m \marge{x_i}{A_i} < 0.\] 
    The second inequality holds.

    For any $y \in A'\setminus T$, since it is not deleted, 
    there exists $0\le i_y \le m$ such that $\marge{y}{A_{i_y}}\ge 0$.
    By submodularity,
    \[\ff{A'}-\ff{T}\ge \sum_{y\in A'\setminus T}\marge{y}{A'} 
    \ge \sum_{y\in A'\setminus T}\marge{y}{A_{i_y}}\ge 0.\]
    The third inequality holds.
\end{proof}
\subsection{Proofs for Theorem~\ref{thm:dgig-2} of Alg.~\ref{alg:dgig-2}}\label{apx:dgig}
\begin{algorithm}[t]
\caption{Nearly linear-time deterministic algorithm for size constraint}\label{alg:dgig-2}
% \Proc{($f, k, \epsi$)}{
    \textbf{Input:} oracle $f$, size constraint $k$, error rate $\epsi$\;
    \textbf{Initialize} $\epsi' = \frac{2}{\epsi}, \ell_1 \gets \frac{5}{3\epsi'},\ell_2 \gets \frac{5}{2\epsi'},t\gets 0.3$\;
    $\triangleright$ \textit{Create guided sets}\;
    $Z_{0} \gets \emptyset$\label{line:0.377-find-start}\;
    \For{$i \gets 1$ to $ \ell_1$}{
        $Z_i \gets \emptyset$\;     
        \For{$A_{i-1}\in Z_{i-1}$}{
        $Z_i \gets Z_i \cup \prune(\tgig(f,k,\emptyset,A_{i-1},\ell_1, \epsi'))$\;}}
    $Z \gets \cup_{i=1}^{\ell_1} Z_i$ \label{line:0.377-find-end} \;%\cmt{Candidate sets}\;
    $\triangleright$ \textit{Build solution based on guided sets}\;
    $G \gets \emptyset$\;
    \For{$A \in Z$}{
        $G_0\gets \emptyset $\label{line:0.377-guide-start}\;
        \For{$i \gets 1$ to $\ell_2$}{
            $G_i \gets \emptyset$\;
            \For{$B_{i-1} \in G_{i-1}$}{
            \eIf{$i \le t\ell$}{
                $G_i \gets G_i \cup \tgig(f,k,A,B_{i-1},\ell_2, \epsi')$\;}{
                $G_i \gets G_i \cup \tgig(f,k,\emptyset,B_{i-1},\ell_2,\epsi')$\;}
                }}
        $G\gets G\cup G_{\ell_2}$\label{line:0.377-guide-end}\;}
    \textbf{return} $C^*\gets \arg\max_{C\in Z\cup G}f(C)$\;
    % }
\end{algorithm}
Prior to delving into the proof of Theorem~\ref{thm:dgig-2},
we provide the following corollary first.
It demonstrates the progression of the intermediate solution of \tgig,
after the pruning process, relying on
Lemma~\ref{lemma:prune} and~\ref{lemma:tgig}.
\begin{corollary}\label{cor:tgig-prune}
    Let $O \subseteq \uni$ be any set of size at most $k$.
  Then $\prune(\tgig(f, k, \emptyset, G, \ell))$ outputs $\ell(\ell+1)$ candidate sets
with $\oh{\ell^2n\log(k)/\epsi}$ queries. 
Moreover, with a probability of $(\ell+1)^{-1}$, a randomly selected set $A$ from the output satisfies that:
\begin{align*}
    &\text{1) }\left(\frac{\ell}{1-\epsi}+1\right)\ex{\ff{A}}
\ge \left(1-\frac{1}{\ell}\right) f(O \cup G)+
\frac{\ell}{1-\epsi}\func{f}{G}
-\epsi f(O);\\
    &\text{2) } \ex{f(O \cup A)} \ge
        \left(1-\frac{1}{\ell}\right) f(O \cup G);\\
        &\text{3) }f(O\cap A)\le f(A).
\end{align*}
\end{corollary}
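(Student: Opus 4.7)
The plan is to obtain this corollary as a direct synthesis of Lemma~\ref{lemma:tgig} (applied with empty guidance set) and Lemma~\ref{lemma:prune}. First, I would instantiate Lemma~\ref{lemma:tgig} with $Z = \emptyset$; under this specialization, $O \setminus Z = O$, $Z \cap G = \emptyset$, and $f(O \cup Z) = f(O)$, so the three inequalities of Lemma~\ref{lemma:tgig} collapse into
\begin{align*}
&\text{(i) } \left(\frac{\ell}{1-\epsi}+1\right)\ex{\ff{A'}} \ge \ex{\ff{O \cup A'}} + \frac{\ell}{1-\epsi}\ff{G} - \epsi \ff{O}, \\
&\text{(ii) } \ex{\ff{O \cup A'}} \ge \left(1-\frac{1}{\ell}\right)\ff{O \cup G},
\end{align*}
where $A'$ is a random set drawn from the $\ell(\ell+1)$ output sets of \tgig and where each inequality holds with probability $(\ell+1)^{-1}$.

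Next, I would invoke Lemma~\ref{lemma:prune} to lift these bounds from $A'$ to the pruned set $A$. Its three conclusions give $\ff{A} \ge \ff{A'}$, $\ff{O \cup A} \ge \ff{O \cup A'}$, and $\ff{T} \le \ff{A}$ for all $T \subseteq A$. Substituting the first into the left-hand side of (i) and chaining through (ii) on its right-hand side delivers inequality~1 of the corollary. Combining the second with (ii) directly yields inequality~2. Finally, since $O \cap A \subseteq A$ after pruning, applying the third conclusion with $T = O \cap A$ produces inequality~3.

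For the query complexity, Lemma~\ref{lemma:tgig} already supplies $\oh{\ell^2 n \log(k)/\epsi}$ queries for \tgig, while \prune inspects each element of each of the $\ell(\ell+1)$ candidate sets at most once, incurring only $\oh{\ell^2 k}$ additional queries; since $k \le n$ this is absorbed into the \tgig bound.

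The main obstacle is purely bookkeeping: verifying that the $Z = \emptyset$ specialization of Lemma~\ref{lemma:tgig} cleanly eliminates all $Z$-dependent terms, and confirming that the three conclusions of Lemma~\ref{lemma:prune} are exactly strong enough to transport each of the three target bounds through pruning without loss. No new probabilistic or submodular argument is required beyond chaining these two existing results in the correct order.
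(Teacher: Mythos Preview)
Your proposal is correct and matches the paper's approach exactly: the paper states the corollary as an immediate consequence of Lemma~\ref{lemma:tgig} (specialized to $Z=\emptyset$) and Lemma~\ref{lemma:prune}, without writing out the chaining explicitly. Your breakdown of how the three conclusions of Lemma~\ref{lemma:prune} transport inequalities (i) and (ii) to the pruned set, and how inequality~3 follows from $O\cap A\subseteq A$, is precisely the intended argument.
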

\thmdgigtwo*
\begin{proof}
Following the proof of Theorem~\ref{thm:dgig-2}, 
we consider two cases of the algorithm.

\textbf{Case 1.} For every $A\in Z$, it holds that $f(O\cup A) \ge 0.46f(O)$.
    Then, we prove that $\max_{C\in Z}\ff{C}\ge (0.377-\epsi)\ff{O}$.

    In the following, we prove the theorem by analyzing the random case of the algorithm, where we randomly select a set from the output of \prune(\tgig).
    Suppose that we successfullly select a set where the inequalities in Corollary~\ref{cor:tgig-prune} hold.
    Let $A_i$ and $A_{i-1}$ be random sets in $Z_i$ and $Z_{i-1}$, respectively.
    Let $\brk{1-\frac{1}{\ell_1}}^{i^*-1} > 0.46 \ge \brk{1-\frac{1}{\ell_1}}^{i^*}$.
    Then, by Inequality (2) in Corollary~\ref{cor:tgig-prune},
    when $i < i^*$, $\ex{f(O\cup A_i)} \ge \brk{1-\frac{1}{\ell_1}}^{i}f(O)$;
    when $i \ge i^*$, $f(O\cup A_i)\ge 0.46 f(O)$ by assumption.
    By applying Inequality (1) in Corollary~\ref{cor:tgig-prune},
    \begin{align*}
        &\ex{\ff{A_{i^*}}} \ge \left[\frac{i^*}{\frac{\ell_1}{1-\epsi'}+1}\left(1-\frac{1}{\ell_1}\right)^{i^*}-\epsi'\left(1-\left(1-\frac{1}{\frac{\ell_1}{1-\epsi'}+1}\right)^{i^*}\right)\right]\ff{O}\\
        &\ex{\ff{A_{\ell_1}}} 
        % &\ge \brk{1-\frac{1}{\ell}}\ex{\ff{A_{\ell-1}}} +
        % \frac{1}{\ell+1}\brk{1-\frac{1}{\ell}}\ex{\ff{O\cup A_{\ell-1}}}\\
        \ge \brk{1-\frac{1}{\frac{\ell_1}{1-\epsi'}+1}}^{\ell_1-i^*}\ex{\ff{A_{i^*}}}+ \brk{ 1-\brk{1-\frac{1}{\frac{\ell_1}{1-\epsi'}+1}}^{\ell_1-i^*}}(0.46-\epsi')f(O)\\
        &\ge \left[ \frac{i^*}{\frac{\ell_1}{1-\epsi'}+1}\brk{1-\frac{1}{\ell_1}}^{i^*}\brk{1-\frac{1}{\frac{\ell_1}{1-\epsi'}+1}}^{\ell_1-i^*} +  \brk{ 1-\brk{1-\frac{1}{\frac{\ell_1}{1-\epsi'}+1}}^{\ell_1-i^*}}0.46\right.\\
        &\hspace{2em}\left. -\epsi'\left(1-\left(1-\frac{1}{\frac{\ell_1}{1-\epsi'}+1}\right)^{\ell_1}\right)\right] f(O)\\
        &\ge \left[\frac{\log(0.46)}{\left(\frac{\ell_1}{1-\epsi'}+1\right)\log\left(1-\frac{1}{\ell_1}\right)}\brk{1-\frac{1}{\ell_1}}e^{-1}+ \brk{1-\frac{e^{\epsi'-1}}{0.46\brk{1-\frac{1}{\frac{\ell_1}{1-\epsi'}+1}}}} 0.46-\epsi'\left(1-e^{-1+\epsi'}\right)\right]f(O) 
        \tag{Lemma~\ref{lemma:val-inq}; $\brk{1-\frac{1}{\ell_1}}^{i^*-1} > 0.46\ge \brk{1-\frac{1}{\ell_1}}^{i^*}$}\\
        % &\ge \left(1-\frac{1}{\ell_1}\right)\left[-\log(0.46)\frac{\ell_1-1}{\ell_1+1}e^{-1}+0.46-\frac{e^{-1}}{\brk{1-\frac{1}{\ell_1+1}}^2}\right]\ff{O}\\
        % &\ge \left[0.46-\left(\log(0.46)+1\right)e^{-1}-0.6\epsi'\left(\right)\right]\ff{O}\\
        &\ge (0.377-\epsi)\ff{O}\tag{$\ell_1 = \frac{5}{3\epsi'};\epsi'=\frac{\epsi}{2};0<\epsi<0.377$}
    \end{align*}
    Since we return the best solution in $Z$ and $G$, it holds that $f(C^*) \ge \ex{\ff{A_{\ell_1}}} \ge (0.377-\epsi)f(O)$.

    \textbf{Case 2.} There exists $A\in Z$, such that $f(O \cup A) < 0.46 f(O)$.
    Then, we prove that $\max_{C\in G}\ff{C}\ge (0.377-\epsi)\ff{O}$.

    Suppose that $f(A) < 0.377 f(O)$. 
    Otherwise, $f(C^*) \ge 0.377 f(O)$ immediately.
    By Inequality (3) in Corollary~\ref{lemma:prune},
    it holds that $\ff{O\cap A}\le \ff{A}< 0.377 \ff{O}$.
    Let $B_{\ell_2}$ be a randomly selected set in $G_{\ell_2}$,
    where we calculate $G_{\ell_2}$ with the guidance set $A$ and
    inequalities in Theorem~\ref{thm:ig} hold successfully.
    In the following, we also consider that randomized version of the algorithm,
    where we randomly select a set $B_i$ from all the solution set returned by \tgig.
    Suppose that we successfully select a set where the inequalities in 
    Lemma~\ref{lemma:tgig} hold.
    Then, the recursion of $\ex{\ff{B_i}}$ can be calculated as follows,
    \begin{align*}
    &\ex{\ff{B_i}} \ge \frac{\frac{\ell_2}{1-\epsi'}}{\frac{\ell_2}{1-\epsi'}+1}\ex{\ff{B_{i-1}}}+\frac{1}{\frac{\ell_2}{1-\epsi'}+1}\left(\ff{O\setminus A} - \left(1-\left(1-\frac{1}{\ell_2}\right)^i\right)\ff{O\cup A}-\epsi'\ff{O}\right),\\
    &\hspace{36em}1\le i \le t\ell_2\\
    &\ex{\ff{B_i}} \ge\frac{\frac{\ell_2}{1-\epsi'}}{\frac{\ell_2}{1-\epsi'}+1}\ex{\ff{B_{i-1}}}+\frac{1}{\frac{\ell_2}{1-\epsi'}+1}\left[\left(\left(1-\frac{1}{\ell_2}\right)^{i-\lfloor t\ell_2 \rfloor} -\epsi'\right)\ff{O}\right.\\
    &\hspace{4em} \left. - \left(\left(1-\frac{1}{\ell_2}\right)^{i-\lfloor t\ell_2 \rfloor}-\left(1-\frac{1}{\ell_2}\right)^i\right)\ff{O\cup A}\right] ,\hspace{12em}t\ell_2 < i\le \ell_2
    \end{align*}
    Then, by solving the above recursion, it holds that
    \begin{align*}
        &\ex{\ff{B_{\lfloor t\ell_2 \rfloor}}}\ge \left(1-\left(1-\frac{1}{\ell_2}\right)^{\lfloor t\ell_2 \rfloor}\right)\left(\ff{O\setminus A}-\ff{O\cup A}-\epsi' \ff{O}\right)+\frac{\lfloor t\ell_2 \rfloor}{\frac{\ell_2}{1-\epsi'}+1}\left(1-\frac{1}{\ell_2}\right)^{\lfloor t\ell_2 \rfloor}\ff{O\cup A}\\
        &\ex{\ff{B_{\ell_2}}}\ge \left(1-\frac{1}{\ell_2}\right)^{\ell_2\lfloor t\ell_2 \rfloor}\ex{\ff{B_{\lfloor t\ell_2 \rfloor}}}+\frac{\ell_2\lfloor t\ell_2 \rfloor}{\frac{\ell_2}{1-\epsi'}+1}\left(1-\frac{1}{\ell_2}\right)^{\ell_2\lfloor t\ell_2 \rfloor}\left[\ff{O}-\left(1-\left(1-\frac{1}{\ell_2}\right)^{\lfloor t\ell_2 \rfloor}\right)\ff{O\cup A}\right]\\
        &\hspace{6em}-\frac{\epsi'\ell_2}{\frac{\ell_2}{1-\epsi'}+1}\left(1-\left(1-\frac{1}{\ell_2}\right)^{\ell_2\lfloor t\ell_2 \rfloor}\right)\ff{O}\\
        &\ge \left(\left(1-\frac{1}{\ell_2}\right)^{(1-t)\ell_2+1}-\left(1-\frac{1}{\ell_2}\right)^{\ell_2}\right)\left(\ff{O}-\ff{O\cap A}-\ff{O\cup A}-\epsi'\ff{O}\right)\\
        &\hspace{2em}+\frac{(1-t)\ell_2}{\frac{\ell_2}{1-\epsi'}+1}\left(1-\frac{1}{\ell_2}\right)^{(1-t)\ell_2+1}\ff{O}
        -\frac{\ell_2}{\frac{\ell_2}{1-\epsi'}+1}\left((1-t)\left(1-\frac{1}{\ell_2}\right)^{(1-t)\ell_2}-\left(1-\frac{1}{\ell_2}\right)^{\ell_2}\right)\ff{O\cup A}\\
        &\hspace{2em}-\frac{\epsi'\ell_2}{\frac{\ell_2}{1-\epsi'}+1}\left(1-\left(1-\frac{1}{\ell_2}\right)^{(1-t)\ell_2+1}\right)\ff{O}\tag{$t\ell_2-1< \lfloor t\ell_2\rfloor \le t\ell_2$}\\
        &\ge \left(\left(1-\frac{1}{\ell_2}\right)^2e^{t-1}-e^{-1}\right)\left(\ff{O}-\ff{O\cap A}-\ff{O\cup A}-\epsi'\ff{O}\right)\\
        &\hspace{2em}+\frac{(1-t)(\ell_2-1)}{\frac{\ell_2}{1-\epsi'}+1}\left(1-\frac{1}{\ell_2}\right)e^{t-1}\ff{O}-\frac{\ell_2}{\frac{\ell_2}{1-\epsi'}+1}\left((1-t)e^{t-1}-\left(1-\frac{1}{\ell_2}\right)e^{-1}\right)\ff{O\cup A}\\
        &\hspace{2em}-\frac{\epsi'\ell_2}{\frac{\ell_2}{1-\epsi'}+1}\left(1-\left(1-\frac{1}{\ell_2}\right)^2e^{t-1}\right)\ff{O}
        \tag{$\ff{O\cap A}+\ff{O\cup A} \ge 0.837\ff{O}$; Lemma~\ref{lemma:val-inq}}\\
        % &= \left[\left(\frac{(1-t)(\ell_2-1)}{\frac{\ell_2}{1-\epsi'}+1}+1-\frac{1}{\ell_2}\right)e^{t-1}-e^{-1}-\epsi'\left(\left(1-\frac{\ell_2}{\frac{\ell_2}{1-\epsi'}+1}\right)\left(1-\frac{1}{\ell_2}\right)e^{t-1}-e^{-1}\right)\right]\ff{O}\\
        % &-\left(\left(1-\frac{1}{\ell_2}\right)e^{t-1}-e^{-1}\right)\ff{O\cap A}-\left(\left(2-t-\frac{1}{\ell_2}\right) e^{t-1}-\left(2-\frac{1}{\ell_2}\right) e^{-1}\right)\ff{O\cup A}\\
        &\ge \left[\left(\frac{(1-t)(\ell_2-1)}{\frac{\ell_2}{1-\epsi'}+1}+1-\frac{1}{\ell_2}\right)\left(1-\frac{1}{\ell_2}\right)e^{t-1}-e^{-1}-\epsi'\left(1-e^{-1}\right)\right]\ff{O}\\
        &-\left(e^{t-1}-e^{-1}\right)\ff{O\cap A}-\left((2-t) e^{t-1}-\left(2-\frac{1}{\ell_2}\right)e^{-1}\right)\ff{O\cup A}\\
        &\ge \left[\left(\frac{(1-t)(\ell_2-1)}{\frac{\ell_2}{1-\epsi'}+1}+1-\frac{1}{\ell_2}\right)\left(1-\frac{1}{\ell_2}\right)e^{t-1}-e^{-1}-\epsi'\left(1-e^{-1}\right)\right.\\
        &\left.-0.377\left(e^{t-1}-e^{-1}\right) -0.46\left((2-t) e^{t-1}-\left(2-\frac{1}{\ell_2}\right)e^{-1}\right)\right] \ff{O} \tag{$\ff{O\cap A} < 0.377\ff{O}$; $\ff{O\cup A} < 0.46\ff{O}$}\\
        &\ge (0.377-\epsi)\ff{O}\tag{$t=0.3;\ell_2 = \frac{5}{2\epsi'};\epsi'=\frac{\epsi}{2}$}
    \end{align*}
\end{proof}

\section{Experiments} \label{exp-appendix}

\paragraph{Experimental setup} We run all experiments on an Intel Xeon(R) W5-2445 CPU at 3.10 GHz with $20$ cores, $64$ GB of memory, and one NVIDIA RTX A4000 with $16$ GB of memory. For Maximum Cut experiments, we use the standard multiprocessing provided in Python, which takes about $20$ minutes to complete, while the video summarization finishes in under a minute.

\subsection{Additional tables and plots}
In this section, you can find the tables and plots omitted in the main paper due to space constraints. In Figure \ref{fig:frames}, we compare the frames selected by \fls+\rg and \sg, and in Figure \ref{max_cut_additional_results}, we report the results for Barab{'a}si-Albert  and Watts-Strogatz models for Maximum Cut.

\subsection{Problem Formulation}
In this section, we formally introduce video summarization and Maximum Cut.

\subsubsection{Video summarization}
Formally, given $n$ frames from a video, we present each frame by a $p$-dimensional
vector. Let $X \in \mathcal{R}^{n \times n}$  be the Gramian matrix of the $ n$ resulting vectors so $X_{ij}$ quantifies the similarity between two vectors through their inner product. The Determinantal Point Process (DPP) objective function is defined by the determinant function $f : 2^{n} \rightarrow \mathcal{R}: f(S) = \text{log}(\text{det} ( X_S)+1)$, where $X_S$ is the principal submatrix of $X $ indexed by $ S$ following  \citet{banihashem2023dynamic} to make the objective function $f$ a non-monotone non-negative submodular function. 

% \cite{kulesza2012determinantal} proved that  $\text{log}(\text{det} ( X_S)) $ is a non-monotone function. 

% We observe that the identity matrix was added to the objective to ensure the function $f$ remains monotone. Additionally, it has been demonstrated that this function is weakly submodular with the parameter $\text{min}_{1\leq k \leq n} \frac{k (\lambda_n -1)}{(\prod_{j=1}^{k})\lambda_j) -1}$ where $\lambda_1 \geq \lambda_2 \geq ... \geq \lambda_n > 1 $ are eigenvalues of $I+X$ \citep{bian2017guarantees}.

\begin{figure}[h]

 
  \subfigure[\fls+\grg] { %\label{fig:depend}
    \includegraphics[width=0.50\textwidth]{Experiments/VideoSummarization/flsvideo.png}
  }
  \subfigure[\textsc{StandardGreedy}] { %\label{fig:depend}
    \includegraphics[width=0.50\textwidth]{Experiments/VideoSummarization/greedyvideo.png}
  }
  \vspace*{-1em}
  \caption{Frames selected for Video Summarization}
  \vspace*{-1em}
  \label{fig:frames}
\end{figure}

% \begin{figure}[h] %{l}{0.6\textwidth}
%   \subfigure[] { 
%     \includegraphics[width=0.50\textwidth]{Experiments/VideoSummarization/ObjectiveVideoSummarization.pdf}
%     \label{fig:ObjectiveVideo}
%   }
%   \subfigure[] { %\label{fig:depend}
%     \includegraphics[width=0.50\textwidth]{Experiments/VideoSummarization/QueriesVideoSummarization.pdf}
%   }
%   \caption{The objective value and the number of queries are normalized respectively by the objective value and the total number of queries of the \textsc{StandardGreedy}.}
% \end{figure}

\subsubsection{Maximum cut}
Given an undirected graph $G(V, E)$, where $V$ represents the set of vertices, $E$ denotes the set of edges and weights $w(u,v)$ on the edges $(u,v)\in E$, the goal of the Maximum Cut problem is to find a subset of nodes $S \subseteq V$ that maximizes the objective function, $f(S)= \sum_{u \in S, v \in V\setminus S} w(u, v)$. 
\begin{figure}[h] %{l}{0.6\textwidth}
  % \subfigure[BA, solution value] { %\label{fig:depend}
  %   \includegraphics[width=0.23\textwidth]{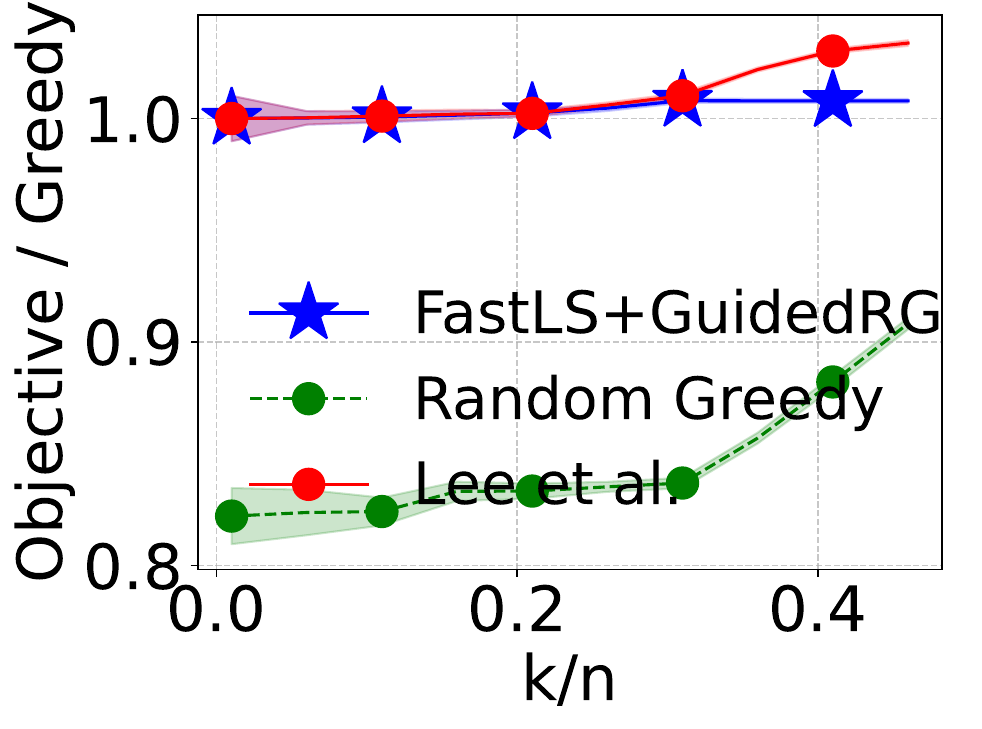}
  % }
  % \subfigure[BA, queries] { %\label{fig:depend}
  %   \includegraphics[width=0.23\textwidth]{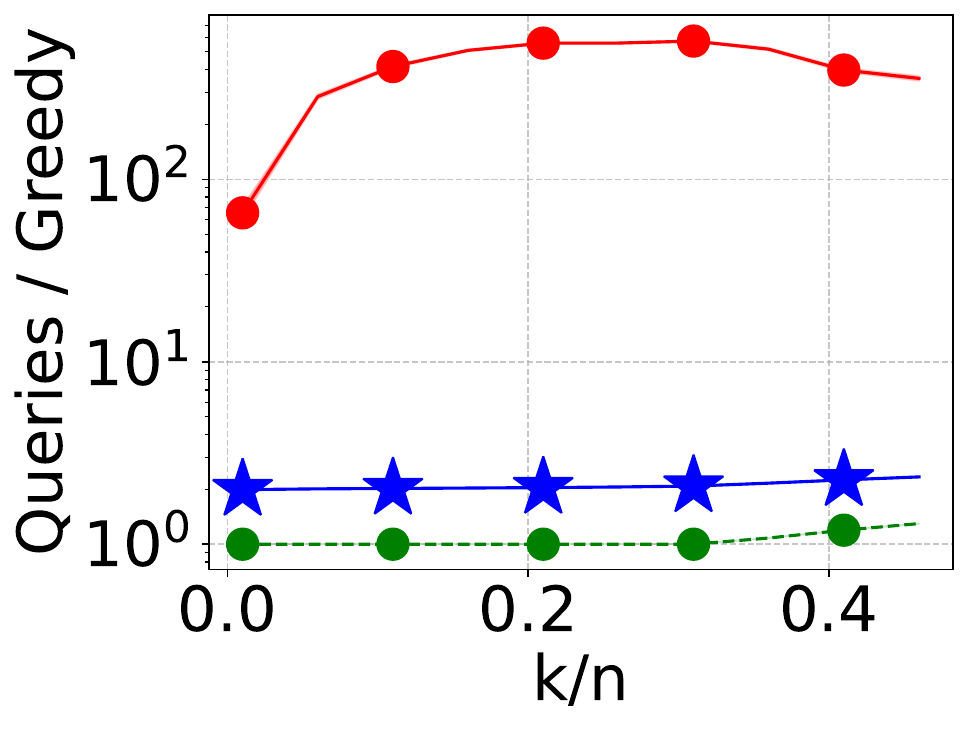}
  % \subfigure[] { 
  %   \includegraphics[width=0.50\textwidth]{Experiments/Maximum_Cut/ObjectiveMCBA.pdf}
  % }
  % \subfigure[] { %\label{fig:depend}
  %   \includegraphics[width=0.50\textwidth]{Experiments/Maximum_Cut/QueriesMCBA.pdf
  % }
   \subfigure[BA, solution value] { 
    \includegraphics[width=0.50\textwidth]{Experiments/Maximum_Cut/ObjectiveMCBA.pdf}
  }
  \subfigure[BA, queries] { %\label{fig:depend}
    \includegraphics[width=0.50\textwidth]{Experiments/Maximum_Cut/QueriesMCBA.pdf}
  }
  \subfigure[Watts Strogatz, solution value] { 
    \includegraphics[width=0.50\textwidth]{Experiments/Maximum_Cut/ObjectiveMCWatts_Strogatz.pdf}
  }
  \subfigure[Watts Strogatz, queries] { %\label{fig:depend}
    \includegraphics[width=0.50\textwidth]{Experiments/Maximum_Cut/QueriesMCWatts_Strogatz.pdf}
  }
  \caption{The objective value (higher is better) and the number of queries (lower is better) are normalized by those of \textsc{StandardGreedy}. Our algorithm (blue star) outperforms every baseline on at least one of these two metrics. }
  \label{max_cut_additional_results}
\end{figure}

% \subsection{Maximum-a-posteriori inference problem on the Ising Model}

% In this subsection, we present the experimental results of the Maximum-a-posteriori Inference Problem on the Ising Model experiments. Since MAP is not a submodular function, we can immediately notice that our algorithm has started to perform suboptimally compared to the local search algorithm of \cite{DBLP:conf/stoc/LeeMNS09}.

% \begin{figure}[h] %{l}{0.6\textwidth}
%   \subfigure[ER] { %\label{fig:depend}
%     \includegraphics[width=0.31\textwidth]{}
%   }
%   \subfigure[BA] { %\label{fig:depend}
%     \includegraphics[width=0.31\textwidth]{}
%   }
%   \subfigure[WS] { %\label{fig:depend}
%     \includegraphics[width=0.31\textwidth]{}
%   }
%   \subfigure[ER] { %\label{fig:depend}
%     \includegraphics[width=0.31\textwidth]{}
%   }
%   \subfigure[BA] { %\label{fig:depend}
%     \includegraphics[width=0.31\textwidth]{}
%   }
%   \subfigure[WS] { %\label{fig:depend}
%     \includegraphics[width=0.31\textwidth]{}
%   }
%   \caption{The objective value and the number of queries are normalized respectively by the objective value and the total number of queries of the \textsc{StandardGreedy}.}
% \end{figure}

\subsection{Hyperparameters}

For all experiments, we set the error rate, $\epsilon$, to $0.01$ for \fls+\grg and to $0.1$ for \citet{DBLP:conf/stoc/LeeMNS09}. Additionally, for video summarization, we run \rg $20$ times and report the standard deviation of these runs. For all other experiments, we run the algorithms once per instance and report the standard deviation over instances.
\subsection{Datasets}
The video we select \citep{CookingShowWithRoy} (available under CC BY license) for video summarization lasts for roughly $4$ minutes, and we uniformly sample $100$ frames from the video to form the ground set. For maximum cut, we run experiments on synthetic random graphs, each distribution consisting of $20$ graphs of size $10,000$ vertices generated using the Erd{\H{o}}s-Renyi (ER), Barabasi-Albert (BA), and Watts-Strogatz (WS) models. The ER graphs are generated with $p = 0.001$, while the WS graphs are created with $p = 0.001$ and $10$ edges per node. For the BA model, graphs are generated by adding $m = 2$ edges in each iteration. Data and code are provided in the supplementary material to regenerate the empirical results provided in the paper.

\subsection{Implementation of $\fls$+$\grg$}
For our implementation of \fls,
we take the solution of \sg\ as our initial solution $Z_0$; the theoretical guarantee is thus $f(Z_0) > \opt / k$, since $Z_0$
has higher $f$-value than the maximum singleton. This
increases the theoretical query complexity of our algorithm as implemented to $\oh{\frac{kn}{\epsi}\log(\frac{k}{\epsi })}$. 
Then, for each swap, we find the best candidates to remove from and add to the current solution set (including the dummy element), rather
than any pair that satifies the criterion. For guided \grg, we implement exactly as in the pseudocode (Alg. \ref{alg:grg}) -- we remark that we could instead use (a guided version of)
the linear-time variant of \rg{} \citep{buchbinder2017comparing} to reduce the empirical number of queries further, but for simplicity we did not do this in our evaluation. 

%%% Local Variables:
%%% mode: latex
%%% TeX-master: "main.tex"
%%% End:

\end{document}